\documentclass[12pt,reqno,letterpaper]{amsart}

\usepackage{amsmath,amssymb,amsthm,mathtools}
\usepackage[margin=1.05in]{geometry}
\usepackage{microtype}
\usepackage[colorlinks=true,linkcolor=blue,citecolor=blue,urlcolor=blue]{hyperref}

\allowdisplaybreaks

\newtheorem{theorem}{Theorem}[section]
\newtheorem{proposition}[theorem]{Proposition}
\newtheorem{lemma}[theorem]{Lemma}

\theoremstyle{definition}
\newtheorem{definition}[theorem]{Definition}
\theoremstyle{remark}
\newtheorem{remark}[theorem]{Remark}
\numberwithin{equation}{section}

\newcommand{\R}{\mathbb R}
\newcommand{\dd}{\,\mathrm d}
\newcommand{\ee}{\mathrm e}
\newcommand{\be}{\beta}
\newcommand{\al}{\alpha}
\newcommand{\cE}{\mathcal E}
\newcommand{\cD}{\mathcal D}
\newcommand{\Var}{\operatorname{Var}}
\newcommand{\TV}{\mathrm{TV}}

\title[Spectral gaps for Coulomb gases]
{Spectral gaps for mean-field Coulomb gases}

\author{Simon Becker}
\address{Bocconi University, Milan, Italy}
\email{simon.becker@unibocconi.it}

\author{Angeliki Menegaki} 
\address{Department of Mathematics, Huxley building, South Kensington campus, Imperial College London, London SW7 2AZ, United Kingdom}
\email{a.menegaki@imperial.ac.uk}

\begin{document}

\begin{abstract}
We prove a Poincar\'e inequality, uniform in the particle number, for repulsive (one-component) Coulomb gases in dimensions \(d\ge3\), with inverse temperature \(\be_N\) that is allowed to depend on \(N\). We assume a weak-coupling condition expressed in terms of the effective interaction strength \(\Theta_N\) seen by a single particle. This yields exponential relaxation for the associated overdamped Langevin dynamics.  
The main ingredient is a one-site Poincar\'e inequality
uniform in the number and positions of the Coulomb poles, including coincident
poles. A moving-pole estimate and Dobrushin-Wu tensorization yield the
many-particle lower bound, while the center-of-mass coordinate gives the matching upper bound.
%We prove a Poincar\'e inequality, uniform in the particle number, for repulsive (one-component) Coulomb gases in dimensions \(d\ge3\), with inverse temperature \(\be_N\) allowed to depend on \(N\).  This yields exponential relaxation for the associated overdamped Langevin dynamics. 
%After the temperature-adapted rescaling, let \(\Theta_N\) denote the total Coulomb charge in a one-particle conditional.  Our argument requires only \(\limsup_{N\to\infty}\Theta_N<\varepsilon_d\), a condition that includes every fixed temperature.  At reversible speed \(\al_N\), the spectral gap satisfies
%\[
% c_{d,\chi,(\be_N)}\frac{\al_N}{N} \le \lambda_N^{d}(\be_N) \le 2\frac{\al_N}{N}.
%\]
%The proof rests on a one-site Poincar\'e inequality for a Gaussian measure
%tilted by an arbitrary finite Coulomb point potential of sufficiently small
%total charge.  Its constant is independent of the number and positions of the
%poles, including collisions.  The estimate follows from a ground-state transform to a flat-space Witten Laplacian.  A moving-pole total-variation bound and Dobrushin--Wu tensorization give the many-particle lower bound; the center-of-mass Gaussian gives the matching upper bound.
\end{abstract}

\subjclass[2020]{60K35, 82B21, 35P15}
\keywords{Coulomb gas, Coulomb potential, spectral gap, Poincar\'e inequality,
ground-state transform, Dobrushin uniqueness}

\maketitle

\section{Introduction and main result}

\subsection{The model} 
We consider the classical \(N\)-particle overdamped Langevin dynamics with
quadratic confinement and repulsive Coulomb interaction. For
\(i=1,\dots,N\), it is formally given by
\begin{equation}\label{eq:langevin-intro}
 \dd X_i(t)
 =
 -\left(
 2X_i(t)
 +\frac{\chi}{N}\sum_{j\ne i}
   \nabla K(X_i(t)-X_j(t))
 \right)\dd t
 +\sqrt{\frac{2}{\beta}}\,\dd W_i(t),
\end{equation}
where \(W_1,\dots,W_N\) are independent \(d\)-dimensional Brownian
motions, \(\beta>0\) is the physical inverse temperature,
\(\chi\ge0\) is the interaction strength, and
\[
 K(x)=
 \begin{cases}
  -\log|x|, & d=2,\\
  |x|^{2-d}, & d\ge3.
 \end{cases}
\]
Here \eqref{eq:langevin-intro} corresponds to the confining potential
\(V(x)=|x|^2\). Since the $-\log\vert \bullet \vert$ potential in $d=2$ is algebraically slightly different, we decided to omit this case from our manuscript for a seamless treatment and only comment on the few differences that establish a gap for this potential in Appendix \ref{app:planar-log}. In particular, for $d=2$, interesting related results have been obtained for the (symmetric/unlabelled) gap in the $d=2$ case by Chafaii \cite{Chafai2026} and Suzuki \cite{Suzuki2026} that cover the entire temperature range. 

In addition, the study of spectral gaps for Riesz-potentials or log-potentials in general dimensions seems interesting, but we haven't pursued this here, since it leads to additional complications that one might have to treat on a case-by-case basis. 

However, the general set of ideas seems to provide, as the authors learned from ChatGPT, a similar gap for $-\log|\bullet|$ gases in all space dimensions $d\ge 3,$ by replacing the compactness argument involving the Witten-Laplacian with Hardy-Muckenhoupt estimates. But we didn't see an immediate way to generalize this nicely to a broader family of potentials. 

%The model we are concerned with in this paper is the generator of the classical $N$-particle overdamped Langevin dynamics for Coulomb interactions. In particular for $i=1, \dots,N $, $N\geq 2$, the dynamics reads 
%\begin{align}
 %   \frac{\dd x_i}{\dd t} =- \nabla V(x_i) - \frac{\chi}{N} \sum_{i\neq j=1}^N \nabla K(x_i-x_j) + \sqrt{\frac{2}{\beta}} \frac{\dd W_i}{\dd t} 
%\end{align}
%where $(W_i)_i$ are $N$ independent d-dimensional Brownian motions, $\beta$ is the inverse temperature of the system, $V:\mathbb{R}^d\to \mathbb{R}$ is a confining potential, a Coulomb interaction strength $\chi\geq 0$, and 
%The Hamilton energy is
%$$H(\textbf{x})  = \sum_{i=1}^N \frac{v_i^2}{2}+ N^{-1} \sum_{i < j} K(x_i-x_j), $$
%the interacting kernel $K:\mathbb{R}^d \to \mathbb{R}$ in our setting is the Coulomb kernel: 
%\begin{align}
 %  K(x) =  \begin{cases}
  %     - \log |x|, &\qquad d=2 \\
   %    |x|^{2-d}, &\qquad d\geq 3. 
    %\end{cases}
%\end{align}

Thus, from now on in this article fix \( d\ge3\), and let \((\be_N)_{N\ge2}\) be a sequence of positive parameters. For the \(N\)-particle system, we take the physical inverse
temperature in \eqref{eq:langevin-intro} to be
\( \beta_N^{\mathrm{ph}}:=\frac{\be_N}{N}.
\)

The equilibrium measure of this dynamics is 
\begin{equation}\label{eq:original-law}
 P_{N,\be_N}^{d}(dx)
 =\frac1{Z_{N,\be_N}^{d}}
 \exp\!\left[-\frac{\be_N}{N}\sum_{i=1}^N|x_i|^2
 -\frac{\be_N\chi}{N^2}\sum_{i<j}|x_i-x_j|^{2-d}\right]\dd x.
\end{equation}
where $Z_{N,\be_N}^{d}$ is the partition function or the normalisation constant so that $P_{N,\be_N}^{d}$ is a probability measure.  This is called the \emph{weak repulsive Coulomb gas} on \((\R^d)^N\). 

At reversible speed \(\al_N>0\), its labelled Dirichlet form is
\[
 \cE_N(f,f)
 =
 \frac{\al_N}{\be_N}
 \int_{(\R^d)^N}|\nabla f(x)|^2
 \dd P_{N,\be_N}^{d}(x).
\]
The choice \(\al_N=N\) corresponds to the time scaling in
\eqref{eq:langevin-intro}, with physical inverse temperature
\(\be_N/N\).

We denote by \(\lambda_N^{d}(\be_N)\) the spectral gap of the closed form,
equivalently
\begin{equation*}
 \lambda_N^{d}(\be_N)
 =\inf_{f\not\equiv\mathrm{const}}
 \frac{\cE_N(f,f)}{\Var_{P_{N,\be_N}^{d}}(f)}.
\end{equation*}
We define the closed form by taking the closure of \(C_c^\infty((\R^d)^N)\) in the form norm.
The Coulomb interaction is singular at
collisions. Nevertheless, through Lemma \ref{lem:fixed-N-gap} we show that every
function in the form domain can be approximated in the form norm by smooth
functions supported away from the collision set, for every fixed particle number.

%The form is obtained by closing smooth functions away from the collision set;
%Lemma~\ref{lem:fixed-N-gap} verifies the required closure and irreducibility
%for every fixed particle number. 

Despite a lot of progress, a general quantitative understanding of Coulomb-Riesz dynamics and of their relaxation to equilibrium remains incomplete, as also highlighted in particular in \cite[page 8]{SerfatyLectures}. A natural quantitative question is whether the relaxation rate can be controlled uniformly as the number of particles \(N\) tends to infinity. In the present paper, we address the particular question on quantifying the spectral gap and showing that it is uniform in \(N\), for weakly coupled Coulomb gases with quadratic confinement in dimensions \(d\ge3\).

Throughout the paper, the spectral gap is the labelled one: the variational
problem is taken over all functions, without imposing permutation
invariance.
The lower bound in the following Theorem~\ref{thm:main} therefore also holds after restriction to symmetric functions (since the symmetric functions form a smaller class). For the upper bound, we utilise the center-of-mass test function to get it, which is itself symmetric. Thus, the same two-sided estimates hold for the unlabelled spectral gap, defined
by restricting the variational problem to permutation-invariant functions.

Our main result is the following.

%Throughout the paper the gap is the labelled one: no quotient by permutations is taken. 
%This distinction matters only for
%notation, because the Dirichlet form differentiates with respect to the full
%Euclidean gradient on \((\R^d)^N\), while the repulsive density assigns zero
%weighted capacity to the collision strata in the sense made precise below.

\begin{theorem}\label{thm:main}
For every \(d\ge3\) there exists \(\varepsilon_d>0\) with the following
property.  Let \(\chi\ge0\), let \((\be_N)_{N\ge2}\) be any sequence of
positive inverse temperatures, and set
\begin{equation}\label{eq:thetaN-definition}
 \Theta_N:=\frac{N-1}{N}\,\chi
 \left(\frac{\be_N}{N}\right)^{d/2}.
\end{equation}
Assume that
\begin{equation}\label{eq:thetaN-assumption}
 \limsup_{N\to\infty}\Theta_N<\varepsilon_d.
\end{equation}
Then there is a constant \(c_{d,\chi,(\be_N)}>0\), independent of \(N\),
such that
\begin{equation}\label{eq:desired-gap}
 c_{d,\chi,(\be_N)}\frac{\al_N}{N}
 \le \lambda_N^{d}(\be_N)
 \le 2\frac{\al_N}{N},
 \qquad N\ge2.
\end{equation}
Equivalently, there is \(C_{d,\chi,(\be_N)}<\infty\) such that
\begin{equation*}
 \Var_{P_{N,\be_N}^{d}}(f)
 \le C_{d,\chi,(\be_N)}\frac{N}{\be_N}
 \int_{(\R^d)^N}|\nabla f(x)|^2\dd P_{N,\be_N}^{d}(x),
 \qquad N\ge2.
\end{equation*}
In particular, at speed \(\al_N=N\), the labelled gap is bounded below
uniformly in \(N\).

More quantitatively, for every \(\bar\Theta<\varepsilon_d\) there is
\(c_{d,\bar\Theta}>0\) such that the lower bound in
\eqref{eq:desired-gap} holds with \(c_{d,\bar\Theta}\) whenever
\begin{equation*}
 \sup_{N\ge2}\Theta_N\le\bar\Theta.
\end{equation*}
The resulting constant depends on \(\chi\) and on the sequence
\((\be_N)\) only through the bound \(\bar\Theta\).
\end{theorem}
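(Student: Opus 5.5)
We first rescale coordinates so that the Gaussian part has unit scale. Set \(x_i=(N/\be_N)^{1/2}y_i\). Then \(P^d_{N,\be_N}\) becomes the measure on \((\R^d)^N\) with density proportional to
\[
\exp\Bigl[-\sum_i|y_i|^2-\frac{\Theta_N}{N-1}\sum_{i<j}|y_i-y_j|^{2-d}\Bigr].
\]
The Dirichlet form transforms as \(\frac{\al_N}{\be_N}\int|\nabla_x f|^2=\frac{\al_N}{N}\int|\nabla_y f|^2\). The claim therefore reduces to an \(N\)-uniform Poincaré inequality with constant of order one for this rescaled measure \(\mu_N\). In the one-particle conditional law of \(y_i\) given the others, the total Coulomb charge is \((N-1)\cdot\frac{\Theta_N}{N-1}=\Theta_N\), which is why \(\Theta_N\) is the relevant parameter.

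\emph{Upper bound.} We test with the center of mass \(f(y)=\sum_i e\cdot y_i\) for a unit vector \(e\). The interaction is translation invariant. Hence the change of variables \(y_i\mapsto y_i+te\) shows that, under \(\mu_N\), the projection of the center of mass decouples from the relative coordinates and is Gaussian with density proportional to \(\exp(-N s^2)\), where \(s=\frac1N\sum_i e\cdot y_i\). This gives \(\Var(f)=N^2\Var(s)=N/2\), while \(\int|\nabla f|^2=N\). The ratio is \(2\), so \(\lambda_N\le 2\al_N/N\).

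\emph{Lower bound.} We plan three steps.
\begin{enumerate}
\item \textbf{One-site Poincaré inequality.} For any finitely many poles \(z_1,\dots,z_m\) (coincidences allowed) and weights \(q_k\ge0\) with \(\sum_k q_k=\Theta\le\bar\Theta<\varepsilon_d\), the measure \(\nu\propto \exp(-|y|^2-\sum_k q_k|y-z_k|^{2-d})\,dy\) satisfies a Poincaré inequality with a constant \(C_d\) independent of \(m\) and of the \(z_k\).
\item \textbf{Moving-pole estimate.} Moving a single pole from \(z\) to \(z'\) changes \(\nu\) in total variation (or in a Wasserstein-type distance) by at most \(\frac{C\Theta}{N-1}\,\min(1,|z-z'|)\), or by at most \(\frac{C\Theta}{N-1}|z-z'|\).
\item \textbf{Tensorization.} We build the Dobrushin interdependence matrix \(c_{ij}\lesssim \Theta/(N-1)\). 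Its row sums are then \(\lesssim\Theta<1\) for small \(\varepsilon_d\). Wu's tensorization theorem (Dobrushin uniqueness implies a Poincaré inequality) then gives the global inequality with constant \(C_d/(1-C\bar\Theta)\).
\end{enumerate}
Wu's criterion needs a Lipschitz/Wasserstein-type coefficient. We would use the \(W_1\) version, bounding \(W_1\) by a weighted total variation via the Gaussian tails. Finally, Lemma~\ref{lem:fixed-N-gap} ensures that the form domain is the correct one and that no boundary terms at collisions arise.

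\emph{One-site inequality (main obstacle).} We expect step 1 to be the hardest part, since the bound must hold uniformly over pole configurations. Bakry--Émery fails because \(|y|^{2-d}\) is not convex, and Holley--Stroock fails because the potential is unbounded near the poles. Our first attempt would be the ground-state transform \(g=f\sqrt{\rho}\). This turns the problem into a Schrödinger operator \(-\Delta+W\) with
\[
W=\tfrac14|\nabla U|^2-\tfrac12\Delta U,
\]
where \(U=|y|^2+\sum_k q_k|y-z_k|^{2-d}\). Away from the poles \(\Delta|y-z_k|^{2-d}=0\), so \(-\tfrac12\Delta U=-d\) plus a nonnegative measure at the poles. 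This is exactly what repulsion contributes. We then need a uniform gap above the ground state. To get it we would combine two facts. First, \(|\nabla U|^2\gtrsim |y|^2\) at infinity up to errors of size \(\Theta\). Second, the operator is a perturbation of the harmonic oscillator, and the perturbation is small in the Hardy sense: \(\Theta^2|y-z|^{2-2d}\)-type terms near poles come with the correct sign, and cross terms are controlled by Hardy's inequality using \(d\ge3\). A compactness argument over pole configurations, in which poles escaping to infinity or merging are handled as limits, would then give uniformity. This is where the smallness \(\varepsilon_d\) enters.
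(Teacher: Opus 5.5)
Your overall architecture matches the paper's, and your center-of-mass upper bound is correct ($\Var(f)=N/2$, $\int|\nabla f|^2\dd\mu_N=N$). The architecture is: rescaling to pair coefficient $\Theta_N/(N-1)$, a one-site Poincar\'e inequality uniform in the poles, a moving-pole bound, and Dobrushin--Wu. The genuine gap is step~1, which you rightly call the crux: the mechanism you sketch does not give uniformity in the number of poles.

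After the ground-state transform the potential is $\frac14|2y+\nabla U_\rho|^2-d$ with $U_\rho=\sum_kq_k|y-z_k|^{2-d}$. The cross term $y\cdot\nabla U_\rho$ has size $|y|\,q_k|y-z_k|^{1-d}$ near $z_k$. For $d\ge4$ this is more singular than the Hardy weight $|y-z_k|^{-2}$. It also cannot be absorbed by $\frac14|\nabla U_\rho|^2$ without using up the whole confinement: a pointwise bound $|y||\nabla U_\rho|\le s|y|^2+s'|\nabla U_\rho|^2$ needs $ss'\ge\frac14$, so $s'\le\frac14$ forces $s\ge1$.

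This is a real effect, not a technicality. Next to an isolated pole $z$ of charge $q$ with $|z|$ large, there is a region of diameter comparable to $r\approx((d-2)q/(2|z|))^{1/(d-1)}$ where $2y+\nabla U_\rho\approx0$ and the transformed potential is $\approx-d$. So ``$|\nabla U|^2\gtrsim|y|^2$ up to errors of size $\Theta$'' fails pointwise. Hardy around each pole bounds the $L^2$-mass in such a region by $Cr_k^2\|\nabla g\|_2^2$. But $\sum_kr_k^2\sim\sum_k(q_k/|z_k|)^{2/(d-1)}$ is not controlled by $\Theta=\sum_kq_k$: with $m$ poles of charge $\Theta/m$ it grows like $m^{(d-3)/(d-1)}$, and $m=N-1$ is exactly the mean-field situation. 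A compactness argument ``over pole configurations'' does not repair this. The number $m$ is unbounded, limits of $\rho$ are arbitrary measures, and the $L^2$-tightness you would need to extract a limit is precisely the missing estimate.

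What is needed is control of the exceptional set that is uniform in the poles. The paper uses the weak-type bound $|\{|\nabla U_\rho|>t\}|\le C_d(\Gamma(\rho)/t)^{d/(d-1)}$, obtained by truncating the kernel $|x|^{1-d}$ at height proportional to $t/\Gamma(\rho)$ and applying Markov's inequality. This is combined with the critical Sobolev inequality $\int_Fg^2\le|F|^{2/d}\|g\|_{2d/(d-2)}^2$, which bounds the mass on the union of all bad regions at once; it behaves like $(\sum_kr_k^d)^{2/d}$ rather than $\sum_kr_k^2$.

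With this, the paper argues by contradiction along $\Gamma(\rho_n)\to0$. Centered, normalized near-minimizers $g_n=f_n\psi_n$ are bounded in $H^1$ and tight in $L^2$, because the trap acts off the exceptional set. They converge strongly, and lower semicontinuity gives $Q_0(g)\le0$ for the Gaussian oscillator with $\|g\|_2=1$ and $g\perp\psi_0$, a contradiction. Smallness enters only through $\Gamma_n\to0$, and the limit is the oscillator wherever the poles go.

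Relatedly, the ``nonnegative measure at the poles'' is not part of the closed form for $d\ge3$, and it is not where repulsion helps. What you need is that $C_c^\infty(\R^d\setminus\operatorname{supp}\rho)$ is a core: the cutoff cost is $\lesssim\varepsilon^{d-2}\ee^{-c\bar\gamma\varepsilon^{2-d}}\to0$, so the punctured identity closes on the whole form domain.

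Two smaller points.
\begin{enumerate}
\item The theorem assumes only $\limsup_N\Theta_N<\varepsilon_d$. For finitely many $N$ the charge $\Theta_N$ may be large, and then neither the one-site theorem nor the Dobrushin bound applies. Those $N$ need a positive (non-uniform) gap for each fixed $N$. That is the actual role of Lemma~\ref{lem:fixed-N-gap}, via properness of $|\nabla H|^2$ on the collision-free set, compact resolvent, and connectedness, not merely fixing the form domain.
\item The Euclidean $W_1$ detour is unnecessary. With the trivial metric, $W_1=\frac12\|\cdot\|_{\TV}$, so the add/remove bound $\|\nu_{\rho+a\delta_y}-\nu_\rho\|_{\TV}\le4M_{d,1}a$ alone gives $c_{ij}\le4M_{d,1}\Theta_N/(N-1)$. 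No Lipschitz dependence on $|z-z'|$ and no moment weight are needed.
\end{enumerate}
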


\begin{remark}
The parameter \(\Theta_N\) is the effective interaction strength seen by
a single particle after normalization of the quadratic confinement. In
terms of the physical inverse temperature
\(\beta_N^{\mathrm{ph}}=\be_N/N\),
\[ \Theta_N = \frac{N-1}{N}\chi
 \left(\beta_N/N \right)^{d/2}.
\]
Thus, in a fixed physical-temperature regime
\(\beta_N^{\mathrm{ph}} \to\beta\), the assumption becomes \(
 \chi\beta^{d/2}<\varepsilon_d\). 
It is therefore a high-temperature or weak-coupling condition. If
\(\be_N\) remains bounded, then
\(\beta_N^{\mathrm{ph}} \to 0\) and \(\Theta_N\to0\). 
\end{remark}

%The condition \eqref{eq:thetaN-assumption} comes directly from the one-site/Dobrushin scheme.  After the rescaling below, \(\Theta_N\) is the total Coulomb charge in a one-particle conditional, while the Dobrushin row sum is bounded by a dimension-dependent multiple of \(\Theta_N\).  It is therefore enough that \(\Theta_N\) eventually stay below a fixed threshold. The finitely many remaining values of \(N\) are covered by Lemma~\ref{lem:fixed-N-gap} and only affect the nonquantitative constant.

%We work in the labelled, overdamped, weak mean-field regime in Coulomb
%dimensions \(d\ge3\).  
%A singular one-site estimate, combined with a Dobrushin argument, yields a gap uniform in the particle number.  This is complementary to the companion logarithmic-divisor paper \cite{BMLog}, where the interaction is logarithmic in every dimension and the one-site analysis concerns small algebraic divisor exponents.  Here the kernel is the Coulomb fundamental solution \(|x|^{2-d}\); harmonicity, a singular ground-state transform, and a compactness limit to the Gaussian oscillator replace the logarithmic argument.

\subsection{Our strategy}
The interaction in \eqref{eq:original-law} is singular at collisions, and
the Hamiltonian is not globally convex. Consequently, the usual
Bakry-\'Emery arguments and the regular mean-field theory do not apply
directly in such cases; compare with \cite{GLWZ}.
% Instead we use the Coulomb identity \(\Delta|x|^{2-d}=0\) away from the origin.  
%The main analytic step is
%Theorem~\ref{thm:onesite}, a uniform Poincar\'e inequality for the one-site
%tilts
%\[
% Z^{-1}\exp\!\left[-\be|x|^2-
% \sum_{\ell=1}^M\gamma_\ell|x-y_\ell|^{2-d}\right]\dd x
%\]
%when \(\sum_\ell\gamma_\ell\) is small.  Its constant is uniform in
%\(M\), the pole positions, and pole collisions.

The Coulomb homogeneity makes the relevant scaling particularly simple.  Under
\[
 y_i=\sqrt{\frac{\be_N}{N}}\,x_i,
\]
the law \eqref{eq:original-law} becomes
\begin{equation}\label{eq:scaled-law}
 \mu_N(\dd y)
 \propto
 \exp\!\left[-\sum_i|y_i|^2
 -b_N\sum_{i<j}|y_i-y_j|^{2-d}\right]\dd y,
 \qquad
 b_N=\frac{\chi}{N}\left(\frac{\be_N}{N}\right)^{d/2}.
\end{equation}
Indeed,
\[
 \frac{\be_N}{N}|x_i|^2=|y_i|^2,
 \qquad
 \frac{\be_N\chi}{N^2}|x_i-x_j|^{2-d}
 =\frac{\chi}{N}\left(\frac{\be_N}{N}\right)^{d/2}
 |y_i-y_j|^{2-d}.
\]
A one-particle conditional is therefore a Gaussian measure tilted by
\(N-1\) Coulomb poles, each of charge \(b_N\), and its total Coulomb charge
is
\((N-1)b_N=\Theta_N.\)
%Hence a one-particle conditional has total Coulomb charge
%\[ (N-1)b_N=\Theta_N.\]
Our main analytic input, Theorem \ref{thm:onesite}, is a Poincar\'e
inequality for the one-site measures
\[ Z^{-1}\exp\!\left[
 -\be|x|^2 -\sum_{\ell=1}^M\gamma_\ell|x-y_\ell|^{2-d} \right]\dd x
\]
whenever \(\sum_\ell\gamma_\ell\) is sufficiently small. Its constant is uniform in \(M\), the pole positions, and pole collisions. 
 The proof uses the Coulomb harmonicity
\(\Delta|x|^{2-d}=0\) away from the origin, to identify, through a singular ground-state
transform, the Dirichlet form with the quadratic form of the associated Witten Laplacian, and finally a compactness limit to the Gaussian oscillator.

To pass from the conditional inequalities to the full system, we use
Dobrushin-Wu tensorization. To be slightly more precise at this level, let
\(\mu_i(\cdot\mid y_{\ne i})\) be the conditional law at site \(i\), and,
for \(i\ne j\), define
\[
 c_{ij} :=
 \sup\frac12
 \left\|
 \mu_i(\cdot\mid y_{\ne i}) -
 \mu_i(\cdot\mid y'_{\ne i})
 \right\|_{\mathrm{TV}},
\]
where the supremum is over boundary configurations that differ only at
site \(j\) and set \(c_{ii}=0\). The matrix \(C_N=(c_{ij})\) is the Dobrushin interdependence matrix and it measures the
dependence of each conditional law on the remaining particles. Under the so-called Dobrushin condition that we need to satisfy here is
\[
 \rho(C_N)<1.
\]
In our context this translates into the required smallness of \(\Theta_N\) which simultaneously provides a uniform
one-site Poincar\'e inequality and the Dobrushin condition. Wu's tensorization theorem in \cite{Wu} then tensorizes the conditional inequalities and yields the
many particle lower bound in in \eqref{eq:desired-gap} for $N$ large enough. 
The finitely many remaining values of \(N\) are handled by a compact-resolvent argument \ref{lem:fixed-N-gap}. 

Finally the center-of-mass coordinate
gives the upper bound. In particular, the Coulomb interaction depends only on particle differences and it is thus  independent of the center-of-mass coordinate. The Gibbs measure factorizes then exactly in this coordinate, and any component of the center-of-mass is a symmetric test function whose Rayleigh quotient is \(2\al_N/N\). This gives the upper bound in \eqref{eq:desired-gap}.

\subsection{Some state of the art}
For broad background on Coulomb and Riesz gases, their equilibrium measures,
and their scaling regimes, see
\cite{Lewin,ChafaiAspects,SerfatyLectures}. For the connection between
logarithmic gases and eigenvalue distributions of random matrices, see \cite{Forrester,AGZ}. 
For the renormalized-energy approach, which describes the next-order energy and the microscopic organization of Coulomb gases beyond the mean-field equilibrium measure, see \cite{SandierSerfaty,SerfatyBook,RougerieSerfaty}.
%For the renormalized-energy approach, see \cite{SandierSerfaty,SerfatyBook,RougerieSerfaty}. 

Dobrushin-type weak-dependence criteria are classical: see, for example,
the complete-analyticity framework of Dobrushin and Shlosman
\cite{DS}. Zegarlinski used such conditions to derive logarithmic Sobolev
inequalities for Gibbs measures \cite{Zegarlinski}, while Wu obtained
explicit Poincar\'e and other functional  inequalities under a Dobrushin
uniqueness condition \cite{Wu}. The latter tensorization result is the one
we use in the present paper.

For regular mean-field particle systems, Guillin, Liu, Wu, and Zhang
established Poincar\'e and logarithmic Sobolev inequalities with constants
uniform in the number of particles, under assumptions allowing
nonconvex confinement potentials with multiple local minima
\cite{GLWZ}. 
%Their methods require regularity assumptions that do not directly cover the Coulomb singularity considered here.

% In the regular mean-field setting, uniform
%Poincar\'e and logarithmic Sobolev inequalities were established in
%\cite{GLWZ}. 

Functional inequalities are also known for several singular models.
For one-dimensional Gibbs measures with singular pair interactions, Chafa\"i
and Lehec obtained Poincar\'e and logarithmic Sobolev inequalities using
the convex structure of the corresponding measures
\cite{ChafaiLehec}. In dimension two, Bolley, Chafa\"i, and Fontbona
studied the overdamped planar Coulomb dynamics. They proved
well-posedness and established Poincar\'e inequalities for planar
Coulomb dynamics, despite the fact
that the invariant measure is not globally
log-concave \cite{BCF}. 

Moreover, on the equilibrium side, Chafa\"i, Hardy, and Ma\"ida proved Coulomb transport inequalities and the resulting concentration estimates for Coulomb gases \cite{CHM}.

Singular mean-field dynamics have also been studied from the viewpoint of
well-posedness and propagation of chaos. In dimension one, Guillin, Le Bris, and Monmarch\'e treated log and Riesz interactions, proving existence and uniqueness, Wasserstein contraction under convex confinement, and uniform-in-time propagation of chaos in suitable regimes
\cite{GuillinLeBrisMonmarcheRiesz}. The same authors obtained
uniform-in-time propagation of chaos for singular divergence-free kernels
on the torus, including the two-dimensional Biot-Savart kernel
\cite{GuillinLeBrisMonmarcheVortex}.

For kinetic, or underdamped, Langevin dynamics with Coulomb interaction, Lu and Mattingly proved geometric ergodicity, namely exponential
convergence to the invariant measure in a weighted total-variation
distance, using a Lyapunov-function argument \cite{LuMattingly}.

%At the determinantal inverse temperature \(\beta=2\), the planar logarithmic Coulomb gas is the complex Ginibre ensemble. Suzuki recently determined the exact spectral gaps for the associated unlabelled interacting Brownian motions: in his normalization, the gap equals \(1\) for every finite-particle system and \(2\) for the infinite-particle system \cite{Suzuki}. His argument exploits the determinantal and holomorphic structure of the Ginibre ensemble. This result concerns the symmetric sector of a special two-dimensional logarithmic model and is therefore complementary to the labelled, \(d\ge3\), weak-coupling regime studied here.

We work instead in the labelled, overdamped, weak-coupling regime in
Coulomb dimensions \(d\ge3\). 
%Our method is complementary to the companion logarithmic-divisor paper \cite{BMLog}, where the interaction is logarithmic in every dimension and the one-site problem concerns small algebraic divisor exponents. 

The reduction to a one-particle conditional may also be
viewed as a one-step renormalization; related constructions for mean-field
spin systems appear in
\cite{BauerschmidtBodineau2019,BeckerMenegaki2020}.

\subsection{Notation}
Constants denoted by \(C,c,K,\eta\) may change from line to line.  Their
allowed dependence is always indicated in the statement in which they are
used.  All Poincar\'e inequalities are first proved on the displayed core of
smooth functions and then extended to the closed form domain by the closure
procedure stated in the relevant lemma or proposition.  Total variation is
always the full norm \eqref{eq:TV-convention}; hence it is twice the convention
\(\sup_A|\mu(A)-\nu(A)|\) used in some probability references.

\subsection{Organisation of the paper} Section 2 proves the uniform one-site gap, separating the normalization,
pole-removal, ground-state compactness, and oscillator-liminf arguments.
Section 3 estimates the effect of moving a pole and states the Dobrushin-Wu
criterion in our total-variation convention.  Section 4 treats fixed particle
number and completes the proof of Theorem~\ref{thm:main}.

{\bf{Acknowledgements:}} The first author would like to thank Roland Bauerschmidt for discussing the topic with him many years ago during his PhD. The second author  acknowledges support from the Engineering and Physical Sciences Research Council fellowship with reference UKRI2025.

The authors acknowledge the use of Chat-GPT Pro 5.6 in the preparation of this manuscript which crucially pointed us to Wu's criterion.

\section{Uniform Coulomb one-site estimate}\label{sec:target}

We begin with the Coulomb harmonicity used below.

We write $M\ge0$ for the number of atoms and define a finite positive point measure (the empty sum is allowed when \(\rho=0\)):
\begin{equation*}
\rho=\sum_{\ell=1}^M\gamma_\ell\delta_{y_\ell},
 \quad \text{ for } \gamma_\ell>0. \end{equation*} 
 We also set 
 $$ 
 U_{\rho}(x)=\sum_{\ell=1}^M\gamma_\ell|x-y_\ell|^{2-d}, \qquad 
 \Gamma(\rho)=\sum_{\ell=1}^M\gamma_\ell.
$$

\begin{lemma}\label{lem:superharmonic}
On
\(\R^d\setminus\{y_1,\ldots,y_M\}\),
\begin{equation*}
 \Delta U_{\rho}(x)=0.
\end{equation*}
\end{lemma}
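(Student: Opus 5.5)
The statement is linear in \(\rho\), so it reduces to a single pole. Since \(U_\rho=\sum_\ell\gamma_\ell\,\Phi(\cdot-y_\ell)\) with \(\Phi(z)=|z|^{2-d}\), and the Laplacian commutes with translations, it suffices to show that \(\Delta\Phi(z)=0\) for every \(z\ne0\). Indeed, on \(\R^d\setminus\{y_1,\ldots,y_M\}\) each summand \(\Phi(\cdot-y_\ell)\) is smooth, so the Laplacian of the finite sum is the sum of the Laplacians. The empty case \(M=0\) gives \(U_\rho\equiv0\), which is trivially harmonic.

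For the single pole I would do a direct radial computation. Write \(r=|z|\). For a radial function \(\Phi(z)=\phi(r)\) on \(z\ne0\) one has
\[
 \Delta\Phi=\phi''(r)+\frac{d-1}{r}\phi'(r).
\]
With \(\phi(r)=r^{2-d}\) this gives \(\phi'=(2-d)r^{1-d}\) and \(\phi''=(2-d)(1-d)r^{-d}\). Hence
\[
 \Delta\Phi=(2-d)r^{-d}\bigl[(1-d)+(d-1)\bigr]=0 .
\]

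If one prefers to avoid the radial formula, the same result follows from Cartesian derivatives. First \(\partial_i|z|^{2-d}=(2-d)z_i|z|^{-d}\). Then
\[
 \partial_i^2|z|^{2-d}=(2-d)\bigl(|z|^{-d}-d\,z_i^2|z|^{-d-2}\bigr).
\]
Summing over \(i\) gives \((2-d)\bigl(d|z|^{-d}-d|z|^{-d}\bigr)=0\).

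There is no real obstacle here. The only point needing care is that the identity is pointwise and classical, valid only away from the poles. In the distributional sense on all of \(\R^d\), \(\Delta U_\rho\) is a negative multiple of \(\rho\), which is what makes \(U_\rho\) superharmonic. The later ground-state argument presumably relies on exactly this distinction, so the statement should be kept restricted to the complement of the poles.
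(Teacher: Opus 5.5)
Your proposal is correct and follows essentially the same route as the paper's proof: both compute the radial Laplacian \(\varphi''+(d-1)r^{-1}\varphi'\) of \(r^{2-d}\), obtain zero away from the origin, and sum over the poles by linearity. Your Cartesian check and your remark that distributionally \(-\Delta U_\rho=(d-2)|\mathbb S^{d-1}|\rho\) are both accurate, and the latter matches how the paper later handles the poles in the ground-state transform.
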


\begin{proof}
For a radial function \(v(x)=\varphi(|x|)\),
\(\Delta v=\varphi''(r)+(d-1)r^{-1}\varphi'(r)\).  With
\(\varphi(r)=r^{2-d}\),
\[
 \varphi'(r)=(2-d)r^{1-d},\qquad
 \varphi''(r)=(2-d)(1-d)r^{-d},
\]
and hence
\[
 \Delta|x|^{2-d}
 =(2-d)\bigl[(1-d)+(d-1)\bigr]|x|^{-d}=0
 \qquad(x\ne0).
\]
Multiplying by the charges and summing over the poles proves the claim.
\end{proof}

\subsection{Moment bounds}
Let
\begin{equation*}
 \gamma_\be(dx)=Z_{d,\be}^{-1}\ee^{-\be|x|^2}\dd x, \qquad Z_{d,\be}:=\int_{\R^d}\ee^{-\be|x|^2}\dd x. 
\end{equation*}
Then the $1$-patricle measure is decomposed as follows 
\begin{equation*}
 \nu_{\rho}(dx)
 := Z_{\rho}^{-1}\ee^{-U_{\rho}(x)}\gamma_\be(dx),
 \qquad
 Z_{\rho} := \int_{\R^d}\ee^{-U_{\rho}(x)}\dd\gamma_\be(x).
\end{equation*}

For two probability measures \(\sigma\) and \(\tau\) on a measurable space
\((E,\mathcal F)\), we use the full total-variation norm
\begin{equation}\label{eq:TV-convention}
 \|\sigma-\tau\|_{\TV}
 :=\sup_{\|h\|_\infty\le1}
 \left|\int_Eh(u)\dd\sigma(u)-\int_Eh(u)\dd\tau(u)\right|
 =2\sup_{A\in\mathcal F}|\sigma(A)-\tau(A)|.
\end{equation}
With this convention the total variation distance is twice the minimal
mismatch probability in a coupling.  This factor is kept explicit in
Section~\ref{sec:target} and in the Dobrushin matrix calculation in
Section~3.

\begin{lemma}\label{lem:gaussian-moment}
It holds that \begin{equation*} 
 M_{d,\be}:=
 \sup_{y\in\R^d}\int_{\R^d}|x-y|^{2-d}\dd\gamma_\be(x)<\infty.
\end{equation*}
\end{lemma}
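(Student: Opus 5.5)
The plan is to split the integral at the unit ball around the pole and bound the two pieces separately and uniformly in \(y\). For fixed \(y\in\R^d\), write
\[
 \int_{\R^d}|x-y|^{2-d}\dd\gamma_\be(x)
 =\int_{|x-y|\le1}|x-y|^{2-d}\dd\gamma_\be(x)
 +\int_{|x-y|>1}|x-y|^{2-d}\dd\gamma_\be(x).
\]
On the outer region, \(|x-y|^{2-d}\le1\) because \(d\ge3\). The second term is therefore at most \(\gamma_\be(\R^d)=1\).

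For the inner region, the key observation is that the Gaussian density is bounded by its value at the origin, \(Z_{d,\be}^{-1}\ee^{-\be|x|^2}\le Z_{d,\be}^{-1}\). This gives
\[
 \int_{|x-y|\le1}|x-y|^{2-d}\dd\gamma_\be(x)
 \le Z_{d,\be}^{-1}\int_{|z|\le1}|z|^{2-d}\dd z
 =Z_{d,\be}^{-1}\,|S^{d-1}|\int_0^1 r^{2-d}r^{d-1}\dd r
 =\frac{|S^{d-1}|}{2\,Z_{d,\be}}.
\]
We used the translation \(z=x-y\) and polar coordinates. The radial integrand is \(r^{2-d}r^{d-1}=r\), so the singularity is integrable; this is where local integrability of the Coulomb kernel enters. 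The resulting bound does not depend on \(y\).

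Combining the two pieces yields
\[
 M_{d,\be}\le 1+\frac{|S^{d-1}|}{2}\Bigl(\frac{\be}{\pi}\Bigr)^{d/2}<\infty,
\]
using \(Z_{d,\be}=(\pi/\be)^{d/2}\). There is no real obstacle: the only point to check is that the bound is uniform in the pole position \(y\), and this holds because both pieces were controlled by quantities independent of \(y\). The explicit dependence on \(\be\) may be useful later when tracking constants.
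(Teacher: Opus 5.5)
Your proposal is correct and is essentially the paper's proof: you split at the unit ball around $y$, bound the inner part by $Z_{d,\be}^{-1}\int_{B(0,1)}|u|^{2-d}\dd u=\sigma_{d-1}/(2Z_{d,\be})$, and bound the outer part by $1$. Writing out $Z_{d,\be}=(\pi/\be)^{d/2}$ to make the $\be$-dependence explicit is a harmless addition.
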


\begin{proof}
The Gaussian density is bounded by \(Z_{d,\be}^{-1}\).  Therefore, with
\(\sigma_{d-1}=|\mathbb S^{d-1}|\),
\begin{align*}
 \int_{B(y,1)}|x-y|^{2-d}\dd\gamma_\be(x)
 &\le Z_{d,\be}^{-1}
   \int_{B(0,1)}|u|^{2-d}\dd u=Z_{d,\be}^{-1}\sigma_{d-1}\int_0^1r^{2-d}r^{d-1}\dd r
 =\frac{\sigma_{d-1}}{2Z_{d,\be}}.
\end{align*}
On \(B(y,1)^c\), \(|x-y|^{2-d}\le1\), so the complementary integral is at
most one.  The sum of these two bounds is independent of \(y\).
\end{proof}
In the following proposition we collect some elementary bounds. 

\begin{proposition}
\label{prop:basic-onesite}
If \(\Gamma(\rho)\le(2M_{d,\be})^{-1}\), then
\[
 \begin{gathered}
 Z_{\rho}\ge\tfrac12,\quad
 \nu_{\rho}\le2\gamma_\be,\quad
 \sup_{y\in\R^d}\int_{\R^d}|x-y|^{2-d}\dd\nu_{\rho}(x)
 \le2M_{d,\be}, \ \|\nu_{\rho}-\gamma_\be\|_{\TV}
 \le4M_{d,\be}\Gamma(\rho).
 \end{gathered}
\]
\end{proposition}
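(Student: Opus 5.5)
The whole argument rests on Jensen's inequality together with Lemma~\ref{lem:gaussian-moment}. Integrating $U_\rho$ against $\gamma_\be$ term by term gives
\[
 \int U_\rho\dd\gamma_\be=\sum_\ell\gamma_\ell\int|x-y_\ell|^{2-d}\dd\gamma_\be(x)\le M_{d,\be}\Gamma(\rho)\le\tfrac12 .
\]
By convexity of the exponential (Jensen),
\[
 Z_\rho=\int\ee^{-U_\rho}\dd\gamma_\be\ge\exp\Bigl(-\int U_\rho\dd\gamma_\be\Bigr)\ge\ee^{-1/2}.
\]
Since $\ee^{-1/2}>1/2$, this gives the first bound. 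Alternatively, one can use the elementary inequality $\ee^{-u}\ge1-u$ for $u\ge0$, which yields $Z_\rho\ge1-M_{d,\be}\Gamma(\rho)\ge\tfrac12$. I would use this second route, because the same linear bound is reused in the total-variation estimate below.

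\emph{Domination.} Since $U_\rho\ge0$, the density of $\nu_\rho$ with respect to $\gamma_\be$ is $Z_\rho^{-1}\ee^{-U_\rho}\le Z_\rho^{-1}\le2$. Hence $\nu_\rho\le2\gamma_\be$ as measures.

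\emph{Moment bound.} By the domination just proved, for every $y$,
\[
 \int|x-y|^{2-d}\dd\nu_\rho\le2\int|x-y|^{2-d}\dd\gamma_\be\le2M_{d,\be}.
\]
Taking the supremum over $y$ gives the third bound.

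\emph{Total variation.} Write $g=Z_\rho^{-1}\ee^{-U_\rho}$ for the density of $\nu_\rho$ with respect to $\gamma_\be$. Then $\|\nu_\rho-\gamma_\be\|_{\TV}=\int|g-1|\dd\gamma_\be$. Since $\int g\dd\gamma_\be=1$, we have $\int(g-1)\dd\gamma_\be=0$, and therefore
\[
 \int|g-1|\dd\gamma_\be=2\int(1-g)_+\dd\gamma_\be .
\]
On the set where $g<1$ we can estimate
\[
 1-g\le1-\ee^{-U_\rho},
\]
because $Z_\rho\le1$ forces $g\ge\ee^{-U_\rho}$. Next, $1-\ee^{-U_\rho}\le U_\rho$. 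Integrating,
\[
 \|\nu_\rho-\gamma_\be\|_{\TV}\le2\int U_\rho\dd\gamma_\be\le2M_{d,\be}\Gamma(\rho),
\]
which is even better than the claimed $4M_{d,\be}\Gamma(\rho)$. The cruder bound also follows directly from the triangle inequality, splitting $|g-1|\le Z_\rho^{-1}|\ee^{-U_\rho}-Z_\rho|$.

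There is no real obstacle here. The only points needing care are keeping track of the full-TV convention \eqref{eq:TV-convention}, so that the TV norm equals the $L^1(\gamma_\be)$ norm of $g-1$, and noting that $Z_\rho\le1$ because $U_\rho\ge0$; that inequality is what gives $g\ge\ee^{-U_\rho}$.
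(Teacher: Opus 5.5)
Your proof is correct and matches the paper's argument for the first three bounds: $\ee^{-u}\ge1-u$ gives $Z_\rho\ge\tfrac12$, and the domination $\nu_\rho\le2\gamma_\be$ then gives the moment bound. For the total variation you use the identity $\int|g-1|\dd\gamma_\be=2\int(1-g)_+\dd\gamma_\be$ together with $g\ge\ee^{-U_\rho}$ (which follows from $Z_\rho\le1$); this avoids the factor $Z_\rho^{-1}\le2$ that the paper incurs by bounding $Z_\rho^{-1}\int|q-Z_\rho|\dd\gamma_\be$, so you get the sharper constant $2M_{d,\be}\Gamma(\rho)$, and this step does not even need the smallness hypothesis.
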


\begin{proof}
The inequality \(\ee^{-u}\ge1-u\) and Lemma
\ref{lem:gaussian-moment} give
\begin{equation*}
 Z_{\rho}
 \ge1-\int_{\R^d}U_{\rho}(x)\dd\gamma_\be(x)
 =1-\sum_{\ell=1}^M\gamma_\ell
   \int_{\R^d}|x-y_\ell|^{2-d}\dd\gamma_\be(x)
 \ge1-M_{d,\be}\Gamma(\rho)\ge\frac{1}{2}.
\end{equation*}
Since \(0<\ee^{-U_\rho}\le1\), for every Borel set \(B\),
\[
 \nu_\rho(B)
 =Z_\rho^{-1}\int_B\ee^{-U_\rho(x)}\dd\gamma_\be(x)
 \le2\gamma_\be(B).
\]
Thus \(\nu_\rho\le2\gamma_\be\), and Lemma~\ref{lem:gaussian-moment}
immediately gives
\[
 \sup_{y\in\R^d}\int_{\R^d}|x-y|^{2-d}\dd\nu_\rho(x)
 \le2\sup_{y\in\R^d}\int_{\R^d}|x-y|^{2-d}\dd\gamma_\be(x)
 =2M_{d,\be}.
\]
To estimate the total variation, write \(q=\ee^{-U_{\rho}}\), so that
\(Z_{\rho}=\int_{\R^d}q(x)\dd\gamma_\be(x)\).  For every measurable \(h\) with
\(|h|\le1\),
\begin{align*}
 |\nu_{\rho}(h)-\gamma_\be(h)|
 &=Z_{\rho}^{-1}\left|\int_{\R^d}h(x)(q(x)-Z_{\rho})
   \dd\gamma_\be(x)\right|\le Z_{\rho}^{-1}\int_{\R^d}|q(x)-Z_{\rho}|\dd\gamma_\be(x).
\end{align*}
Since \(0<q\le1\), the centered density satisfies
\begin{align*}
 \int_{\R^d}|q(x)-Z_{\rho}|\dd\gamma_\be(x)
 &\le\int_{\R^d}|q(x)-1|\dd\gamma_\be(x)+|1-Z_{\rho}|=2\int_{\R^d}(1-q(x))\dd\gamma_\be(x).
\end{align*}
Taking the supremum over \(|h|\le1\), then using
\(Z_{\rho}^{-1}\le2\) and \(1-\ee^{-u}\le u\), gives
\begin{align*}
 \|\nu_{\rho}-\gamma_\be\|_{\TV}
 &\le2Z_{\rho}^{-1}
   \int_{\R^d}(1-\ee^{-U_{\rho}(x)})\dd\gamma_\be(x)\le4\int_{\R^d}U_{\rho}(x)\dd\gamma_\be(x)\le4M_{d,\be}\Gamma(\rho).
\end{align*}
\end{proof}

\subsection{Ground-state transform and one-site spectral gap}

We next prove a Poincar\'e bound uniform over the number and geometry of
the poles, using a weak estimate for the Coulomb field and a ground-state
compactness argument.

\begin{lemma}\label{lem:weak-field}
Let \(U=U_{\rho}\).  There is a constant \(C_d\)
such that, for every \(t>0\),
\begin{equation}\label{eq:weak-field}
 \big|\{x\in\R^d:|\nabla U(x)|>t\}\big|
 \le C_d\left(\frac{\Gamma(\rho)}{t}\right)^{d/(d-1)}.
\end{equation}
\end{lemma}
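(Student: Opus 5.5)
Since $|\nabla|x-y|^{2-d}|=(d-2)|x-y|^{1-d}$, the triangle inequality gives the pointwise bound
\[
 |\nabla U(x)|\le (d-2)\sum_{\ell=1}^M\gamma_\ell|x-y_\ell|^{1-d}=:(d-2)\,(\rho*k)(x),\qquad k(z)=|z|^{1-d}.
\]
It therefore suffices to prove a weak-type $(1,d/(d-1))$ estimate for convolution of the finite measure $\rho$ against the kernel $k$. Note that $k\in L^{d/(d-1),\infty}(\R^d)$, since
\[
 |\{|z|^{1-d}>s\}|=|B(0,s^{-1/(d-1)})|=\omega_d s^{-d/(d-1)}.
\]
The claim is then the standard fact that convolution of a finite measure with a weak-$L^p$ function lies in weak-$L^p$ for $p>1$, with norm at most $C_p\,\Gamma(\rho)\,\|k\|_{L^{p,\infty}}$. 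This is the Hardy--Littlewood--Sobolev weak endpoint with a measure in place of an $L^1$ function. The constant is independent of $M$ and of the pole positions, and coincident poles are harmless because they simply merge into a single atom of larger charge.

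To keep the argument self-contained, I would prove the weak bound directly with $p=d/(d-1)$. By homogeneity, replacing $\rho$ by $\rho/\Gamma(\rho)$ and $t$ by $t/\Gamma(\rho)$, we may assume $\Gamma(\rho)=1$. Fix $t>0$ and a radius $R>0$ to be chosen, and split $k=k\mathbf 1_{|z|\le R}+k\mathbf 1_{|z|>R}=:k_1+k_2$.

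The far part is bounded pointwise: $\rho*k_2\le \|k_2\|_\infty=R^{1-d}$. Choosing $R=(t/2)^{-1/(d-1)}$ gives $\rho*k_2\le t/2$, so
\[
 \{\rho*k>t\}\subset\{\rho*k_1>t/2\}.
\]
For the near part, $k_1\in L^1$ with
\[
 \|k_1\|_{L^1}=\sigma_{d-1}\int_0^R r^{1-d}r^{d-1}\dd r=\sigma_{d-1}R .
\]
Hence $\|\rho*k_1\|_{L^1}\le\Gamma(\rho)\|k_1\|_{L^1}=\sigma_{d-1}R$, by Tonelli, since the integrand is nonnegative. Chebyshev's inequality then yields
\[
 |\{\rho*k_1>t/2\}|\le \frac{2\sigma_{d-1}R}{t}=2\sigma_{d-1}\,t^{-1}(t/2)^{-1/(d-1)}=C_d\,t^{-d/(d-1)} .
\]
Undoing the normalization and absorbing the factor $(d-2)$ into $t$ gives \eqref{eq:weak-field}.

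The only point needing care is the pointwise domination of $|\nabla U|$ on $\R^d\setminus\{y_\ell\}$. The pole set has measure zero, so it does not affect the measure of the superlevel set. There is no real obstacle here: the truncation at the scale where the far part equals $t/2$ is the one choice that makes the exponents match, giving $t^{-1}\cdot t^{-1/(d-1)}=t^{-d/(d-1)}$.
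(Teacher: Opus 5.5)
Your proof is correct and is essentially the paper's argument: the same pointwise domination $|\nabla U|\le(d-2)\,(|\cdot|^{1-d}*\rho)$, the same truncation of the kernel where the bounded part contributes at most half the threshold, and Tonelli plus Markov on the singular part. Your radius $R=(t/2)^{-1/(d-1)}$ is exactly the paper's level $a=s/(2m)$ once you normalize $\Gamma(\rho)=1$. The only addition needed is the trivial case $\Gamma(\rho)=0$, which your normalization excludes but where the superlevel set is empty.
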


\begin{proof}
Set \(m=\Gamma(\rho)\).  The claim is immediate if \(m=0\), so assume
\(m>0\).  Since
\[
 \nabla |x-y|^{2-d}
 =-(d-2)\frac{x-y}{|x-y|^d},
\]
writing \(K(x)=|x|^{1-d}\) gives
\begin{equation*}
 |\nabla U(x)|
 \le(d-2)(K*\rho)(x).
\end{equation*}
It suffices then to prove the weaker estimate for \(F=K*\rho\) directly.  Fix \(s>0\), put
\(a=s/(2m)\), and split
\[
 K=K_0+K_1,
 \qquad K_0=K\mathbf1_{\{K\le a\}},
 \qquad K_1=K\mathbf1_{\{K>a\}}.
\]
The bounded part satisfies \(K_0*\rho\le am=s/2\).  Consequently,
\[
 \{F>s\}\subseteq\{K_1*\rho>s/2\}.
\]
Markov's inequality therefore gives
\begin{align*}
 |\{x\in\R^d:F(x)>s\}|
 &\le|\{x\in\R^d:(K_1*\rho)(x)>s/2\}|\le \frac2s\int_{\R^d}(K_1*\rho)(x)\dd x.
\end{align*}
The integrand is nonnegative, so Tonelli's theorem permits us to exchange the
Lebesgue integral in \(x\) and the \(\rho\)-integral in \(y\):
\begin{align*}
 \int_{\R^d}(K_1*\rho)(x)\dd x
 &=\int_{\R^d}\int_{\R^d}K_1(x-y)\dd\rho(y)\dd x=\int_{\R^d}\left(\int_{\R^d}K_1(z)\dd z\right)\dd\rho(y)\\
 &=m\int_{\R^d}K_1(z)\dd z
 =m\int_{\{z\in\R^d:K(z)>a\}}K(z)\dd z.
\end{align*}
Combining the last
two displays yields
\[
 |\{x\in\R^d:F(x)>s\}|
 \le\frac{2m}{s}
 \int_{\{z\in\R^d:K(z)>a\}}K(z)\dd z.
\]
If \(\sigma_{d-1}=|\mathbb S^{d-1}|\), then
\begin{align*}
 \int_{\{z\in\R^d:K(z)>a\}}K(z)\dd z
 &=\int_{\{z\in\R^d:|z|<a^{-1/(d-1)}\}}|z|^{1-d}\dd z=\sigma_{d-1}\int_0^{a^{-1/(d-1)}}\dd r
 =\sigma_{d-1}a^{-1/(d-1)}.
\end{align*}
Since \(a=s/(2m)\), it follows that
\[
 |\{x\in\R^d:F(x)>s\}|
 \le C_d\left(\frac{m}{s}\right)^{d/(d-1)}.
\]
Applying this estimate with \(s=t/(d-2)\) and absorbing the factor \(d-2\)
into \(C_d\) yields \eqref{eq:weak-field}.
\end{proof}

The next lemma identifies the Dirichlet form with the quadratic form of the
associated Witten Laplacian after multiplication by the square root of the
invariant density.  Two technical points are worth separating.  First, the
poles are not imposed as boundary conditions: they are removed only for the
initial integration by parts and then reinserted by form closure.  Second, the
identity is a closed-form identity, not merely a formal computation with the
distributional Laplacian of the Coulomb kernel.

\begin{lemma}
\label{lem:singular-transform}
Let
\begin{equation*}
 H(x)=\be|x|^2+U_{\rho}(x),
 \qquad
 \mathcal Z=\int_{\R^d}\ee^{-H(x)}\dd x,
 \qquad
 \psi=\mathcal Z^{-1/2}\ee^{-H/2}.
\end{equation*}
Set
\[
 A(x)=\frac14|\nabla H(x)|^2
 =\frac14|2\be x+\nabla U_{\rho}(x)|^2.
\]
Let \((\cE_\rho,\cD_\rho)\) be the closure in
\(L^2(\R^d,\psi^2\dd x)\) of the Dirichlet form of the overdamped Langevin generator
\begin{equation*}
 \cE_\rho(f,f)=\int_{\R^d}|\nabla f(x)|^2\psi(x)^2\dd x = \int_{\mathbb R^d} \vert \nabla f(x)\vert^2 \dd \nu_{\rho}(x),
 \qquad f\in C_c^\infty(\R^d).
\end{equation*}
Set
\[
 \mathcal C_\rho:=C_c^\infty(\R^d\setminus\operatorname{supp}\rho).
\]
Then \(\mathcal C_\rho\) is a core for \((\cE_\rho,\cD_\rho)\).  The
unitary multiplication operator
\[
 T:L^2(\R^d,\psi^2\dd x)\longrightarrow L^2(\R^d,\dd x),
 \qquad Tf=f\psi,
\]
maps \(\mathcal C_\rho\) to the explicitly defined space
\[
 T\mathcal C_\rho:=\{f\psi:f\in\mathcal C_\rho\}.
\]
For \(g\in T\mathcal C_\rho\), define
\begin{equation}\label{eq:transformed-positive-norm}
 \|g\|_{\mathcal Q_\rho}^2
 :=\int_{\R^d}\bigl(|\nabla g(x)|^2+A(x)g(x)^2\bigr)\dd x,
\end{equation}
and let \(\mathcal Q_\rho\) be the completion of \(T\mathcal C_\rho\) in
this norm.  Then \(T\) maps \(\cD_\rho\) bijectively onto
\(\mathcal Q_\rho\), and \(\mathcal Q_\rho\) is contained in
\begin{equation}\label{eq:transformed-domain-inclusion}
 \{g\in H^1(\R^d):A^{1/2}g\in L^2(\R^d)\}.
\end{equation}
For every \(f\in\cD_\rho\), with \(g=Tf=f\psi\),
\begin{equation}\label{eq:closed-transform}
 \cE_\rho(f,f)
 =\int_{\R^d}\bigl\{|\nabla g(x)|^2+
   (A(x)-\be d)g(x)^2\bigr\}\dd x.
\end{equation}
\end{lemma}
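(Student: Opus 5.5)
The plan has three steps: prove the identity \eqref{eq:closed-transform} on the smaller core $\mathcal C_\rho$ by an honest integration by parts away from the poles, show that $\mathcal C_\rho$ is a core by a capacity argument, and pass to the closures using that $T$ is unitary.

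\textbf{Step 1: the identity on $\mathcal C_\rho$.} For $f\in\mathcal C_\rho$ the function $g=f\psi$ is smooth with compact support in the open set $\Omega=\R^d\setminus\operatorname{supp}\rho$, where $H$ is smooth. Since $\nabla\psi=-\tfrac12\psi\nabla H$, we have
\[
 \psi\nabla f=\nabla g+\tfrac12 g\nabla H .
\]
Expanding the square gives
\[
 |\psi\nabla f|^2=|\nabla g|^2+\tfrac12\nabla(g^2)\cdot\nabla H+\tfrac14|\nabla H|^2g^2 .
\]
Integrating the middle term by parts is legitimate because $g^2\in C_c^\infty(\Omega)$. It yields $-\tfrac12\int g^2\Delta H$, and on $\Omega$ Lemma~\ref{lem:superharmonic} gives $\Delta H=2\be d$. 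This proves \eqref{eq:closed-transform} for $f\in\mathcal C_\rho$. It also gives the norm equivalence
\[
 \|g\|_{\mathcal Q_\rho}^2=\cE_\rho(f,f)+\be d\|f\|^2_{L^2(\psi^2)}.
\]
Hence $T$ is an isometry from $(\mathcal C_\rho,\cE_\rho+\|\cdot\|^2)$, up to the fixed constant $\be d$, onto $(T\mathcal C_\rho,\|\cdot\|_{\mathcal Q_\rho})$.

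\textbf{Step 2: $\mathcal C_\rho$ is a core.} This is the main obstacle. I would approximate any $f\in C_c^\infty(\R^d)$ by $f\eta_\varepsilon$, where $\eta_\varepsilon$ is a product of capacitary cutoffs around the finitely many distinct pole positions. Each cutoff equals $0$ on $B(y_\ell,\varepsilon^2)$, equals $1$ outside $B(y_\ell,\varepsilon)$, and uses the logarithmic profile in between, so that
\[
 \int|\nabla\eta_\varepsilon|^2\dd x\to0 .
\]
Since $d\ge3$, a single point already has zero Newtonian capacity, so a power-type cutoff with $\int|\nabla\eta_\varepsilon|^2\lesssim\varepsilon^{d-2}$ also works. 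The weight $\psi^2\le\mathcal Z^{-1}$ is bounded, so
\[
 \int f^2|\nabla\eta_\varepsilon|^2\psi^2\dd x\to0 .
\]
The remaining terms converge by dominated convergence, and $\eta_\varepsilon\to1$ a.e. Coincident poles cause no trouble, since they simply merge into one support point. Because $C_c^\infty(\R^d)$ is a core by definition, so is $\mathcal C_\rho$.

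\textbf{Step 3: closures and the identity on $\cD_\rho$.} By Step~2, $\cD_\rho$ is the completion of $\mathcal C_\rho$ in the form norm. By Step~1, $T$ extends to an isometric bijection from $\cD_\rho$ onto $\mathcal Q_\rho$. The extension is consistent with the $L^2$ map $T$, because form-norm convergence implies $L^2(\psi^2)$-convergence, and $\mathcal Q_\rho$-convergence implies $L^2(\dd x)$-convergence once the constant $\be d$ is handled through the $L^2$ term.

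For the inclusion \eqref{eq:transformed-domain-inclusion}, note that a $\|\cdot\|_{\mathcal Q_\rho}$-Cauchy sequence has $\nabla g_n$ Cauchy in $L^2$ and $A^{1/2}g_n$ Cauchy in $L^2$. By Step~1, $g_n$ is also Cauchy in $L^2$, since $\|g_n\|_2=\|f_n\|_{L^2(\psi^2)}$. So the limit lies in $H^1$, and by Fatou along an a.e.-convergent subsequence it satisfies $A^{1/2}g\in L^2$.

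Finally, \eqref{eq:closed-transform} passes to the limit. Both sides are continuous in the respective closed norms, and the term $-\be d\int g^2$ is continuous in $L^2$.
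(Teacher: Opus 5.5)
Your proposal is correct and follows essentially the same route as the paper. Both integrate by parts on the punctured core $\mathcal C_\rho$ using $\Delta H=2\be d$ off the poles, show that $\mathcal C_\rho$ is a core by a cutoff argument, and close the shifted identity $\cE_\rho(f,f)+\be d\|f\|_{L^2(\psi^2)}^2=\|f\psi\|_{\mathcal Q_\rho}^2$ through the unitary map $T$.

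The only difference is cosmetic. You bound the cutoff cost using $\psi^2\le\mathcal Z^{-1}$ and the zero capacity of points in $d\ge3$; the logarithmic profile would need smoothing to lie in $C_c^\infty$, but your power-type cutoff avoids this. The paper additionally uses the vanishing factor $\exp[-\bar\gamma_k|x-p_k|^{2-d}]$ near each pole, although its bound already contains the factor $\varepsilon^{d-2}$, which suffices on its own.
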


\begin{proof}
We prove the closed identity in four steps.  First we show that the poles can
be removed at zero weighted form cost.  Then we perform the ordinary
integration by parts away from the poles.  Next we identify the completed
transformed domain.  Finally we explain why no extra distributional
point-mass term is present in the closed realization.

So, first combine coincident poles, and write the resulting distinct poles as
\(p_1,\ldots,p_J\), with combined charges \(\bar\gamma_k>0\).  Fix
\(v\in C_c^\infty(\R^d)\).  Choose a radial function
\(\eta\in C^\infty(\R^d;[0,1])\) such that \(\eta=0\) on \(B(0,1)\),
\(\eta=1\) on \(B(0,2)^c\), and \(\|\nabla\eta\|_\infty\le C_\eta\).
Set
\[
 \eta_\varepsilon(x)=\eta(x/\varepsilon).
\]
Then \(\eta_\varepsilon=0\) on \(B(0,\varepsilon)\),
\(\eta_\varepsilon=1\) outside \(B(0,2\varepsilon)\),
\[
 \operatorname{supp}\nabla\eta_\varepsilon
 \subset\{x\in\R^d:\varepsilon<|x|<2\varepsilon\},
 \qquad
 |\nabla\eta_\varepsilon(x)|\le\frac{C_\eta}{\varepsilon}.
\]
For sufficiently small \(\varepsilon\), the balls about the distinct poles
are disjoint.  Since every term in \(U_\rho\) is nonnegative, on
\(B(p_k,2\varepsilon)\),
\begin{equation*}
 U_{\rho}(x)\ge \bar\gamma_k|x-p_k|^{2-d}.
\end{equation*}
Moreover, \(\ee^{-\be|x|^2}\le1\), and hence
\[
 \ee^{-H(x)}
 =\ee^{-\be|x|^2}\ee^{-U_\rho(x)}
 \le\exp[-\bar\gamma_k|x-p_k|^{2-d}]
 \qquad(x\in B(p_k,2\varepsilon)).
\]
Therefore, writing \(\omega_d=|B(0,1)|\),
\begin{align}
 \int_{\R^d} |v(x)|^2
  |\nabla\eta_\varepsilon(x-p_k)|^2\ee^{-H(x)}\dd x
 \le
 &\frac{C_\eta^2\|v\|_\infty^2}{\varepsilon^2}
 \int_{\{x\in\R^d:\varepsilon<|x-p_k|<2\varepsilon\}}
 \exp[-\bar\gamma_k|x-p_k|^{2-d}]\dd x
 \notag\\
 &\le
 \frac{C_\eta^2\|v\|_\infty^2}{\varepsilon^2}
 |B(0,2\varepsilon)|
 \exp[-2^{2-d}\bar\gamma_k\varepsilon^{2-d}]
 \notag\\
 &\le
 2^d\omega_d C_\eta^2\|v\|_\infty^2\varepsilon^{d-2}
 \exp[-2^{2-d}\bar\gamma_k\varepsilon^{2-d}]
 \longrightarrow0.
 \label{eq:pole-cutoff-cost}
\end{align}
Indeed, on the annulus \(|x-p_k|\le2\varepsilon\), and the exponent
\(2-d<0\), so
\[
 |x-p_k|^{2-d}\ge(2\varepsilon)^{2-d}
 =2^{2-d}\varepsilon^{2-d}.
\]
Also \(|B(0,2\varepsilon)|=\frac{2^d}{d}\sigma_{d-1}\varepsilon^d\).  Finally, with
\(t=\varepsilon^{-(d-2)}\), the last \(\varepsilon\)-dependent factor is
\[
 \varepsilon^{d-2}
 \ee^{-2^{2-d}\bar\gamma_k\varepsilon^{2-d}}
 =t^{-1}\ee^{-2^{2-d}\bar\gamma_k t}\longrightarrow0 \text{ as }\varepsilon \to 0.
\]

Let
\(\chi_\varepsilon(x)=\prod_{k=1}^J
\eta_\varepsilon(x-p_k)\) and
\(v_\varepsilon=v\chi_\varepsilon\).  Then
\(v_\varepsilon\in\mathcal C_\rho\), because it vanishes in a neighborhood
of every pole.  For every \(x\notin\operatorname{supp}\rho\), one has
\(\chi_\varepsilon(x)=1\) for all sufficiently small \(\varepsilon\), so
\(v_\varepsilon(x)\to v(x)\).  Since \(0\le\chi_\varepsilon\le1\),
\(|v_\varepsilon-v|\le|v|\), and dominated convergence gives
\[
 \|v_\varepsilon-v\|_{L^2(\R^d,\psi^2\dd x)}\longrightarrow0.
\]
Moreover,
\[
 \nabla(v_\varepsilon-v)
 =(\chi_\varepsilon-1)\nabla v+v\nabla\chi_\varepsilon.
\]
For the first term, dominated convergence gives
\[
 \int_{\R^d}|\chi_\varepsilon(x)-1|^2|\nabla v(x)|^2
 \psi(x)^2\dd x\longrightarrow0.
\]
For the second, the product rule gives
\[
 \nabla\chi_\varepsilon(x)
 =\sum_{k=1}^J\nabla\eta_\varepsilon(x-p_k)
   \prod_{\ell\ne k}\eta_\varepsilon(x-p_\ell).
\]
Because \(0\le\eta_\varepsilon\le1\), Cauchy--Schwarz implies $|\nabla\chi_\varepsilon(x)|^2
 \le J\sum_{k=1}^J
 |\nabla\eta_\varepsilon(x-p_k)|^2.$
Using \(\psi^2=\mathcal Z^{-1}\ee^{-H}\) and
\eqref{eq:pole-cutoff-cost}, we obtain
\begin{align*}
 \int_{\R^d}|v(x)|^2|\nabla\chi_\varepsilon(x)|^2
 \psi(x)^2\dd x
 &\le\frac{J}{\mathcal Z}\sum_{k=1}^J
 \int_{\R^d}|v(x)|^2|\nabla\eta_\varepsilon(x-p_k)|^2
 \ee^{-H(x)}\dd x\longrightarrow0.
\end{align*}
Finally, \(|a+b|^2\le2|a|^2+2|b|^2\) shows that
\[
 \cE_\rho(v_\varepsilon-v,v_\varepsilon-v)\longrightarrow0.
\]
Together with the \(L^2\)-convergence, this proves convergence in the form
norm for every compactly supported smooth \(v\).  Since the original closed
form was defined as the closure of \(C_c^\infty(\R^d)\), this proves that the
punctured space \(\mathcal C_\rho\) is a core.  Coincident poles cause no
additional case: they were combined at the start into the positive charges
\(\bar\gamma_k\).

Now, for \(f\in\mathcal C_\rho\), put \(g=f\psi\).  Its support has positive
distance from every pole, so ordinary integration by parts and
\(\Delta U_{\rho}=0\) on that support give
\begin{align*}
 \int_{\R^d}|\nabla f(x)|^2\psi(x)^2\dd x
 &=\int_{\R^d}\left|\nabla g(x)+\frac12g(x)\nabla H(x)\right|^2\dd x
 \\
 &=\int_{\R^d}\left(|\nabla g(x)|^2+
   \frac14|\nabla H(x)|^2g(x)^2\right)\dd x+\frac12\int_{\R^d}
   \nabla(g(x)^2)\cdot\nabla H(x)\dd x
 \\
 &=\int_{\R^d}|\nabla g(x)|^2\dd x
   +\int_{\R^d}\left(\frac14|\nabla H(x)|^2-
     \frac12\Delta H(x)\right)g(x)^2\dd x\\
 &=\int_{\R^d}\{|\nabla g(x)|^2+(A(x)-\be d)g(x)^2\}\dd x.
\end{align*}
In the last line we used
\(\Delta H=\Delta(\be|x|^2)+\Delta U_\rho=2\be d\) on the support of
\(g\).
\emph{Step 3: closing the identity.}
After adding \(\be d\|f\|_{L^2(\psi^2\dd x)}^2\), the two sides become the
same nonnegative norm,
\begin{equation}\label{eq:positive-transform-norm}
 \cE_\rho(f,f)+\be d\|f\|_2^2
 =\int_{\R^d}(|\nabla g(x)|^2+A(x)g(x)^2)\dd x.
\end{equation}
The map \(T\) is unitary because
\[
 \|Tf\|_{L^2(\R^d,\dd x)}^2
 =\int_{\R^d}|f(x)|^2\psi(x)^2\dd x.
\]
The left side of \eqref{eq:positive-transform-norm} is an equivalent norm
for the weighted form closure; explicitly, if
\(\|f\|_{\cD_\rho}^2=\cE_\rho(f,f)+\|f\|_2^2\), then
\[
 \min\{1,\be d\}\|f\|_{\cD_\rho}^2
 \le\cE_\rho(f,f)+\be d\|f\|_2^2
 \le\max\{1,\be d\}\|f\|_{\cD_\rho}^2.
\]
The right side of \eqref{eq:positive-transform-norm} is exactly
\eqref{eq:transformed-positive-norm}.  Since \(T\) is unitary on the
underlying \(L^2\) spaces, \eqref{eq:positive-transform-norm} also gives,
on \(T\mathcal C_\rho\),
\[
 \be d\|g\|_{L^2(\R^d)}^2
 \le\|g\|_{\mathcal Q_\rho}^2.
\]
Thus the abstract completion \(\mathcal Q_\rho\) is canonically realized as
a subspace of \(L^2(\R^d)\).  The map \(T\) sends the completion of
\(\mathcal C_\rho\) under the equivalent norm
\((\cE_\rho(f,f)+\be d\|f\|_2^2)^{1/2}\) isometrically onto
\(\mathcal Q_\rho\).  In particular, a form-Cauchy
sequence is carried to a Cauchy sequence in \(H^1\) and in
\(L^2(\R^d,A\dd x)\).  This proves the asserted onto identification,
\eqref{eq:transformed-domain-inclusion}, and \eqref{eq:closed-transform}.

Now, we finally clarify the distributional issue at the poles.  If
\(\sigma_{d-1}=|\mathbb S^{d-1}|\), then, in distributions,
\[
 -\Delta U_\rho=(d-2)\sigma_{d-1}\rho.
\]
Thus a purely formal use of the distributional Laplacian in the expansion
\(\frac14|\nabla H|^2-\frac12\Delta H\) might suggest an additional
quadratic form concentrated on \(\operatorname{supp}\rho\).  That is not how
the transformed form is defined.  The integration by parts is first carried
out for \(f\in\mathcal C_\rho\), whose support stays away from every pole, so
only the classical identity \(\Delta U_\rho=0\) is used.  The transformed
form is then the closure of this punctured-domain identity.  Estimate
\eqref{eq:pole-cutoff-cost} proves that removing the poles has zero weighted
form cost, so this closure is exactly the image of the original weighted
form and contains no separate quadratic-form contribution supported at the
poles.
\end{proof}

\begin{remark}[Langevin generator and the singular Witten Laplacian]
For a smooth potential \(H\), the nonnegative Langevin operator
\[
 L_H=-\Delta+\nabla H\cdot\nabla
\]
is symmetric in \(L^2(\ee^{-H}\dd x)\).  Multiplication by
\(\ee^{-H/2}\) formally gives the Witten--Schr\"odinger operator
\[
 \ee^{-H/2}L_H\ee^{H/2}
 =
 -\Delta+\frac14|\nabla H|^2-\frac12\Delta H.
\]
In the present setting,
\[
 H(x)=\be|x|^2+U_\rho(x),
 \qquad
 -\Delta U_\rho=(d-2)|\mathbb S^{d-1}|\rho
\]
in the sense of distributions.  Consequently, a purely formal insertion
of the distributional Laplacian would produce
\[
 \frac14|\nabla H|^2-\be d
 +\frac{d-2}{2}|\mathbb S^{d-1}|\rho.
\]
The last term is supported at the Coulomb poles.

Lemma~\ref{lem:singular-transform} does not simply discard this
distributional term.  Instead, it identifies the closed realization selected
by the original Langevin Dirichlet form.  The transform identity is first
proved on $C_c^\infty(\R^d\setminus\operatorname{supp}\rho)$,
where \(U_\rho\) is harmonic and hence
\(\Delta H=2\be d\) classically.  The pole-cutoff estimate then shows that
this punctured test-function class is a core for the weighted Langevin form.
Closing the identity therefore gives
\[
 \cE_\rho(f,f)
 =
 \int_{\R^d}
 \left\{
 |\nabla g(x)|^2+
 \left(\frac14|\nabla H(x)|^2-\be d\right)g(x)^2
 \right\}\dd x,
 \qquad g=f\psi,
\]
without an additional quadratic-form contribution supported on
\(\operatorname{supp}\rho\).

Thus the conclusion is not that the distributional identity for
\(\Delta U_\rho\) is false.  Rather, the closed Witten-Laplacian realization
unitarily equivalent to the Langevin form is obtained by closing the
punctured-domain expression, and in this realization the distributional
point masses do not define a separate term such as
\(\sum_k\gamma_k|g(y_k)|^2\).
\end{remark}

\begin{theorem}\label{thm:onesite}
For every \(d\ge3\) and \(\be>0\), there are constants
\begin{equation*}
 \eta_{d,\be}>0,
 \qquad
 K_{d,\be}<\infty
\end{equation*}
such that \(\Gamma(\rho)\le\eta_{d,\be}\) implies, for every
\(f\in\cD_{\rho}\),
\begin{equation*}
 \Var_{\nu_{\rho}}(f)
 \le K_{d,\be}\int_{\R^d}|\nabla f(x)|^2\dd\nu_{\rho}(x).
\end{equation*}
The constants are independent of the number, positions, and collisions of
the poles.
\end{theorem}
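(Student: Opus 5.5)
Via Lemma~\ref{lem:singular-transform}, the Poincar\'e inequality for \(\nu_\rho\) is equivalent to a spectral gap above zero for the transformed quadratic form
\[
 q_\rho(g)=\int_{\R^d}\bigl\{|\nabla g|^2+(A_\rho-\be d)g^2\bigr\}\dd x ,
 \qquad A_\rho=\tfrac14|2\be x+\nabla U_\rho|^2,
\]
on \(\mathcal Q_\rho\). Its ground state is \(\psi_\rho=\mathcal Z^{-1/2}\ee^{-H/2}\), with \(q_\rho(\psi_\rho)=0\). Since \(T\) is unitary, \(\Var_{\nu_\rho}(f)=\|g\|_2^2-\langle g,\psi_\rho\rangle^2\). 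It therefore suffices to show, uniformly in \(\rho\) with \(\Gamma(\rho)\le\eta\), that
\[
 q_\rho(g)\ge K^{-1}\|g\|_2^2 \qquad\text{whenever } g\perp\psi_\rho .
\]
I will prove this by contradiction and compactness. Suppose there are measures \(\rho_n\) with \(\Gamma(\rho_n)\to0\) and normalized \(g_n\in\mathcal Q_{\rho_n}\), \(\|g_n\|_2=1\), \(g_n\perp\psi_{\rho_n}\), with \(q_{\rho_n}(g_n)\to\lambda\) and \(\lambda<\lambda_\ast\). Here \(\lambda_\ast>0\) is a fixed fraction of the first excited level \(2\be\) of the Gaussian oscillator \(-\Delta+\be^2|x|^2-\be d\). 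Contradicting this yields some \(\eta_{d,\be}\) and \(K_{d,\be}=\lambda_\ast^{-1}\). Because the statement only involves \(\Gamma\), the number of poles and their positions or collisions never enter except through \(\Gamma\).

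The main obstacle is that \(A_\rho\) contains the singular field \(\nabla U_\rho\), which is not small pointwise. It is small only in measure, by Lemma~\ref{lem:weak-field}, and \(|\nabla U_\rho|^2\) is even non-integrable near the poles. My plan is to avoid expanding the square and to use a lower bound only. Fix \(t>0\) and let \(E_n=\{|\nabla U_{\rho_n}|>t\}\), so that \(|E_n|\le C_d(\Gamma(\rho_n)/t)^{d/(d-1)}\to0\). Off \(E_n\), the inequality \(|a+b|^2\ge(1-\delta)|a|^2-C_\delta|b|^2\) gives
\[
 A_{\rho_n}\ge(1-\delta)\be^2|x|^2-C_\delta t^2 .
\]
On \(E_n\) I simply use \(A_{\rho_n}\ge0\) and control the lost term \(\int_{E_n}\be^2|x|^2g_n^2\) as follows.
\begin{itemize}
\item The bounded quantity \(q_{\rho_n}(g_n)+\be d\) bounds \(\|g_n\|_{H^1}\). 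By Sobolev embedding in \(d\ge3\), \(\|g_n\|_{L^{2^*}}\) is bounded.
\item By H\"older, \(\int_{E_n\cap B_R}g_n^2\le\|g_n\|_{2^*}^2|E_n|^{2/d}\to0\).
\item Outside \(B_R\), I need a tail bound. I would derive it by noting that \(\int A_{\rho_n}g_n^2\) is bounded, and that \(|2\be x+\nabla U|\ge\be|x|\) wherever \(|\nabla U|\le\be|x|\). The set where \(|\nabla U|>\be|x|\) with \(|x|>R\) has measure at most \(C(\Gamma/(\be R))^{d/(d-1)}\), which is handled by the same Sobolev/H\"older trick.
\end{itemize}
Sending \(t\) to infinity slowly compared with \(\Gamma(\rho_n)\to0\), and then \(\delta\to0\), these steps give tightness of \((g_n)\) together with the liminf inequality
\[
 \liminf_n q_{\rho_n}(g_n)\ \ge\ \int\bigl\{|\nabla g|^2+(\be^2|x|^2-\be d)g^2\bigr\}\dd x ,
\]
where \(g\) is the weak \(H^1\) limit. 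Rellich compactness on balls plus tightness give \(g_n\to g\) strongly in \(L^2\), hence \(\|g\|_2=1\).

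It remains to pass the orthogonality to the limit. The ground states satisfy \(\psi_{\rho_n}\to\psi_0=\mathcal Z_0^{-1/2}\ee^{-\be|x|^2/2}\) in \(L^2\). This follows from Proposition~\ref{prop:basic-onesite}, since \(\psi_{\rho_n}^2\,\dd x\) converges to \(\gamma_\be\) in total variation and, for square roots of densities, the Hellinger distance is bounded by the square root of total variation. Hence \(g\perp\psi_0\). The oscillator has spectral gap \(2\be\), so the right-hand side of the liminf inequality is at least \(2\be\), which contradicts \(\lambda<\lambda_\ast\).

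Finally, the Poincar\'e inequality is first obtained on the core \(\mathcal C_\rho\) and extended to all of \(\cD_\rho\) by the closure statement in Lemma~\ref{lem:singular-transform}.
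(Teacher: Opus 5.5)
Your proposal is correct and follows the paper's proof essentially step by step. The shared steps are: contradiction along \(\Gamma(\rho_n)\to0\), the ground-state transform of Lemma~\ref{lem:singular-transform}, the weak-field bound combined with Sobolev--H\"older on the exceptional set, the confinement tail bound, Rellich compactness, the oscillator liminf, and transfer of orthogonality via \(\psi_{\rho_n}\to\psi_0\). Two of your choices differ from the paper but are harmless variants. Your Hellinger--TV route to \(\psi_{\rho_n}\to\psi_0\) replaces the paper's direct computation. Contradicting at a level \(\lambda_\ast<2\be\) instead of at \(0\) even shows that \(K_{d,\be}\) can be taken arbitrarily close to the Gaussian constant \((2\be)^{-1}\).

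One slip to fix: for the error \(C_\delta t^2\) in your lower bound on \(A_{\rho_n}\) to vanish, you must send \(t\to0\) (slowly enough that \(\Gamma(\rho_n)/t\to0\)), not to infinity.
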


The proof is by contradiction.  This reduction is useful because it turns a
failure of uniform Poincar\'e constants into a sequence of normalized
near-ground states.  If the constants were not uniform as the total charge
tends to zero, one could find functions with unit variance, zero mean, and
vanishing Dirichlet energy for measures converging in total variation to the
Gaussian.  The ground-state transform converts these functions into almost
minimizers of a Schr\"odinger form.  The weak-type bound on the Coulomb field
then gives compactness and identifies the limiting form as the Gaussian
harmonic oscillator.  Its ground state is unique and not orthogonal to itself,
which contradicts the centering condition.

We prove Theorem~\ref{thm:onesite} after establishing four auxiliary
lemmas.

\begin{lemma}
\label{lem:normalized-counterexample}
Assume that for every \(\eta>0\) and every finite \(K\), there are a point
measure \(\rho\) with \(\Gamma(\rho)\le\eta\) and a function
\(h\in\cD_\rho\) such that
\[
 \Var_{\nu_\rho}(h)
 >K\int_{\R^d}|\nabla h(x)|^2\dd\nu_\rho(x).
\]
Then there are point measures \(\rho_n\) and functions
\(f_n\in\cD_{\rho_n}\) such that, with
\(\nu_n=\nu_{\rho_n}\) and \(\Gamma_n=\Gamma(\rho_n)\),
\[
 \Gamma_n\le\frac1n,
 \qquad
 \nu_n(f_n)=0,
 \qquad
 \nu_n(f_n^2)=1,
 \qquad
 E_n:=\int_{\R^d}|\nabla f_n(x)|^2\dd\nu_n(x)<\frac1n.
\]
\end{lemma}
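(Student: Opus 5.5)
Apply the hypothesis once for each \(n\ge1\), with \(\eta=1/n\) and \(K=n\). This produces a point measure \(\rho_n\) with \(\Gamma(\rho_n)\le 1/n\) and a function \(h_n\in\cD_{\rho_n}\) such that
\[
 \Var_{\nu_{\rho_n}}(h_n)>n\int_{\R^d}|\nabla h_n|^2\dd\nu_{\rho_n}.
\]
The right-hand side is nonnegative, so the strict inequality forces \(\Var_{\nu_{\rho_n}}(h_n)>0\). In particular \(h_n\) is not \(\nu_{\rho_n}\)-a.e.\ constant, and the normalization below is legitimate.

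Put \(m_n=\nu_{\rho_n}(h_n)\), \(v_n=\Var_{\nu_{\rho_n}}(h_n)\), and \(f_n=(h_n-m_n)/\sqrt{v_n}\). Since \(h_n\in\cD_{\rho_n}\subset L^2(\nu_{\rho_n})\), the mean \(m_n\) is finite. By construction \(\nu_n(f_n)=0\) and \(\nu_n(f_n^2)=1\).

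It remains to check that \(f_n\in\cD_{\rho_n}\) and to compute its energy. Constants lie in the closed form domain with zero energy. Indeed, the cutoffs \(\eta_R(x)=\eta_0(x/R)\), with \(\eta_0\in C_c^\infty\) equal to \(1\) on \(B(0,1)\), converge to \(1\) in \(L^2(\nu_{\rho_n})\). Their energy is bounded by \(\|\nabla\eta_0\|_\infty^2 R^{-2}\), since \(\nu_{\rho_n}\) is a probability measure. So \(\eta_R\to1\) in form norm. Because \(\cD_{\rho_n}\) is a linear space, \(f_n\in\cD_{\rho_n}\). Moreover, the closed form satisfies \(\cE_{\rho_n}(h-c,h-c)=\cE_{\rho_n}(h,h)\) for every constant \(c\); this follows by closing the identity on \(C_c^\infty\) via the cutoff approximation. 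Homogeneity then gives
\[
 E_n=\frac{\cE_{\rho_n}(h_n,h_n)}{v_n}<\frac1n.
\]
Here \(\int|\nabla f|^2\dd\nu_\rho\) is read as the closed form \(\cE_\rho(f,f)\) on \(\cD_\rho\).

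The only mildly delicate point is the membership of constants in the closure \(\cD_{\rho_n}\) and the invariance of the closed form under adding constants. This is handled by the explicit cutoff argument above and is routine. The rest is a direct normalization.
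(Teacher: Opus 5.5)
Your proposal is correct and follows essentially the same route as the paper: apply the hypothesis with $\eta=1/n$ and $K=n$, note that the strict inequality forces positive variance, show that constants lie in $\cD_\rho$ with zero energy via cutoffs $\chi_R$ with $|\nabla\chi_R|\le C/R$, and then center and rescale. As a small simplification, the invariance $\cE_\rho(h-c,h-c)=\cE_\rho(h,h)$ follows directly from $\cE_\rho(1,1)=0$ and the Cauchy--Schwarz inequality for the closed form, without re-closing an identity on $C_c^\infty$.
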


\begin{proof}
Constants belong to every form domain \(\cD_\rho\).  This is needed because
we shall center the counterexample functions inside the closed form domain.
Indeed, let
\(\chi_R\in C_c^\infty(\R^d)\) satisfy \(0\le\chi_R\le1\),
\(\chi_R=1\) on \(B(0,R)\), \(\chi_R=0\) outside \(B(0,2R)\), and
\(|\nabla\chi_R|\le C/R\).  Since \(\nu_\rho\) is a probability,
\begin{align*}
 \int_{\R^d}|\chi_R(x)-1|^2\dd\nu_\rho(x)&\longrightarrow0,\\
 \int_{\R^d}|\nabla\chi_R(x)|^2\dd\nu_\rho(x)
 &\le\frac{C^2}{R^2}
 \nu_\rho\bigl(B(0,2R)\setminus B(0,R)\bigr)
 \le\frac{C^2}{R^2}\longrightarrow0.
\end{align*}
Thus \(\chi_R\to1\) in the form norm, so \(1\in\cD_\rho\) and
\(\cE_\rho(1,1)=0\).  Centering therefore preserves the form domain and
does not change the gradient.

Apply the assumption with \(\eta=1/n\) and \(K=n\).  This gives
\(\rho_n\) and \(h_n\in\cD_{\rho_n}\) such that
\[
 \Gamma_n\le\frac1n,
 \qquad
 0\le\int_{\R^d}|\nabla h_n(x)|^2\dd\nu_n(x)
 <\frac1n\Var_{\nu_n}(h_n).
\]
In particular, \(\sigma_n^2:=\Var_{\nu_n}(h_n)>0\).  Set
\[
 m_n:=\nu_n(h_n),
 \qquad
 f_n:=\frac{h_n-m_n}{\sigma_n}.
\]
Then
\[
 \nu_n(f_n)=\frac{\nu_n(h_n)-m_n}{\sigma_n}=0,
 \qquad
 \nu_n(f_n^2)=\frac{\Var_{\nu_n}(h_n)}{\sigma_n^2}=1,
\]
and \(\nabla f_n=\sigma_n^{-1}\nabla h_n\) almost everywhere.  Therefore
\[
 E_n
 =\frac1{\sigma_n^2}
 \int_{\R^d}|\nabla h_n(x)|^2\dd\nu_n(x)
 <\frac1n.
\]
\end{proof}

\begin{lemma}
\label{lem:transformed-compactness}
Let \((\rho_n)\) be finite positive point measures with
\(\Gamma_n:=\Gamma(\rho_n)\to0\).  Write
\[
 U_n=U_{\rho_n},\qquad
 H_n(x)=\be|x|^2+U_n(x),\qquad
 \mathcal Z_n=\int_{\R^d}\ee^{-H_n(x)}\dd x,
\]
and set
\[
 \psi_n(x)=\mathcal Z_n^{-1/2}\ee^{-H_n(x)/2},
 \qquad
 A_n(x)=\frac14|2\be x+\nabla U_n(x)|^2.
\]
For \(g\in\mathcal Q_{\rho_n}\), define
\begin{equation}\label{eq:ground-state-form}
 Q_n(g):=\int_{\R^d}\left\{|\nabla g(x)|^2+
 (A_n(x)-\be d)g(x)^2\right\}\dd x.
\end{equation}
Suppose that \(g_n\in\mathcal Q_{\rho_n}\), \(\|g_n\|_2=1\), and
\(Q_n(g_n)\to0\).  Then, after passage to a subsequence, there is
\(g\in H^1(\R^d)\) such that
\[
 g_n\rightharpoonup g\quad\text{in }H^1(\R^d),
 \qquad
 g_n\longrightarrow g\quad\text{in }L^2(\R^d),
 \qquad
 \|g\|_2=1.
\]
Moreover, if
\[
 \mathcal Z_0=\int_{\R^d}\ee^{-\be|x|^2}\dd x,
 \qquad
 \psi_0(x)=\mathcal Z_0^{-1/2}\ee^{-\be|x|^2/2},
\]
then \(\psi_n\to\psi_0\) in \(L^2(\R^d)\).
\end{lemma}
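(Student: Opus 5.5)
The proof splits into three parts: a uniform $H^1$ bound, compactness of the $L^2$ embedding through a confining-potential argument, and the separate claim $\psi_n\to\psi_0$.

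\emph{Uniform bounds.} Since $Q_n(g_n)\to0$ and $\|g_n\|_2=1$, we have
\[
 \int|\nabla g_n|^2+\int A_n g_n^2=Q_n(g_n)+\be d\le \be d+1
\]
for large $n$. Hence $(g_n)$ is bounded in $H^1(\R^d)$, and $\int A_n g_n^2$ is bounded. By Banach--Alaoglu and Rellich on balls, a subsequence converges weakly in $H^1$ and strongly in $L^2_{\mathrm{loc}}$ to some $g\in H^1$. It remains to show $\|g\|_2=1$, that is, to rule out escape of mass to infinity. For this, the weak potential $A_n$ must confine uniformly outside a set of small measure.

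\emph{Tightness, the main obstacle.} Fix $R$ large and write $F_n=\nabla U_n$. Let
\[
 B_n=\{|F_n|>\be R\}.
\]
By Lemma~\ref{lem:weak-field}, $|B_n|\le C_d(\Gamma_n/(\be R))^{d/(d-1)}\to0$. On $\{|x|>R\}\setminus B_n$ we have $|2\be x+F_n|\ge 2\be R-\be R=\be R$, so $A_n\ge \be^2R^2/4$ there. This gives
\[
 \int_{\{|x|>R\}\setminus B_n} g_n^2\le \frac{4(\be d+1)}{\be^2R^2}.
\]
The mass of $g_n$ on $B_n$ is controlled by Hölder's inequality and Sobolev's inequality (valid since $d\ge3$):
\[
 \int_{B_n}g_n^2\le |B_n|^{2/d}\|g_n\|_{2^*}^2\le C|B_n|^{2/d}\|\nabla g_n\|_2^2\to0.
\]
Therefore $\limsup_n\int_{|x|>R}g_n^2\le C/R^2$, uniformly in $n$. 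Combined with local $L^2$ convergence, this yields $g_n\to g$ in $L^2(\R^d)$ and $\|g\|_2=1$. The delicate point is exactly here: $A_n$ is not pointwise bounded below near infinity, because the poles may sit anywhere and in any number, and the weak-type estimate of Lemma~\ref{lem:weak-field} together with Sobolev is what replaces a pointwise confinement bound.

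\emph{Convergence of $\psi_n$.} Write $\psi_n^2=\mathcal Z_n^{-1}\ee^{-H_n}$, so that $\psi_n^2\,\dd x=\nu_{\rho_n}$ and $\psi_0^2\,\dd x=\gamma_\be$. Proposition~\ref{prop:basic-onesite} gives
\[
 \|\psi_n^2-\psi_0^2\|_{L^1}=\|\nu_{\rho_n}-\gamma_\be\|_{\TV}\le 4M_{d,\be}\Gamma_n\to0.
\]
Since $(\psi_n-\psi_0)^2\le|\psi_n-\psi_0|(\psi_n+\psi_0)=|\psi_n^2-\psi_0^2|$ for nonnegative functions, it follows that $\|\psi_n-\psi_0\|_2^2\le\|\psi_n^2-\psi_0^2\|_1\to0$, which proves the final claim.
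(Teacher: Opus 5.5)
Your proof is correct, and its compactness part is essentially the paper's argument: the weak-field lemma at threshold $t=\be R$, H\"older plus Sobolev to kill the mass on the exceptional set, the bound $A_n\ge\be^2R^2/4$ off that set, and Rellich on balls. Your exceptional set $\{|\nabla U_n|>\be R\}$ is slightly larger than the paper's $\{|x|>R,\ |\nabla U_n|>\be|x|\}$, but it obeys the same measure bound.

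The one genuine difference is in the proof of $\psi_n\to\psi_0$. You invoke the total-variation bound of Proposition~\ref{prop:basic-onesite} together with $(\sqrt p-\sqrt q)^2\le|p-q|$. That proposition needs $\Gamma_n\le(2M_{d,\be})^{-1}$, which holds for large $n$ and should be stated. The paper instead estimates $\mathcal Z_0-\mathcal Z_n$ directly via $(1-\ee^{-u/2})^2\le 1-\ee^{-u}$ and then renormalizes. Both routes give $\|\psi_n-\psi_0\|_2^2=O(\Gamma_n)$.
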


\begin{proof}
By \eqref{eq:ground-state-form},
\begin{equation}\label{eq:H1-A-bound}
 \int_{\R^d}|\nabla g_n(x)|^2\dd x+
 \int_{\R^d}A_n(x)g_n(x)^2\dd x
 =Q_n(g_n)+\be d\le\be d+o(1).
\end{equation}
Thus \((g_n)\) is bounded in \(H^1(\R^d)\), as $A_n \ge 0$.  If \(S_d\) denotes a constant
in the homogeneous Sobolev inequality, then
\[
 \|g_n\|_{2d/(d-2)}^2
 \le S_d\int_{\R^d}|\nabla g_n(x)|^2\dd x\le C_{d,\be}.
\]

For \(R>0\), set
\[
 B_n(R)=\{x\in\R^d:|x|>R,\ |\nabla U_n(x)|>\be|x|\}.
\]
Lemma~\ref{lem:weak-field}, applied with \(t=\be R\), and H\"older's
inequality yield
\begin{align*}
 |B_n(R)|
 &\le C_d\left(\frac{\Gamma_n}{\be R}\right)^{d/(d-1)},\\
 \int_{B_n(R)}g_n(x)^2\dd x
 &\le |B_n(R)|^{2/d}\|g_n\|_{2d/(d-2)}^2
 \le C_{d,\be}|B_n(R)|^{2/d}.
\end{align*}
On \(\{x\in\R^d:|x|>R\}\setminus B_n(R)\),
\[
 |2\be x+\nabla U_n(x)|
 \ge2\be|x|-|\nabla U_n(x)|
 \ge\be|x|\ge\be R,
\]
so \(A_n(x)\ge\be^2R^2/4\).  This is the only place where confinement is
used in the compactness argument: away from the exceptional set where the
Coulomb field is large, the quadratic trap forces mass back into a fixed
large ball.  Therefore
\[
 \int_{\{x\in\R^d:|x|>R\}}g_n(x)^2\dd x
 \le\int_{B_n(R)}g_n(x)^2\dd x
 +\frac4{\be^2R^2}\int_{\R^d}A_n(x)g_n(x)^2\dd x.
\]
For fixed \(R\), the first term tends to zero because \(\Gamma_n\to0\),
whereas \eqref{eq:H1-A-bound} bounds the second.  Hence
\begin{equation*}
 \limsup_{n\to\infty}
 \int_{\{x\in\R^d:|x|>R\}}g_n(x)^2\dd x
 \le\frac{C_{d,\be}}{R^2}.
\end{equation*}

Since \((g_n)\) is bounded in \(H^1(\R^d)\), reflexivity gives a
subsequence, not relabelled, and a function \(g\in H^1(\R^d)\) such that
\[
 g_n\rightharpoonup g
 \qquad\text{weakly in }H^1(\R^d).
\]
For every integer \(m\ge1\), the restrictions of \(g_n\) to \(B(0,m)\)
are bounded in \(H^1(B(0,m))\).  By the Rellich--Kondrachov theorem, a
subsequence converges strongly in \(L^2(B(0,m))\).  Applying this
successively for \(m=1,2,\ldots\) and taking a diagonal subsequence, we may
assume that
\[
 g_n\longrightarrow g
 \qquad\text{strongly in }L^2(B(0,R))
\]
for every \(R>0\).  The local strong limit agrees with the global weak
\(H^1\) limit because weak limits are unique.

For fixed \(R>0\), multiplication by
\(\mathbf1_{\{|x|>R\}}\) is a bounded operator on \(L^2(\R^d)\).
Consequently, weak lower semicontinuity of the \(L^2\)-norm gives
\[
 \int_{\{x\in\R^d:|x|>R\}}g(x)^2\dd x
 \le
 \liminf_{n\to\infty}
 \int_{\{x\in\R^d:|x|>R\}}g_n(x)^2\dd x
 \le \frac{C_{d,\be}}{R^2}.
\]
Thus both the sequence and its weak limit have uniformly small
\(L^2\)-tails.  Indeed,
\begin{align*}
 \int_{\R^d}|g_n(x)-g(x)|^2\dd x
 &=
 \int_{B(0,R)}|g_n(x)-g(x)|^2\dd x+
 \int_{\{x\in\R^d:|x|>R\}}|g_n(x)-g(x)|^2\dd x\\
 &\le
 \int_{B(0,R)}|g_n(x)-g(x)|^2\dd x+
 2\int_{\{x\in\R^d:|x|>R\}}(g_n(x)^2+g(x)^2)\dd x.
\end{align*}
Taking the upper limit as \(n\to\infty\), using the strong convergence on
\(B(0,R)\) and the two tail estimates, yields
\[
 \limsup_{n\to\infty}
 \int_{\R^d}|g_n(x)-g(x)|^2\dd x
 \le \frac{4C_{d,\be}}{R^2}.
\]
Letting \(R\to\infty\) proves that
\[
 g_n\longrightarrow g
 \qquad\text{strongly in }L^2(\R^d).
\]
Since \(\|g_n\|_2=1\) for every \(n\), strong \(L^2\) convergence finally
implies
\[
 \|g\|_2=\lim_{n\to\infty}\|g_n\|_2=1.
\]

It remains to prove convergence of the normalized ground states.  Since
\(0\le1-\ee^{-U_n}\le U_n\), Lemma~\ref{lem:gaussian-moment} gives
\begin{align*}
 0\le\mathcal Z_0-\mathcal Z_n
 &=\int_{\R^d}\ee^{-\be|x|^2}
   (1-\ee^{-U_n(x)})\dd x\le\int_{\R^d}\ee^{-\be|x|^2}U_n(x)\dd x
 \le C_{d,\be}\Gamma_n\longrightarrow0.
\end{align*}
For
\(\widetilde\psi_n=\ee^{-(\be|x|^2+U_n)/2}\) and
\(\widetilde\psi_0=\ee^{-\be|x|^2/2}\), the inequality
\((1-\ee^{-u/2})^2\le1-\ee^{-u}\) yields
\[
 \|\widetilde\psi_n-\widetilde\psi_0\|_2^2
 =\int_{\R^d}\ee^{-\be|x|^2}
   (1-\ee^{-U_n(x)/2})^2\dd x
 \le\mathcal Z_0-\mathcal Z_n\longrightarrow0.
\]
Because \(\mathcal Z_n\to\mathcal Z_0>0\), normalization gives
\(\psi_n\to\psi_0\) in \(L^2(\R^d)\).
\end{proof}

\begin{lemma}
\label{lem:oscillator-liminf}
Let \((\rho_n)\) be finite positive point measures satisfying
$ \Gamma_n:=\Gamma(\rho_n)\to 0.$
and set
\[
 U_n:=U_{\rho_n},
 \qquad
 A_n(x):=\frac14|2\be x+\nabla U_n(x)|^2.
\]
For \(h\in\mathcal Q_{\rho_n}\) as was defined in Lemma \ref{lem:singular-transform}, define
\[
 Q_n(h):=
 \int_{\R^d}
 \left\{
 |\nabla h(x)|^2+
 (A_n(x)-\be d)h(x)^2
 \right\}\dd x.
\]
Suppose that \(g_n\in\mathcal Q_{\rho_n}\), that
\[
 \|g_n\|_2=1,
 \qquad
 \sup_n Q_n(g_n)<\infty,
\]
and that \((g_n)\) is bounded in \(H^1(\R^d)\).  Assume moreover that
\[
 g_n\rightharpoonup g
 \quad\text{weakly in }H^1(\R^d),
 \qquad
 g_n\longrightarrow g
 \quad\text{strongly in }L^2(\R^d).
\]
Then
\[
 \liminf_{n\to\infty}
 \int_{\R^d}A_n(x)g_n(x)^2\dd x
 \ge
 \int_{\R^d}\be^2|x|^2g(x)^2\dd x.
\]
In particular, \(|x|g\in L^2(\R^d)\).  If
\[
 Q_0(g):=
 \int_{\R^d}
 \left\{
 |\nabla g(x)|^2+
 (\be^2|x|^2-\be d)g(x)^2
 \right\}\dd x,
\]
then
\[
 Q_0(g)\le\liminf_{n\to\infty}Q_n(g_n).
\]
\end{lemma}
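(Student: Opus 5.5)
The plan is to prove the liminf for the potential term by localizing away from the exceptional set where the Coulomb field is large, and then to combine it with the weak lower semicontinuity of the Dirichlet integral and the strong \(L^2\) convergence.

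\emph{Step 1: removing the field on a good set.} Fix \(R>0\) and \(\delta\in(0,1)\), and set
\[
 G_n:=\{x\in B(0,R): |\nabla U_n(x)|\le\delta\}.
\]
By Lemma~\ref{lem:weak-field}, \(|B(0,R)\setminus G_n|\le C_d(\Gamma_n/\delta)^{d/(d-1)}\to0\). On \(G_n\) we have
\[
 |2\be x+\nabla U_n(x)|\ge 2\be|x|-\delta,
\]
and hence \(A_n(x)\ge \frac14\bigl((2\be|x|-\delta)_+\bigr)^2=:a_\delta(x)\), a bounded continuous function on \(B(0,R)\). Since \(A_n\ge0\),
\[
 \int_{\R^d}A_ng_n^2\dd x\ge\int_{B(0,R)}a_\delta g_n^2\dd x-\int_{B(0,R)\setminus G_n}a_\delta g_n^2\dd x.
\]
The subtracted term is at most \(\|a_\delta\|_{L^\infty(B_R)}\,|B(0,R)\setminus G_n|^{2/d}\,\|g_n\|_{2d/(d-2)}^2\). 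This tends to zero by the Sobolev bound, exactly as in Lemma~\ref{lem:transformed-compactness}, using the \(H^1\) boundedness. The first term converges to \(\int_{B(0,R)}a_\delta g^2\dd x\) because \(g_n\to g\) in \(L^2\) and \(a_\delta\) is bounded on \(B(0,R)\).

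\emph{Step 2: passing to the limits in \(\delta\) and \(R\).} Letting \(\delta\to0\) by monotone convergence gives \(\int_{B(0,R)}\be^2|x|^2g^2\dd x\), and letting \(R\to\infty\) gives the claimed inequality. Since \(\sup_nQ_n(g_n)<\infty\) and \(\|g_n\|_2=1\), the liminf is finite, so \(|x|g\in L^2\).

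\emph{Step 3: the form inequality.} Weak convergence in \(H^1\) gives \(\int|\nabla g|^2\le\liminf\int|\nabla g_n|^2\). The term \(-\be d\|g_n\|_2^2\) converges to \(-\be d\|g\|_2^2\) by strong convergence. The liminf of a sum dominates the sum of the liminfs, and combining this with Step 2 yields \(Q_0(g)\le\liminf_n Q_n(g_n)\).

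The main obstacle is Step 1: the Coulomb field \(\nabla U_n\) is not small pointwise, and it can even cancel the confining force near poles. The weak-type bound controls the bad set only in measure, so one needs the higher integrability \(L^{2d/(d-2)}\) of \(g_n\) to show that this set carries vanishing \(L^2\) mass. The truncation \(\delta\) is what makes the lower bound for \(A_n\) uniform on the good set.
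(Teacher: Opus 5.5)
Your proof is correct and follows essentially the same route as the paper. You localize to $B(0,R)$ minus the large-field set $\{|\nabla U_n|>\delta\}$, use the weak-type field bound together with the Sobolev $L^{2d/(d-2)}$ estimate to show that this set carries vanishing $g_n^2$-mass, let $\delta\downarrow0$ and then $R\uparrow\infty$, and finish with weak lower semicontinuity. The only cosmetic difference is on the good set: you bound $A_n$ below by $\tfrac14\bigl((2\beta|x|-\delta)_+\bigr)^2$ and pass $\delta\downarrow0$ by monotone convergence, whereas the paper expands the square to get $A_n\ge\beta^2|x|^2-\beta R\delta$ and absorbs the error using $\|g_n\|_2=1$.
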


\begin{proof}
Fix \(R,\delta>0\). By definition,
\[
 F_n(\delta)
 =\{x\in\R^d:|\nabla U_{\rho_n}(x)|>\delta\}.
\]
Applying Lemma~\ref{lem:weak-field} to the point measure \(\rho_n\) with
threshold \(t=\delta\) gives
\[
 |F_n(\delta)|
 \le
 C_d\left(\frac{\Gamma(\rho_n)}{\delta}\right)^{d/(d-1)}
 =
 C_d\left(\frac{\Gamma_n}{\delta}\right)^{d/(d-1)}.
\]
Since \(\delta>0\) is fixed and \(\Gamma_n\to0\), the right-hand side
converges to zero.
The Sobolev inequality and the uniform \(H^1\) bound imply
\begin{equation}\label{eq:large-field-mass}
 \int_{F_n(\delta)}g_n(x)^2\dd x
 \le |F_n(\delta)|^{2/d}\|g_n\|_{2d/(d-2)}^2
 \longrightarrow0.
\end{equation}
On \(B(0,R)\setminus F_n(\delta)\),
\[
 A_n(x)=\be^2|x|^2+\be x\cdot\nabla U_n(x)
 +\frac14|\nabla U_n(x)|^2
 \ge\be^2|x|^2-\be R\delta.
\]
Since \(A_n\ge0\),
\begin{align*}
 \int_{\R^d}A_n(x)g_n(x)^2\dd x
 &\ge\int_{B(0,R)\setminus F_n(\delta)}
   \be^2|x|^2g_n(x)^2\dd x-\be R\delta
   \int_{B(0,R)\setminus F_n(\delta)}g_n(x)^2\dd x\\
 &\ge\int_{B(0,R)\setminus F_n(\delta)}
   \be^2|x|^2g_n(x)^2\dd x-\be R\delta.
\end{align*}
Strong local \(L^2\) convergence and \eqref{eq:large-field-mass} show that
the remaining integral converges to
\(\int_{B(0,R)}\be^2|x|^2g(x)^2\dd x\).  Thus
\[
 \liminf_{n\to\infty}\int_{\R^d}A_n(x)g_n(x)^2\dd x
 \ge\int_{B(0,R)}\be^2|x|^2g(x)^2\dd x-\be R\delta.
\]
Letting first \(\delta\downarrow0\) and then \(R\uparrow\infty\) proves
the potential liminf.  It also shows that \(|x|g\in L^2(\R^d)\).  Weak
lower semicontinuity of the Dirichlet integral, strong \(L^2\) convergence,
and \eqref{eq:ground-state-form} now give the asserted inequality for
\(Q_0\).
\end{proof}

\begin{lemma}\label{lem:oscillator-kernel}
Let \(Q_0\) be the quadratic form defined in
Lemma~\ref{lem:oscillator-liminf}.  For every \(g\) in its form domain,
\[
 Q_0(g)=\int_{\R^d}|\nabla g(x)+\be xg(x)|^2\dd x\ge0.
\]
Equality holds if and only if \(g\) is a scalar multiple of \(\psi_0\).
\end{lemma}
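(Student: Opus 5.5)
The plan is to establish the square-completion identity first on a core and then extend it by closure. The form domain of \(Q_0\) is \(\{g\in H^1(\R^d): |x|g\in L^2(\R^d)\}\), which is the natural domain arising in Lemma~\ref{lem:oscillator-liminf}. On this domain \(C_c^\infty(\R^d)\) is dense in the norm \(\|\nabla g\|_2^2+\|(1+|x|)g\|_2^2\). To see this, first truncate with the cutoffs \(\chi_R\) from Lemma~\ref{lem:normalized-counterexample}: the error \(\|g\nabla\chi_R\|_2\le (C/R)\|g\|_2\to0\), and the weighted tails vanish by dominated convergence. Then mollify, which is harmless on compact sets. Both sides of the claimed identity are continuous in this norm, since \(|\nabla g+\be xg|^2\le 2|\nabla g|^2+2\be^2|x|^2g^2\). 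It therefore suffices to prove the identity for \(g\in C_c^\infty(\R^d)\).

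For smooth compactly supported \(g\), expand the square:
\[
 \int|\nabla g+\be xg|^2\dd x=\int|\nabla g|^2\dd x+\be^2\int|x|^2g^2\dd x+\be\int x\cdot\nabla(g^2)\dd x .
\]
Integrating by parts gives \(\int x\cdot\nabla(g^2)\dd x=-d\int g^2\dd x\). This is precisely \(Q_0(g)\), and in particular \(Q_0\ge0\).

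For the equality case, first note that \(\psi_0\) lies in the domain and that \(\nabla\psi_0=-\be x\psi_0\), so \(Q_0(\psi_0)=0\). The same holds for its scalar multiples. Conversely, suppose \(Q_0(g)=0\). Then \(\nabla g+\be xg=0\) almost everywhere, with \(g\in H^1\). Set \(u=g\,\ee^{\be|x|^2/2}\). This lies in \(H^1_{\mathrm{loc}}\), because multiplication by a smooth function preserves \(H^1_{\mathrm{loc}}\), and the product rule gives \(\nabla u=\ee^{\be|x|^2/2}(\nabla g+\be xg)=0\) in the distributional sense. Since \(\R^d\) is connected, \(u\) is almost everywhere equal to a constant \(c\). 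Hence \(g=c\,\ee^{-\be|x|^2/2}\), which is a scalar multiple of \(\psi_0\).

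The main obstacle is the density and closure step, namely checking that the identity holds on the full form domain and not merely on smooth functions. The truncation estimate above handles this. The weak-derivative product rule in the equality case is standard.
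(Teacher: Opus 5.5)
Your proposal is correct and follows the paper's route: complete the square and integrate by parts on \(C_c^\infty(\R^d)\), extend by closure, and settle the equality case via \(\nabla(\ee^{\be|x|^2/2}g)=0\) on the connected space \(\R^d\). The only difference is that you spell out two points the paper leaves implicit behind ``closure extends the identity'': the density of \(C_c^\infty(\R^d)\) in \(\{g\in H^1(\R^d):|x|g\in L^2(\R^d)\}\), by truncation and mollification, and the \(H^1_{\mathrm{loc}}\) product rule.
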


\begin{proof}
For \(g\in C_c^\infty(\R^d)\), expansion and integration by parts give
\begin{align*}
 \int_{\R^d}|\nabla g(x)+\be xg(x)|^2\dd x
 &=\int_{\R^d}\bigl(|\nabla g(x)|^2+\be^2|x|^2g(x)^2\bigr)\dd x+\be\int_{\R^d}x\cdot\nabla(g(x)^2)\dd x\\
 &=\int_{\R^d}\bigl(|\nabla g(x)|^2+
   (\be^2|x|^2-\be d)g(x)^2\bigr)\dd x,
\end{align*}
because
\(\int_{\R^d}x\cdot\nabla(g(x)^2)\dd x
=-d\int_{\R^d}g(x)^2\dd x\).  Closure extends the identity to the
oscillator form domain, cf. Lemma~\ref{lem:singular-transform}.  If equality holds, then
\(\nabla g+\be xg=0\) distributionally.  Hence
\(\nabla(\ee^{\be|x|^2/2}g)=0\), so \(g\) is a scalar multiple of
\(\ee^{-\be|x|^2/2}\), equivalently of \(\psi_0\).  The converse follows
by direct substitution.
\end{proof}

An application of the four lemmas above yields the result. Indeed: 

\begin{proof}[Proof of Theorem~\ref{thm:onesite}]
Suppose otherwise.  The negation of the theorem is precisely the hypothesis
of Lemma~\ref{lem:normalized-counterexample}.  Choose the resulting sequences
\((\rho_n,f_n)\).  Then \(\Gamma_n\to0\), \(E_n\to0\),
\(\nu_n(f_n)=0\), and \(\nu_n(f_n^2)=1\).

With the notation of Lemma~\ref{lem:transformed-compactness}, set
\(g_n=f_n\psi_n\).  Then
\[
 \|g_n\|_2^2
 =\int_{\R^d}f_n(x)^2\psi_n(x)^2\dd x
 =\nu_n(f_n^2)=1
\text{ and }
 \langle g_n,\psi_n\rangle
 =\int_{\R^d}f_n(x)\psi_n(x)^2\dd x
 =\nu_n(f_n)=0.
\]
Lemma~\ref{lem:singular-transform} and \eqref{eq:closed-transform} give
\[
 Q_n(g_n)=\cE_{\rho_n}(f_n,f_n)=E_n\longrightarrow0.
\]
Lemma~\ref{lem:transformed-compactness} therefore yields, after extraction,
\[
 g_n\rightharpoonup g\ \text{in }H^1(\R^d),\qquad
 g_n\to g\ \text{in }L^2(\R^d),\qquad
 \|g\|_2=1,
\]
and \(\psi_n\to\psi_0\) in \(L^2(\R^d)\).  Hence
\[
 |\langle g,\psi_0\rangle|
 \le\|g-g_n\|_2\|\psi_0\|_2
 +\|g_n\|_2\|\psi_0-\psi_n\|_2\longrightarrow0,
\]
so \(g\perp\psi_0\).

Lemma~\ref{lem:oscillator-liminf} gives
\[
 Q_0(g)\le\liminf_{n\to\infty}Q_n(g_n)=0.
\]
Lemma~\ref{lem:oscillator-kernel} gives \(Q_0(g)\ge0\), with equality only
for scalar multiples of \(\psi_0\).  Thus \(g\in\operatorname{span}\{\psi_0\}\).
Since \(g\perp\psi_0\), this forces \(g=0\), contradicting \(\|g\|_2=1\).
\end{proof}

\begin{remark}[Role of Coulomb harmonicity]\label{rem:coulomb-harmonicity}
The identity \(\Delta U=0\) off the poles is the Coulomb-specific input in
\eqref{eq:closed-transform}.  No geometric regularity or separation of the
poles enters the argument.
\end{remark}

\section{Moving one pole and Dobrushin tensorization}

We turn to the dependence of a one-site law on a single added or moved pole.
Only the normalization and moment bounds from Proposition
\ref{prop:basic-onesite} enter the estimate.

\begin{proposition}
\label{prop:moving-center}
Let \(\rho\) be a finite positive point measure, and let \(a\ge0\) be the
charge of an additional Coulomb pole.  Assume that for $M_{d,\beta}$ the constant in Lemma \ref{lem:gaussian-moment}
\[
 \Gamma(\rho)+a\le\frac{1}{2M_{d,\be}}.
\]
For \(y\in\R^d\), define
\[
 \nu^y:=\nu_{\rho+a\delta_y}, \qquad\text{i.e. }
 \nu^y(dx)
 =
 \frac{\ee^{-a|x-y|^{2-d}}}{
   \int_{\R^d}\ee^{-a|u-y|^{2-d}}\dd\nu_\rho(u)}
 \nu_\rho(dx).
\]
Then, for every \(y\in\R^d\),
\begin{equation}\label{eq:add-remove-TV}
 \|\nu^y-\nu_\rho\|_{\TV}
 \le4M_{d,\be}a.
\end{equation}
Moreover, for every \(y,y'\in\R^d\),
\begin{equation}\label{eq:move-TV}
 \|\nu^y-\nu^{y'}\|_{\TV}
 \le8M_{d,\be}a.
\end{equation}
\end{proposition}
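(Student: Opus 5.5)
The plan is to mimic the total-variation argument in Proposition~\ref{prop:basic-onesite}, but with the base measure \(\nu_\rho\) in place of \(\gamma_\be\) and the single-pole tilt \(q_y(x)=\ee^{-a|x-y|^{2-d}}\) in place of \(\ee^{-U_\rho}\). We write
\[
 \nu^y=\widetilde Z_y^{-1}q_y\,\nu_\rho,
 \qquad
 \widetilde Z_y:=\int_{\R^d}q_y\dd\nu_\rho .
\]
The first step is a lower bound on \(\widetilde Z_y\). Using \(\ee^{-u}\ge1-u\) and the moment bound of Proposition~\ref{prop:basic-onesite}, which applies because \(\Gamma(\rho)\le\Gamma(\rho)+a\le(2M_{d,\be})^{-1}\), we get
\[
 \widetilde Z_y\ge1-a\sup_{y}\int|x-y|^{2-d}\dd\nu_\rho(x)\ge1-2M_{d,\be}a .
\]
This bound alone only gives \(\widetilde Z_y\ge0\) when \(a\) is near \((2M_{d,\be})^{-1}\), which is too weak.

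To get the constant \(4M_{d,\be}\) cleanly, I would instead use that \(\nu^y=\nu_{\rho+a\delta_y}\) is itself a one-site measure relative to \(\gamma_\be\) with total charge at most \((2M_{d,\be})^{-1}\). Then
\[
 \widetilde Z_y=\frac{Z_{\rho+a\delta_y}}{Z_\rho},
 \qquad Z_{\rho+a\delta_y}\ge\tfrac12,\quad Z_\rho\le1,
\]
so \(\widetilde Z_y\ge\frac12\). Next, for \(|h|\le1\), exactly as in the earlier proof,
\[
 |\nu^y(h)-\nu_\rho(h)|\le\widetilde Z_y^{-1}\int|q_y-\widetilde Z_y|\dd\nu_\rho\le2\widetilde Z_y^{-1}\int(1-q_y)\dd\nu_\rho .
\]
Combining this with \(\widetilde Z_y^{-1}\le2\), \(1-q_y\le a|x-y|^{2-d}\), and the \(\nu_\rho\)-moment bound would give \(8M_{d,\be}a\), which is a factor \(2\) too large.

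So the main obstacle is recovering the sharp constant \(4\). I would first try measuring the integral against \(\gamma_\be\) rather than \(\nu_\rho\). Writing \(\nu_\rho=Z_\rho^{-1}\ee^{-U_\rho}\gamma_\be\), we have
\[
 \int(1-q_y)\dd\nu_\rho=Z_\rho^{-1}\int\ee^{-U_\rho}(1-q_y)\dd\gamma_\be
 \le Z_\rho^{-1}a M_{d,\be},
\]
and \(\widetilde Z_y^{-1}=Z_\rho/Z_{\rho+a\delta_y}\). The factors \(Z_\rho\) then cancel, leaving
\[
 \|\nu^y-\nu_\rho\|_{\TV}\le 2\,Z_{\rho+a\delta_y}^{-1}\,aM_{d,\be}\le4M_{d,\be}a,
\]
which is \eqref{eq:add-remove-TV}. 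This cancellation is the key trick.

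Finally, \eqref{eq:move-TV} follows from the triangle inequality through \(\nu_\rho\):
\[
 \|\nu^y-\nu^{y'}\|_{\TV}\le\|\nu^y-\nu_\rho\|_{\TV}+\|\nu_\rho-\nu^{y'}\|_{\TV}\le8M_{d,\be}a .
\]
The case \(a=0\) is trivial, and coincident poles cause no difficulty since only the integrals above are used.
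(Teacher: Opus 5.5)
Your proof is correct. For \eqref{eq:add-remove-TV} it takes a genuinely different and shorter route than the paper; the triangle-inequality step for \eqref{eq:move-TV} is the same.

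The paper truncates the new pole to $\phi_R=\min\{|x-y|^{2-d},R\}$ and interpolates $\nu_{t,R}\propto\ee^{-ta\phi_R}\nu_\rho$. It computes $\frac{\dd}{\dd t}\nu_{t,R}(h)=-a\operatorname{Cov}_{\nu_{t,R}}(h,\phi_R)$ and bounds the covariance by $2\nu_{t,R}(\phi_R)$. It then uses $\frac{\dd}{\dd t}\nu_{t,R}(\phi_R)=-a\Var_{\nu_{t,R}}(\phi_R)\le0$ to reduce to $\nu_\rho(\phi_R)\le2M_{d,\be}$. Finally it removes the truncation by $L^1(\nu_\rho)$-convergence of the densities.

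You instead reuse the centered-density estimate from Proposition~\ref{prop:basic-onesite}, with base measure $\nu_\rho$. The decisive step is your cancellation. Integrating $1-q_y$ against $\gamma_\be$, using $\ee^{-U_\rho}\le1$, produces a factor $Z_\rho^{-1}$ that cancels against $\widetilde Z_y^{-1}=Z_\rho/Z_{\rho+a\delta_y}$. The hypothesis then enters only through $Z_{\rho+a\delta_y}\ge\frac12$. This avoids truncation, differentiation and limiting arguments, and even gives the slightly sharper bound $2M_{d,\be}a/Z_{\rho+a\delta_y}$. Your preliminary estimates ($\widetilde Z_y\ge1-2M_{d,\be}a$ and the version with constant $8$) are superseded and can be deleted.

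In exchange, the paper's argument needs only $\Gamma(\rho)\le(2M_{d,\be})^{-1}$, with no restriction on $a$. It also reduces everything to a uniform (centered) first moment of the added potential along the interpolation. That is why the same scheme is recycled in the planar-log appendix (Proposition~\ref{prop:planar-moving-center}). There the tilt $|x-y|^a$ is unbounded and $\ee^{-U^{\log}_\rho}$ is not bounded by one. So the bound $\int|q_y-\widetilde Z_y|\dd\nu_\rho\le2\int(1-q_y)\dd\nu_\rho$ and the comparison behind your cancellation are unavailable, whereas the covariance estimate goes through once the centered log-moment bound is substituted.
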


\begin{proof}
We first prove the add-remove estimate \eqref{eq:add-remove-TV} for a fixed
pole location \(y\).  The proof uses a bounded truncation of the singular
kernel, differentiates along the interpolation that turns on the pole, and
then removes the truncation by dominated convergence.

Fix \(y\in\R^d\).  For \(R>0\), define the bounded truncation
\[
 \phi_R(x):=\min\{|x-y|^{2-d},R\},
 \qquad x\in\R^d,
\]
where we set \(\phi_R(y)=R\).  For \(0\le t\le1\), set
\[
 Z_{t,R}
 :=
 \int_{\R^d}\ee^{-ta\phi_R(u)}\dd\nu_\rho(u) \text{ and }
 \nu_{t,R}(d x)
 :=
 Z_{t,R}^{-1}\ee^{-ta\phi_R(x)}\nu_\rho(d x).
\]
Since
\[
 \ee^{-taR}\le \ee^{-ta\phi_R(x)}\le1,
\]
one has \(0<Z_{t,R}\le1\), so \(\nu_{t,R}\) is a well-defined
probability measure.  Notice also that
\[
 \nu_{0,R}=\nu_\rho.
\]

We first obtain a bound on the \(\phi_R\)-moment of \(\nu_{t,R}\).
Because \(0\le\phi_R\le R\), differentiation under the integral is
justified by dominated convergence.  Thus
\begin{align*}
 \frac{\dd}{\dd t}Z_{t,R}
 &=-a\int_{\R^d}\phi_R(u)\ee^{-ta\phi_R(u)}
   \dd\nu_\rho(u)=-aZ_{t,R}\int_{\R^d}\phi_R(u)\dd\nu_{t,R}(u)
  =-aZ_{t,R}\nu_{t,R}(\phi_R).
\end{align*}

Define
\[
 N_{t,R}
 :=
 \int_{\R^d}\phi_R(u)\ee^{-ta\phi_R(u)}
 \dd\nu_\rho(u).
\]
Then $\nu_{t,R}(\phi_R)=\frac{N_{t,R}}{Z_{t,R}}.$ Again using the boundedness of \(\phi_R\), we may differentiate \(N_{t,R}\):
\begin{align*}
 \frac{\dd}{\dd t}N_{t,R}
 &=
 -a\int_{\R^d}\phi_R(u)^2\ee^{-ta\phi_R(u)}
 \dd\nu_\rho(u)=
 -aZ_{t,R}\nu_{t,R}(\phi_R^2).
\end{align*}
The quotient rule therefore gives
\begin{align*}
 \frac{\dd}{\dd t}\nu_{t,R}(\phi_R)
 &=
 \frac{N_{t,R}'Z_{t,R}-N_{t,R}Z_{t,R}'}
      {Z_{t,R}^2}=
 -a\nu_{t,R}(\phi_R^2)
 +a\nu_{t,R}(\phi_R)^2=
 -a\Var_{\nu_{t,R}}(\phi_R)
 \le0.
\end{align*}
Hence the map \(t\mapsto\nu_{t,R}(\phi_R)\) is nonincreasing.  Since
\(\nu_{0,R}=\nu_\rho\), it follows that
\[
 \nu_{t,R}(\phi_R)
 \le\nu_{0,R}(\phi_R)
 =\nu_\rho(\phi_R).
\]
Moreover,
\[
 0\le\phi_R(x)\le |x-y|^{2-d}
 \qquad\text{for }x\ne y.
\]
Since \(\nu_\rho\) is absolutely continuous with respect to Lebesgue
measure, the point \(y\) has zero \(\nu_\rho\)-measure.  Proposition
\ref{prop:basic-onesite} therefore gives
\begin{align*}
 \nu_{t,R}(\phi_R)
 &\le
 \int_{\R^d}\phi_R(x)\dd\nu_\rho(x)\le
 \int_{\R^d}|x-y|^{2-d}\dd\nu_\rho(x)\le 2M_{d,\be}.
\end{align*}
This estimate is uniform in \(t\in[0,1]\), \(R>0\), and \(y\in\R^d\). To estimate the change along the interpolation
\(t\mapsto\nu_{t,R}\), let \(h:\R^d\to\R\) be bounded and measurable.
Write
\[
 H_{t,R}
 :=
 \int_{\R^d}h(u)\ee^{-ta\phi_R(u)}\dd\nu_\rho(u),
\]
so that $\nu_{t,R}(h)=\frac{H_{t,R}}{Z_{t,R}}.$
Since \(h\) and \(\phi_R\) are bounded,
\begin{align*}
 \frac{\dd}{\dd t}H_{t,R}
 &=
 -a\int_{\R^d}h(u)\phi_R(u)
 \ee^{-ta\phi_R(u)}\dd\nu_\rho(u)=
 -aZ_{t,R}\nu_{t,R}(h\phi_R).
\end{align*}
Using the quotient rule and the preceding formula for \(Z_{t,R}'\), we
obtain
\begin{align*}
 \frac{\dd}{\dd t}\nu_{t,R}(h)
 &=
 \frac{H_{t,R}'Z_{t,R}-H_{t,R}Z_{t,R}'}
      {Z_{t,R}^2}=
 -a\nu_{t,R}(h\phi_R)
 +a\nu_{t,R}(h)\nu_{t,R}(\phi_R)=
 -a\operatorname{Cov}_{\nu_{t,R}}(h,\phi_R).
\end{align*}

Assume now that \(\|h\|_\infty\le1\).  Since \(\nu_{t,R}\) is a
probability measure $|\nu_{t,R}(h)|\le1$
and consequently
\[
 |h(x)-\nu_{t,R}(h)|\le2
 \qquad\text{for every }x\in\R^d.
\]
Because \(\phi_R\ge0\),
\begin{align*}
 \left|
 \operatorname{Cov}_{\nu_{t,R}}(h,\phi_R)
 \right|
 &=
 \left|
 \int_{\R^d}
 \bigl(h(x)-\nu_{t,R}(h)\bigr)\phi_R(x)
 \dd\nu_{t,R}(x)
 \right|\le
 \int_{\R^d}
 |h(x)-\nu_{t,R}(h)|\phi_R(x)
 \dd\nu_{t,R}(x)\\
 &\le
 2\int_{\R^d}\phi_R(x)\dd\nu_{t,R}(x)=
 2\nu_{t,R}(\phi_R)\le4M_{d,\be}.
\end{align*}
It follows that
\[
 \left|
 \frac{\dd}{\dd t}\nu_{t,R}(h)
 \right|
 \le4M_{d,\be}a.
\]
Integrating this estimate over \(t\in[0,1]\), and using
\(\nu_{0,R}=\nu_\rho\), gives
\begin{align*}
 |\nu_{1,R}(h)-\nu_\rho(h)|
 &=
 \left|
 \int_0^1
 \frac{\dd}{\dd t}\nu_{t,R}(h)\dd t
 \right|\le
 \int_0^1
 \left|
 \frac{\dd}{\dd t}\nu_{t,R}(h)
 \right|\dd t \le4M_{d,\be}a.
\end{align*}
Taking the supremum over all measurable \(h\) with
\(\|h\|_\infty\le1\) yields $\|\nu_{1,R}-\nu_\rho\|_{\TV}
 \le4M_{d,\be}a.$

It remains to remove the truncation.  Set $\phi(x):=|x-y|^{2-d}$ for $x\ne y.$
For every \(x\ne y\),
\[
 \phi_R(x)\uparrow\phi(x)
 \qquad\text{and hence}\qquad
 \ee^{-a\phi_R(x)}\longrightarrow\ee^{-a\phi(x)}
 \quad\text{as }R\to\infty.
\]
Moreover, $0\le\ee^{-a\phi_R(x)}\le1.$
Since \(\nu_\rho(\{y\})=0\), dominated convergence gives
\[
 Z_{1,R}
 =
 \int_{\R^d}\ee^{-a\phi_R(x)}\dd\nu_\rho(x)
 \longrightarrow
 Z_1
 :=
 \int_{\R^d}\ee^{-a|x-y|^{2-d}}\dd\nu_\rho(x).
\]
The limiting normalizing constant is strictly positive because the
integrand is strictly positive on \(\R^d\setminus\{y\}\), a set of full
\(\nu_\rho\)-measure.  Thus \(Z_1>0\).

The densities of \(\nu_{1,R}\) and \(\nu^y\) with respect to \(\nu_\rho\)
are
\[
 p_R(x):=Z_{1,R}^{-1}\ee^{-a\phi_R(x)},
 \qquad
 p(x):=Z_1^{-1}\ee^{-a|x-y|^{2-d}},
\]
respectively.  We have
\begin{align*}
 \int_{\R^d}|p_R(x)-p(x)|\dd\nu_\rho(x)
 &\le
 \left|Z_{1,R}^{-1}-Z_1^{-1}\right|
 \int_{\R^d}\ee^{-a\phi_R(x)}\dd\nu_\rho(x)+
 Z_1^{-1}
 \int_{\R^d}
 \left|
 \ee^{-a\phi_R(x)}
 -\ee^{-a|x-y|^{2-d}}
 \right|
 \dd\nu_\rho(x)\\
 &=
 Z_{1,R}\left|Z_{1,R}^{-1}-Z_1^{-1}\right|+Z_1^{-1}
 \int_{\R^d}
 \left|
 \ee^{-a\phi_R(x)}
 -\ee^{-a|x-y|^{2-d}}
 \right|
 \dd\nu_\rho(x).
\end{align*}
The first term converges to zero because \(Z_{1,R}\to Z_1>0\), and the
second converges to zero by dominated convergence.  Hence
\[
 \int_{\R^d}|p_R(x)-p(x)|\dd\nu_\rho(x)\longrightarrow0.
\]
Since both measures are absolutely continuous with respect to \(\nu_\rho\),
with densities \(p_R\) and \(p\), respectively, the definition of the full
total-variation norm gives
\begin{align*}
 \|\nu_{1,R}-\nu^y\|_{\TV}
 &=
 \sup_{\|h\|_\infty\le1}
 \left|
 \int_{\R^d}h(x)\bigl(p_R(x)-p(x)\bigr)\dd\nu_\rho(x)
 \right|\le
 \int_{\R^d}|p_R(x)-p(x)|\dd\nu_\rho(x).
\end{align*}
The \(L^1(\nu_\rho)\)-convergence of the densities established above now
implies
\[
 \|\nu_{1,R}-\nu^y\|_{\TV}\longrightarrow0.
\]
Therefore,  $\|\nu^y-\nu_\rho\|_{\TV}
 \le
 \|\nu^y-\nu_{1,R}\|_{\TV}
 +\|\nu_{1,R}-\nu_\rho\|_{\TV}.$
Letting \(R\to\infty\) gives $\|\nu^y-\nu_\rho\|_{\TV}
 \le4M_{d,\be}a,$
which proves \eqref{eq:add-remove-TV}.

Finally, applying \eqref{eq:add-remove-TV} separately to the poles at
\(y\) and \(y'\), and using the triangle inequality, gives
\begin{align*}
 \|\nu^y-\nu^{y'}\|_{\TV}
 &\le
 \|\nu^y-\nu_\rho\|_{\TV}
 +\|\nu_\rho-\nu^{y'}\|_{\TV}\le
 4M_{d,\be}a+4M_{d,\be}a=8M_{d,\be}a.
\end{align*}
This proves \eqref{eq:move-TV}.
\end{proof}
\begin{definition}
\label{def:Dobrushinmatrix}
Let \(T\) be a finite set and let \(\mu\) be a probability measure on
\((\R^d)^T\).

For \(i\in T\), write $x_{-i}:=(x_k)_{k\in T\setminus\{i\}}.$
For \(u\in\R^d\), let \(x^{i\to u}\) denote the configuration obtained
from \(x\) by replacing \(x_i\) with \(u\).

We choose, for every \(i\in T\), a version $\mu_i(\dd u\mid x_{-i})$
of the conditional distribution of \(x_i\) given \(x_{-i}\).  Thus, for
every bounded measurable \(G:(\R^d)^T\to\R\),
\[
 \int_{(\R^d)^T}G(x)\dd\mu(x)
 =
 \int_{(\R^d)^T}\int_{\R^d}
 G(x^{i\to u})\dd\mu_i(u\mid x_{-i})\dd\mu(x).
\]

For a measurable \(F:(\R^d)^T\to\R\), define its conditional variance at
site \(i\), with \(x_{-i}\) fixed, by
\begin{align*}
 \Var_i(F)(x_{-i})
 &:=
 \int_{\R^d}F(x^{i\to u})^2
 \dd\mu_i(u\mid x_{-i})-
 \left(
 \int_{\R^d}F(x^{i\to u})
 \dd\mu_i(u\mid x_{-i})
 \right)^2,
\end{align*}
whenever the two integrals are finite.

Equip \(\R^d\) with the trivial metric $d_0(u,v):=\mathbf1_{\{u\ne v\}}.$
For probability measures \(\sigma\) and \(\tau\) on \(\R^d\), define the $1$-Wasserstein distance
\[
 W_{1,d_0}(\sigma,\tau)
 :=
 \inf_{\pi\in\Pi(\sigma,\tau)}
 \int_{\R^d\times\R^d}d_0(u,v)\dd\pi(u,v),
\]
where \(\Pi(\sigma,\tau)\) denotes the set of couplings of \(\sigma\) and
\(\tau\).

For \(i\ne j\), define 
\begin{equation}\label{eq:wu-matrix}
 c_{ij}
 :=
 \sup
 \left\{
 W_{1,d_0}\bigl(
 \mu_i(\cdot\mid x_{-i}),
 \mu_i(\cdot\mid x'_{-i})
 \bigr):
 \begin{array}{l}
 x_{-i},x'_{-i}\in(\R^d)^{T\setminus\{i\}},\\
 x_k=x'_k\text{ for every }k\in T\setminus\{i,j\}
 \end{array}
 \right\},
\end{equation}
and set \(c_{ii}:=0\).  Thus \(c_{ij}\) is the largest change in the
conditional law at site \(i\) that can be caused by changing only the
boundary coordinate \(x_j\).

The matrix $C:=(c_{ij})_{i,j\in T}$
is called the \emph{Dobrushin interdependence matrix}.  Its spectral radius is
\[\rho(C) :=
 \max\{|\lambda|:\lambda\text{ is an eigenvalue of }C\}.
\]
\end{definition}

\underline{Relation with Wu’s interdependence coefficient in \cite{Wu}.} 
The coefficient $c_{ij}$ defined above in \eqref{eq:wu-matrix}, is precisely the interdependence coefficient defined in \cite[(2.3)]{Wu}  when the one-site space is equipped with the discrete metric $d_0(u,v):=\mathbf1_{\{u\ne v\}}.$ 
%Indeed, As usual for a Lipschitz quotient, Wu takes the supremum over boundary configurations differing only at site $j$ with \(x_j\ne x'_j\), and divides the Wasserstein distance by \(d_0(x_j, x_j')\). 
%The coefficient in \cite[(2.3)]{Wu} is exactly \eqref{eq:wu-matrix} for the metric \(d_0\).  
 Indeed, as usual for a Lipschitz quotient, Wu's supremum is taken
only over pairs
satisfying
\(
 d(x_j,x'_j)>0.\)
For the trivial metric \(d_0\), this condition is equivalent to
\(x_j\ne x'_j\).  Thus Wu's coefficient is
\[
 c_{ij}^{\mathrm{Wu}}
=\sup_{\substack{x,x'\in(\R^d)^T\\
 x_k=x'_k\text{ for every }k\ne j\\
 x_j\ne x'_j}}
 \frac{
 W_{1,d_0}\bigl(\mu_i(\cdot\mid x),
                 \mu_i(\cdot\mid x')\bigr)
 }{
 d_0(x_j,x'_j)
 }.
\]
Since \(d_0(x_j,x'_j)=1\) whenever \(x_j\ne x'_j\), this becomes
\[
 c_{ij}^{\mathrm{Wu}}
 =
 \sup_{\substack{x,x'\in(\R^d)^T\\
 x_k=x'_k\text{ for every }k\ne j\\
 x_j\ne x'_j}}
 W_{1,d_0}\bigl(\mu_i(\cdot\mid x),
                 \mu_i(\cdot\mid x')\bigr).
\]
If \(x_j=x'_j\), then \(x=x'\), so the numerator also vanishes.  Such
pairs are excluded from the Lipschitz quotient and do not affect the
supremum.

Moreover, for the trivial metric $d_0$, one has, by duality, see for instance \cite[below (2.1)]{Wu}
\begin{align} \label{eq: Wasserstein_TV}
 W_{1,d_0}(\sigma,\tau)
 =
 \sup_{A\in\mathcal B(\R^d)}|\sigma(A)-\tau(A)|
 =
 \frac12\|\sigma-\tau\|_{\TV}.
\end{align}
The proof of Proposition~\ref{prop:wu-criterion} recalls this identity.

We use the following form of Wu's tensorization result
\cite[Theorem 2.1]{Wu}.

\begin{proposition}[Dobrushin--Wu conditional tensorization]
\label{prop:wu-criterion}
Let \(T\) be finite, let \(\mu\) be a probability measure on
\((\R^d)^T\), and let \(C\) be its Dobrushin interdependence matrix as
defined in Def. \ref{def:Dobrushinmatrix}.

Assume that $\rho(C)<1.$
Suppose, moreover, that there is a constant \(K<\infty\), independent of
the site and of the boundary configuration, such that
\begin{equation}\label{eq:wu-conditional-PI}
 \Var_{\mu_i(\cdot\mid x_{-i})}(h)
 \le
 K\int_{\R^d}|\nabla h(u)|^2
 \dd\mu_i(u\mid x_{-i})
\end{equation}
for every \(i\in T\), every
\(x_{-i}\in(\R^d)^{T\setminus\{i\}}\), and every
\(h\in C_c^\infty(\R^d)\).

Then
\begin{equation}\label{eq:wu-gradient-PI}
 \Var_\mu(F)
 \le
 \frac{K}{1-\rho(C)}
 \sum_{i\in T}
 \int_{(\R^d)^T}|\nabla_iF(x)|^2\dd\mu(x)
\end{equation}
for every \(F\in C_c^\infty((\R^d)^T)\).  Here \(\nabla_iF\) denotes the
Euclidean gradient with respect to \(x_i\).

The inequality extends by form closure to the domain of the closed global
Dirichlet form.
\end{proposition}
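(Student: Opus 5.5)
This proposition is a specialization of \cite[Theorem 2.1]{Wu}. The plan is therefore to check that our setting fits Wu's hypotheses and that the constants match, not to reprove the tensorization.

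\emph{Step 1: the Dobrushin coefficient.} First I would verify the duality identity \eqref{eq: Wasserstein_TV} for the discrete metric \(d_0\). Any coupling \(\pi\) of \(\sigma,\tau\) satisfies \(\pi(u\ne v)\ge\sigma(A)-\tau(A)\) for every Borel \(A\). This gives \(W_{1,d_0}\ge\sup_A|\sigma(A)-\tau(A)|\). For the reverse inequality I would take the maximal coupling: keep the common part \(\sigma\wedge\tau\) on the diagonal and couple the normalized excess parts independently. Its mismatch probability is \(\frac12\|\sigma-\tau\|_{\TV}\) in the convention \eqref{eq:TV-convention}. Together with the discussion preceding the proposition, this identifies \(c_{ij}\) in \eqref{eq:wu-matrix} with Wu's interdependence coefficient for the metric \(d_0\). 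Hence \(C\) is exactly Wu's matrix, and \(\rho(C)<1\) is his Dobrushin condition.

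\emph{Step 2: applying Wu's theorem.} Wu's Theorem 2.1 has the following structure. If every one-site conditional law satisfies a Poincar\'e inequality with a uniform constant, and the interdependence matrix for a metric compatible with the one-site Poincar\'e inequality has spectral radius below one, then \(\mu\) satisfies the Poincar\'e inequality for the sum of the one-site Dirichlet forms with constant \(K/(1-\rho(C))\).

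Here the one-site Dirichlet form is \(\int|\nabla h|^2\dd\mu_i\). Its conditional Poincar\'e inequality is \eqref{eq:wu-conditional-PI}, stated on \(C_c^\infty\). This extends by form closure to the functions \(u\mapsto F(x^{i\to u})\) with \(F\in C_c^\infty((\R^d)^T)\), which are already in \(C_c^\infty(\R^d)\). The total form \(\sum_i\int|\nabla_iF|^2\dd\mu\) is exactly the right side of \eqref{eq:wu-gradient-PI}.

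The martingale/decomposition argument in Wu's proof requires that \(\mu_i(\cdot\mid x_{-i})\) be a regular version of the conditional law. This is built into Definition~\ref{def:Dobrushinmatrix}. The bound then follows for \(F\in C_c^\infty\).

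\emph{Step 3: closure.} Both sides of \eqref{eq:wu-gradient-PI} are continuous in the form norm. The global closed form is the closure of \(C_c^\infty\), so for \(F_n\to F\) in the form norm I would pass to the limit: \(\Var_\mu(F_n)\to\Var_\mu(F)\) by \(L^2\) convergence, and the energies converge.

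\emph{Main obstacle.} The delicate point is Step 2. Wu's theorem is phrased for an abstract metric and a one-site Dirichlet form, and one has to check that using the trivial metric \(d_0\) is compatible with his hypothesis linking the metric to the conditional gap. My understanding is that for Poincar\'e (variance) tensorization Wu's Theorem 2.1 needs only the \(W_{1,d}\) coefficients with \(d=d_0\), via bounding covariances of bounded conditional means. If instead his statement requires a Lipschitz-type compatibility, I would fall back on \(d_0\), for which every bounded function is \(2\|\cdot\|_\infty\)-Lipschitz. I would verify carefully that no extra factor \(2\) enters from the TV convention. With the identification in Step 1 none should appear.
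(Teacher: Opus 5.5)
Your proposal is correct and follows the paper's route. You identify $c_{ij}$ with Wu's $d_0$-coefficient via the maximal coupling, invoke \cite[Theorem 2.1]{Wu}, apply \eqref{eq:wu-conditional-PI} to the sections $u\mapsto F(x^{i\to u})$, integrate using the disintegration property of $\mu_i(\cdot\mid x_{-i})$, and close.

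The one refinement is how the paper uses Wu's Theorem 2.1: in its pure conditional-variance form $(1-\rho(C))\Var_\mu(F)\le\int\sum_i\Var_i(F)(x_{-i})\dd\mu(x)$ (his (2.4)), which involves no one-site Dirichlet form. Your worry about a ``compatibility'' between the metric and the one-site Poincar\'e inequality is therefore moot, and Wu's second-moment hypothesis holds trivially because $d_0\le1$.
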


\begin{proof}
For the trivial metric, every \(\pi\in\Pi(\sigma,\tau)\) has transportation
cost
\[
 \int_{\R^d\times\R^d}d_0(u,v)\dd\pi(u,v)=\pi\{u\ne v\}.
\]
The maximal-coupling formula therefore gives
\[
 W_{1,d_0}(\sigma,\tau)
 =\inf_\pi\pi\{u\ne v\}
 =\sup_A|\sigma(A)-\tau(A)|
 =\tfrac12\|\sigma-\tau\|_{\TV}.
\]
We aim to apply \cite[Theorem 2.1]{Wu} which requires a second moment assumption. This is automatic because \(d_0\le1\).  All
objects in this paragraph are purely measurable: the regular conditional laws are kernels on the Borel space \(\R^d\), and the coefficient above is
defined through couplings, so no differentiability or continuity of the
kernels is required.

We recall the definition of  the conditional variance of \(F\) at site \(i\), with
the boundary \(x_{-i}\) fixed, by
\begin{align*}
 \Var_i(F)(x_{-i})
 &:={}
 \int_{\R^d}F(x^{i\to u})^2\dd\mu_i(u\mid x_{-i})-
 \left(\int_{\R^d}F(x^{i\to u})\dd\mu_i(u\mid x_{-i})\right)^2.
\end{align*}
 Therefore \cite[Theorem 2.1, equation (2.4)]{Wu} gives
\begin{equation*}
 (1-\rho(C))\Var_\mu(F)
 \le\int_{(\R^d)^T}\sum_{i\in T}
 \Var_i(F)(x_{-i})\dd\mu(x).
\end{equation*}
Applying \eqref{eq:wu-conditional-PI} to the function
\(u\mapsto F(x^{i\to u})\), with \(x_{-i}\) fixed, gives
\begin{align*}
 \int_{(\R^d)^T}\sum_{i\in T}\Var_i(F)(x_{-i})\dd\mu(x)
 &\le K\int_{(\R^d)^T}\sum_{i\in T}\int_{\R^d}
   |\nabla_iF(x^{i\to u})|^2
   \dd\mu_i(u\mid x_{-i})\dd\mu(x)\\
 &=K\sum_{i\in T}\int_{(\R^d)^T}|\nabla_iF(x)|^2\dd\mu(x),
\end{align*}
where the last equality is the defining property of conditional
expectation.  Substitution in the preceding inequality proves
\eqref{eq:wu-gradient-PI}. 

This is the same two-step passage used for
Theorem 2.2 of \cite{Wu}; we record it separately because the Dobrushin
matrix here uses the trivial metric whereas the conditional form uses the
Euclidean gradient.  Theorem 2.1 explicitly allows a measurable one-site
space equipped with the trivial metric, so no topological separability of
\((\R^d,d_0)\) is required.
\end{proof}

\section{Many-particle proof}

Since our Poincar\'e bound will only be effective for $N$ large enough, we first prove that each fixed finite particle system has a positive, although not necessarily uniform or quantitative, Poincar\'e gap.  

Thus in the following lemma we treat the finitely many $N$-case by a fixed-particle gap observation.
\begin{lemma}
\label{lem:fixed-N-gap}
Fix \(N\ge2\), \(a\ge0\), and \(d\ge3\).  The probability measure
\begin{equation*}
 \mu_{N,a}(\dd z)=\mathcal Z_{N,a}^{-1}
 \exp\left[-\be\sum_{i=1}^N|z_i|^2
 -a\sum_{i<j}|z_i-z_j|^{2-d}\right]\dd z
\end{equation*}
where \(\mathcal Z_{N,a}\) is the normalizing constant, has a positive
Poincar\'e gap.
\end{lemma}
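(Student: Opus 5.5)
The plan is to mimic the one-site argument of Section~2 on the configuration space $\R^{dN}$, with fixed $N$ and $a$. We show that the closed Dirichlet form has compact resolvent and a one-dimensional kernel. Together these give a spectral gap, because the spectrum is then discrete with $0$ a simple eigenvalue.

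\emph{Step 1 (ground-state transform).} Set $H(z)=\be|z|^2+U(z)$ with $U(z)=a\sum_{i<j}|z_i-z_j|^{2-d}$. Each term $|z_i-z_j|^{2-d}$ is harmonic in $z_i$ and in $z_j$ off the diagonal $\{z_i=z_j\}$. Hence $\Delta_{\R^{dN}}U=0$ off the collision set $\mathcal S=\bigcup_{i<j}\{z_i=z_j\}$, and so $\Delta H=2\be dN$ there.

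We repeat the proof of Lemma~\ref{lem:singular-transform} with cutoffs in the transverse variable $|z_i-z_j|$ around each diagonal. The diagonal has codimension $d\ge3$, so the annulus $\varepsilon<|z_i-z_j|<2\varepsilon$, intersected with a compact set, has measure $O(\varepsilon^{d})$. The weight contributes a factor at most $\exp(-a(2\varepsilon)^{2-d})$. Together with $|\nabla\eta_\varepsilon|^2\lesssim\varepsilon^{-2}$, the cutoff cost tends to zero exactly as in \eqref{eq:pole-cutoff-cost}. Near multiple collisions the product of the pairwise cutoffs is handled as before. Thus $C_c^\infty(\R^{dN}\setminus\mathcal S)$ is a core, and
\[
 \cE(f,f)=\int_{\R^{dN}}\bigl(|\nabla g|^2+(A-\be dN)g^2\bigr)\dd z,
 \qquad g=f\psi,\quad A=\tfrac14|\nabla H|^2 .
\]

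\emph{Step 2 (compact resolvent).} Take a sequence $g_n$ bounded in the transformed form norm. It is then bounded in $H^1(\R^{dN})$, and hence in $L^{2dN/(dN-2)}$ by Sobolev. Rellich--Kondrachov gives local strong $L^2$ compactness, so it remains to prove uniform $L^2$ tightness.

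Write $\nabla U$ as a sum of pairwise terms. Each term has size $\lesssim a|z_i-z_j|^{1-d}$, which lies in weak-$L^{d/(d-1)}$ in the transverse variable. Consequently the bad set
\[
 B(R)=\{|z|>R,\ |\nabla U(z)|>\be|z|\}\subset\{|\nabla U|>\be R\}
\]
satisfies $|B(R)\cap\{|z|<R'\}|\to0$. To get control of the full, unbounded set I will instead use $\{|\nabla U|>\be R\}\subset\bigcup_{i<j}\{|z_i-z_j|<r_R\}$ with $r_R\to0$. I then bound the $g_n$-mass there by Hölder in the transverse variable only (Sobolev on slices), or by Hardy's inequality $\int g^2/|z_i-z_j|^2\lesssim\int|\nabla g|^2$, which holds since $d\ge3$. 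The Hardy route gives mass $\lesssim r_R^2\|\nabla g_n\|_2^2$ on the tubes, which is small uniformly in $n$. Off $B(R)$ we have $A\ge\be^2R^2/4$, as in Lemma~\ref{lem:transformed-compactness}. Together these give $\sup_n\int_{|z|>R}g_n^2\to0$ as $R\to\infty$, hence compactness of the embedding, hence compact resolvent. This tightness step is the main obstacle, since the bad set is now a union of unbounded tubes rather than a set of small measure. I expect the Hardy inequality around each codimension-$d$ diagonal to be the cleanest fix.

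\emph{Step 3 (kernel).} Suppose $\cE(f,f)=0$. Then $\nabla f=0$ almost everywhere on $\R^{dN}\setminus\mathcal S$. This set is open and connected, since $\mathcal S$ is a finite union of subspaces of codimension $d\ge2$. Hence $f$ is constant. So $0$ is a simple eigenvalue, and by discreteness of the spectrum the next eigenvalue is strictly positive. That eigenvalue is the Poincaré gap.
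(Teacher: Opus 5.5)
Your argument is correct. Steps 1 and 3 coincide with the paper's collision cutoff, ground-state transform and connectedness arguments. The real difference is how you obtain compact resolvent.

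The paper proves that $A=\frac14|\nabla H|^2$ is \emph{proper} on $\Omega_N$, by a blow-up analysis in two regimes. At infinity, the internal forces of each cluster cancel and the trap forces every cluster limit to $0$. At collisions, a rescaled cluster would be a critical point of the homogeneous energy $W_C$, which Euler's identity $\sum_{i\in C}\xi_i\cdot\nabla_{\xi_i}W_C=-(d-2)W_C<0$ excludes. The paper then applies the standard ``proper potential'' argument, with Rellich on domains $U_M\Subset\Omega_N$.

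You instead use that the transformed domain lies in $H^1(\R^{dN})$. Rellich on balls of $\R^{dN}$ therefore already works across the collision set, and only tightness at infinity is needed. At infinity, the bound $|\nabla U|\le\sqrt N(N-1)(d-2)a\min_{k<l}|z_k-z_l|^{1-d}$ confines the set where the Coulomb field beats the trap to tubes $\{|z_k-z_l|<r_R\}$ with $r_R\sim R^{-1/(d-1)}$. The transverse Hardy inequality then controls the mass in these tubes. Work in $r=z_k-z_l$, $c=(z_k+z_l)/2$, which has unit Jacobian, and note $|\nabla_r g|^2\le\frac12(|\nabla_{z_k}g|^2+|\nabla_{z_l}g|^2)$; the inequality holds since $d\ge3$. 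This yields $\sup_n\int_{|z|>R}g_n^2\lesssim r_R^2+R^{-2}$.

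Comparing the two routes: yours is shorter and quantitative. It is the natural analogue of Lemma~\ref{lem:transformed-compactness}, with ``Sobolev on small sets'' replaced by ``Hardy on thin tubes''. The paper's route proves the stronger structural fact that $A\to\infty$ at the whole boundary of $\Omega_N$, and it uses no global functional inequality.

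In your write-up only the Hardy alternative is needed; the weak-$L^{d/(d-1)}$ and slice-Sobolev remarks can be dropped. You should also record that constants lie in the form domain (via cutoffs at infinity), since Step 3 uses this implicitly.
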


\begin{proof}
The case \(a=0\) is Gaussian, so assume \(a>0\).  We prove the positive gap
in four steps: first the collision strata have zero weighted form cost, then
we use the ground-state transform, then we prove compactness of the transformed
resolvent by showing that its potential is proper, and finally we use
connectedness to make the zero eigenvalue simple.

\emph{Step 1: zero form cost of collisions.} This is the many-particle analogue of the pole removal argument in
Lemma \ref{lem:singular-transform}.
There, cut-offs around fixed Coulomb
poles were shown to have vanishing weighted form cost. Here the singular
sets are the collision diagonals $\{z\in(\R^d)^N:z_i=z_j\}$. 
Put
\begin{equation*}
 \Omega_N=\{z:z_i\ne z_j\text{ for }i\ne j\},
 \qquad
 H(z)=\be\sum_i|z_i|^2+aW(z),
 \qquad
 W(z)=\sum_{i<j}|z_i-z_j|^{2-d}.
\end{equation*}
Fix a pair \(i<j\), put \(r=z_i-z_j\) and
\(c=(z_i+z_j)/2\), and use the same cutoff \(\eta_\varepsilon(r)\) as in
Lemma~\ref{lem:singular-transform}.  Since
\[
 |z_i|^2+|z_j|^2=2|c|^2+\tfrac12|r|^2
 \quad\text{and}\quad
 |\nabla_{(z_i,z_j)}\eta_\varepsilon(r)|^2
 =2|\nabla\eta_\varepsilon(r)|^2,
\]
discarding all repulsive factors except that of the pair \((i,j)\), and
integrating the Gaussian variables \(c\) and \(z_k\), \(k\ne i,j\), gives
\begin{equation}\label{eq:collision-cutoff}
 \int_{(\R^d)^N}|\nabla\eta_\varepsilon(z_i-z_j)|^2
 \ee^{-H(z)}\dd z
 \le C\varepsilon^{d-2}
       \exp[-c a\varepsilon^{2-d}]\longrightarrow0.
\end{equation}
Indeed, the remaining \(r\)-integral is bounded by
\[
 C\varepsilon^{-2}
 \int_{\varepsilon<|r|<2\varepsilon}
 \ee^{-a|r|^{2-d}}\dd r
 \le C\varepsilon^{d-2}\ee^{-ca\varepsilon^{2-d}}.
\]
For a compactly supported smooth \(v\), multiply \(v\) by the product of
these cutoffs over the finitely many pairs.  The \(L^2\) error and the term
where the gradient hits \(v\) vanish by dominated convergence, while the
terms where it hits a cutoff vanish by \eqref{eq:collision-cutoff}.  Ordinary
cutoffs at infinity then show that \(C_c^\infty(\Omega_N)\) is a core for
the closed weighted form.

\emph{Step 2: the many-particle ground-state transform.}
Let \(\Psi=\mathcal Z_{N,a}^{-1/2}\ee^{-H/2}\), set
\(A=\frac14|\nabla H|^2\), and put \(g=f\Psi\).  On
\(\Omega_N\), each pair potential is harmonic in both particle variables,
so
\begin{equation*}
 \Delta H=2\be dN.
\end{equation*}
Indeed, each of the \(N\) quadratic terms contributes \(2\be d\), while
Lemma~\ref{lem:superharmonic} gives zero for every pair term away from its
collision stratum.  On \(C_c^\infty(\Omega_N)\), expansion of the square and
integration by parts give
\begin{align*}
 \int_{(\R^d)^N}|\nabla f(z)|^2\dd\mu_{N,a}(z)
 &=\int_{\Omega_N}\left|\nabla g(z)+\tfrac12g(z)\nabla H(z)\right|^2\dd z\\
 &=\int_{\Omega_N}\left\{|\nabla g(z)|^2+
   \left(\tfrac14|\nabla H(z)|^2-\tfrac12\Delta H(z)\right)g(z)^2
   \right\}\dd z.
\end{align*}
Thus, after closure,
\begin{equation}
 \label{eq:many-transform}
 \int_{(\R^d)^N}|\nabla f(z)|^2\dd\mu_{N,a}(z)
 =\int_{\Omega_N}\{|\nabla g(z)|^2+(A(z)-\be dN)g(z)^2\}\dd z.
\end{equation}
The distributional Laplacian of \(W\) is supported on the collision strata,
but it contributes no term to \eqref{eq:many-transform}: the identity is
first proved on \(C_c^\infty(\Omega_N)\), and then closed using the
vanishing-cost collision cutoffs \eqref{eq:collision-cutoff}.

\emph{Step 3: properness of the transformed potential.}
We claim that \(A\) is proper on \(\Omega_N\): if a sequence leaves every
compact subset of \(\Omega_N\), then \(A\to\infty\).  Since
\(A=|\nabla H|^2/4\), it is enough to prove that every sequence with bounded
\(|\nabla H|\) remains in a compact subset of \(\Omega_N\), namely that all
positions stay bounded and all mutual distances stay bounded away from zero.
For later use, differentiation gives
\[
 \nabla_iH(z)
 =2\be z_i-a(d-2)\sum_{j\ne i}
   \frac{z_i-z_j}{|z_i-z_j|^d}.
\]

It suffices to show that a sequence \(z^{(n)}\) with bounded
\(|\nabla H(z^{(n)})|\) has bounded particle positions and a positive lower
bound on all pair distances.  Suppose first that
\(R_n=\max_i|z_i^{(n)}|\to\infty\).  After taking a subsequence, all
\(q_i^{(n)}=z_i^{(n)}/R_n\) converge.  Partition the indices into clusters
on which the limits \(q_i\) agree.  If \(C\) is one such cluster, the
pair forces internal to \(C\) cancel in pairs when summed over \(i\in C\):
the contribution of \((i,j)\) to \(\nabla_iH\) is the negative of its
contribution to \(\nabla_jH\).  For two distinct clusters \(C\ne D\), and
for \(i\in C\), \(j\in D\),
\begin{equation*}
 |z_i^{(n)}-z_j^{(n)}|
 =R_n(|q_C-q_D|+o(1)).
\end{equation*}
Since \(|q_C-q_D|>0\), every inter-cluster force has magnitude
\(O(R_n^{1-d})\).  There are only finitely many pairs, so summing the
gradient identity over \(i\in C\) and dividing by \(R_n\) gives
\begin{equation*}
 \frac1{R_n}\sum_{i\in C}\nabla_iH(z^{(n)})
 =2\be\sum_{i\in C}\frac{z_i^{(n)}}{R_n}+O(R_n^{-d})
 =2\be|C|q_C+o(1).
\end{equation*}
The left side tends to zero, so every cluster limit \(q_C\) is zero.  This
contradicts \(\max_i|q_i^{(n)}|=1\).  Thus the positions are bounded.

Suppose next that
\(r_n=\min_{i<j}|z_i^{(n)}-z_j^{(n)}|\to0\).  Pass to a subsequence on
which, for every pair, the ratio
\(|z_i^{(n)}-z_j^{(n)}|/r_n\) either stays bounded or tends to infinity.
The bounded-ratio relation partitions the indices.  Choose a nontrivial
cluster \(C\) containing a pair at distance \(r_n\), let \(\bar z_C^{(n)}\)
be its barycenter, and set
\begin{equation*}
 \xi_i^{(n)}=\frac{z_i^{(n)}-\bar z_C^{(n)}}{r_n},
 \qquad i\in C.
\end{equation*}
The relation is an equivalence relation: symmetry is clear, and transitivity
follows from the triangle inequality.  Thus all scaled distances within
\(C\) are bounded, so the barycenter condition
\(\sum_{i\in C}\xi_i^{(n)}=0\) makes every \(\xi_i^{(n)}\) bounded.  After
extraction these vectors converge to distinct \(\xi_i\), at least one pair
of which is at distance one.  Indeed, the definition of \(r_n\)
gives
\begin{equation*}
 |\xi_i^{(n)}-\xi_j^{(n)}|
 =\frac{|z_i^{(n)}-z_j^{(n)}|}{r_n}\ge1
 \qquad(i\ne j),
\end{equation*}
so no collision occurs in the limiting cluster.  The cluster contains a
pair at distance one, hence \(W_C(\xi)\) is finite and strictly positive.
Multiply the displayed formula for \(\nabla_iH\) by \(r_n^{d-1}\).  Since
the particle positions are bounded, the quadratic term tends to zero.  If
\(j\notin C\), then
\[
 r_n^{d-1}|z_i^{(n)}-z_j^{(n)}|^{1-d}
 =\left(\frac{r_n}{|z_i^{(n)}-z_j^{(n)}|}\right)^{d-1}
 \longrightarrow0.
\]
For \(i,j\in C\), homogeneity gives
\[
 r_n^{d-1}\nabla_{z_i}|z_i^{(n)}-z_j^{(n)}|^{2-d}
 =\nabla_{\xi_i}|\xi_i^{(n)}-\xi_j^{(n)}|^{2-d}.
\]
The quantity \(r_n^{d-1}\nabla_iH(z^{(n)})\) also tends to zero because
\(\nabla H(z^{(n)})\) is bounded.  Passing to the limit and dividing by
\(a>0\) therefore gives
\begin{equation*}
 \nabla_{\xi_i}W_C(\xi)=0\quad(i\in C),
 \qquad
 W_C(\xi)=\sum_{\substack{i<j\\i,j\in C}}|\xi_i-\xi_j|^{2-d}.
\end{equation*}
Differentiating \(W_C(t\xi)=t^{2-d}W_C(\xi)\) at \(t=1\) gives
Euler's identity
\begin{equation*}
 \sum_{i\in C}\xi_i\cdot\nabla_{\xi_i}W_C(\xi)
 =-(d-2)W_C(\xi)<0.
\end{equation*}
This is impossible if all \(\nabla_{\xi_i}W_C(\xi)\) vanish, because the
left side of Euler's identity would be zero while the right side is strictly
negative.  The assumed collision sequence therefore cannot exist, and the
properness claim follows.

\emph{Step 4: compact resolvent and simplicity of the ground state.}
Now add \(\be dN\|g\|_2^2\) to \eqref{eq:many-transform}.  The resulting
positive form norm is
\begin{equation*}
 \int_{\Omega_N}(|\nabla g(z)|^2+A(z)g(z)^2)\dd z.
\end{equation*}
By \eqref{eq:many-transform}, this quantity equals
\[
 \int_{(\R^d)^N}|\nabla f(z)|^2\dd\mu_{N,a}(z)
 +\be dN\|g\|_2^2,
\]
so its unit ball is bounded both in \(L^2\) and in the displayed positive
form norm.  For every member of that unit ball,
\[
 \int_{\{z\in\Omega_N:A(z)>M\}}g(z)^2\dd z
 \le\frac1M\int_{\Omega_N}A(z)g(z)^2\dd z\le\frac1M.
\]
Because \(A\) is proper, \(\{A\le M\}\) is compactly contained in
\(\Omega_N\).  Choose a bounded smooth open set
\(U_M\) with $\{A\le M\}\Subset U_M\Subset\Omega_N.$
The form-unit ball is bounded in \(H^1(U_M)\), so every sequence in it has
a subsequence converging in \(L^2(U_M)\).  Since
\(U_M^c\subset\{A>M\}\), the squared \(L^2\)-norm outside \(U_M\) is at
most \(M^{-1}\).  First choose \(M\) large and then use the Rellich
subsequence on \(U_M\); a diagonal argument produces a subsequence converging
in \(L^2(\Omega_N)\).  Thus the transformed form domain embeds compactly
into \(L^2\).  The unitary ground-state transform identifies the shifted
original form with this transformed positive form, so the shifted generator
has compact resolvent; shifting by the constant \(\be dN\) does not affect
compactness of the resolvent.  Therefore the original closed weighted form
also has compact resolvent.

Finally, \(\Omega_N\) is path connected: in dimension \(d\ge2\), particles
can be moved one at a time along paths avoiding the finitely many other
positions.  If the original weighted form of \(f\) is zero, then
\(\nabla f=0\) almost everywhere on \(\Omega_N\).  By the local Sobolev
regularity of the form domain and the usual chain-of-balls argument on a
connected open set, such an \(f\) is almost everywhere equal to a constant on
\(\Omega_N\).  Hence zero is a simple eigenvalue.  Compact resolvent makes
the spectrum discrete with no finite accumulation point, so the next
eigenvalue: the Poincar\'e gap, is strictly positive.
\end{proof}

\subsection{Completion of the proof}

\begin{proof}[Proof of Theorem~\ref{thm:main}]
Setting  $b_N:=\frac{\Theta_N}{N-1}
 =\frac{\chi}{N}\left(\frac{\be_N}{N}\right)^{d/2}$ and 
under the change of variables \(y_i=\sqrt{\be_N/N}\,x_i\), the law 
\eqref{eq:original-law} becomes
\[
 \mu_N(\dd y)
 \propto
 \exp\!\left[-\sum_{k=1}^N|y_k|^2
 -b_N\sum_{1\le k<\ell\le N}|y_k-y_\ell|^{2-d}\right]\dd y.
\]
Fix \(i\).  The conditional law of \(y_i\), given the other coordinates, is
\[
 \mu_i(\dd u\mid y_{\ne i})
 =\frac{\exp\!\left[-|u|^2-b_N\sum_{j\ne i}|u-y_j|^{2-d}\right]}
 {\displaystyle\int_{\R^d}
  \exp\!\left[-|v|^2-b_N\sum_{j\ne i}|v-y_j|^{2-d}\right]\dd v}\dd u.
\]
This formula defines a probability for every boundary configuration,
including coincident poles.  Its total Coulomb charge is exactly
\begin{equation}\label{eq:thetaN-total-charge}
 \Gamma_N=(N-1)b_N=\Theta_N.
\end{equation}

Let \(\eta_{d,1}>0\) be the small-charge threshold in
Theorem~\ref{thm:onesite}, and let \(M_{d,1}\) be the constant in
Proposition~\ref{prop:moving-center}.  Choose
\begin{equation}\label{eq:epsilon-d-choice}
 0<\varepsilon_d<
 \min\left\{\eta_{d,1},\frac{1}{4M_{d,1}}\right\}.
\end{equation}
By the assumption \eqref{eq:thetaN-assumption}, there are numbers
\(\widehat\Theta<\varepsilon_d\) and \(N_0\ge2\) such that
\(\Theta_N\le\widehat\Theta\) for every \(N\ge N_0\). 

Now we have prepared the ground to apply the Theorem
\ref{thm:onesite} that yielded a Poincar\'e Inequality for the 1-site measure: We apply it with Gaussian parameter \(\be=1\), which gives the
conditional Poincar\'e constant \(K_{d,1}\), uniformly in the boundary, for
all \(N\ge N_0\), i.e.
\[
 \Var_{\mu_i(\cdot\mid y_{\ne i})}(h) \leq
 K_{d,1}
 \int_{\R^d}|\nabla h(u)|^2\,
 \mu_i(\dd u\mid y_{\ne i}),
 \qquad h\in C_c^\infty(\R^d),
\]
uniformly in the site \(i\), the boundary configuration \(y_{\ne i}\),
and \(N\ge N_0\).

Moreover,  since
\(\widehat\Theta<(4M_{d,1})^{-1}<(2M_{d,1})^{-1}\), the hypothesis of
Proposition~\ref{prop:moving-center} is also satisfied. This allows us to estimate the spectral radius of the Dobrushin matrix. Indeed,  
if two boundary configurations differ only at \(y_j\), Proposition
\ref{prop:moving-center} and identity \eqref{eq: Wasserstein_TV} give for $i \neq j$ that $c_{ij}\le4M_{d,1}b_N.$
Hence the spectral radius of the Dobrushin matrix satisfies
\begin{equation}\label{eq:thetaN-dobrushin}
 \rho(C_N)\le\|C_N\|_\infty=:\kappa_N
 \le4M_{d,1}(N-1)b_N
 =4M_{d,1}\Theta_N
 \le\widehat\kappa:=4M_{d,1}\widehat\Theta<1.
\end{equation}

The conditional Poincar\'e inequality established above and
\eqref{eq:thetaN-dobrushin} verify the two hypotheses of
Proposition~\ref{prop:wu-criterion}. Therefore, for \(F\in C_c^\infty((\R^d)^N)\),

%In order to apply Proposition \ref{prop:Dobrushin-Wu}, we need to additionally verify the conditional one-site inequalities \eqref{eq:wu-conditional-PI} which are stated for smooth one-site test functions and then closed in Theorem~\ref{thm:onesite}; hence the hypotheses of Proposition~\ref{prop:wu-criterion} are satisfied for the smooth cylinder core and the resulting inequality extends to the closed global form. Proposition~\ref{prop:wu-criterion} now yields
\begin{equation}\label{eq:large-N-uniform-PI}
 \Var_{\mu_N}(F)
 \le \frac{K_{d,1}}{1-\widehat\kappa}
 \int_{(\R^d)^N}|\nabla F(y)|^2\dd\mu_N(y),
 \qquad N\ge N_0.
\end{equation}
By closure, this inequality extends to the domain of the closed global
Dirichlet form. This gives us a Poincar\'e inequality for all large $N$. 

Now for the remaining finitely many \(2\le N<N_0\), Lemma~\ref{lem:fixed-N-gap}, applied with Gaussian
parameter \(\be=1\) and interaction coefficient \(a=b_N\), gives a finite
Poincar\'e constant \(K_N\).  Therefore
\[
 K_*:=\max\left\{
 \frac{K_{d,1}}{1-\widehat\kappa},
 \max_{2\le N<N_0}K_N\right\}<\infty
\]
is a Poincar\'e constant for every \(\mu_N\).

For $F(y)=f\!\left(\sqrt{\frac{N}{\be_N}}\,y\right),$
one has \(\nabla_{y_i}F=\sqrt{N/\be_N}\,\nabla_{x_i}f\), and hence
\[
 \Var_{\mu_N}(F)=\Var_{P_{N,\be_N}^{d}}(f),
 \qquad
 \int|\nabla_yF|^2\dd\mu_N
 =\frac{N}{\be_N}\int|\nabla_xf|^2\dd P_{N,\be_N}^{d}.
\]
Consequently,
\[
 \Var_{P_{N,\be_N}^{d}}(f)
 \le K_*\frac{N}{\be_N}\int|\nabla_xf|^2\dd P_{N,\be_N}^{d}
 =\frac{K_*N}{\al_N}\cE_N(f,f),
\]
which proves the lower bound $\lambda_N^d(\be_N)\ge\frac{\al_N}{K_*N}.$

For the quantitative assertion, suppose
\(\sup_N\Theta_N\le\bar\Theta<\varepsilon_d\).  Then the preceding
argument applies with \(N_0=2\),
\(\widehat\Theta=\bar\Theta\), and no exceptional finite set.  In this case one may take
\[
 K_*=\frac{K_{d,1}}{1-4M_{d,1}\bar\Theta},
 \qquad
 c_{d,\bar\Theta}
 =\frac{1-4M_{d,1}\bar\Theta}{K_{d,1}}.
\]

For the upper bound, let the center of mass $c=N^{-1/2}\sum_{i=1}^N x_i.$
Writing \(r_i=x_i-c/\sqrt N\), the Gibbs density factorizes in the
\(c\)-coordinate with factor \(\exp[-\be_N|c|^2/N]\), since due to the translation invariance of the interactions, these do not depend on \(c\) at all.  Consequently each component
\(c_k\) has variance \(N/(2\be_N)\).  Since \(|\nabla c_k|^2=1\),
\[
 \cE_N(c_k,c_k)=\frac{\al_N}{\be_N},
\]
and its Rayleigh quotient is \(2\al_N/N\).  This proves the upper bound.
\end{proof}

\begin{remark}[Constants and uniformity]
The argument is nonquantitative: it proves existence of
\(\eta_{d,\be}\) and \(K_{d,\be}\) without useful numerical values.
Their uniformity over finite pole configurations, including collisions, is
exactly what is needed for the many-particle conditional measures.
\end{remark}

\begin{appendix}
    
\section{The logarithmic Coulomb gas in dimension two}
\label{app:planar-log}

We explain here how the proof of Theorem~\ref{thm:main} adapts to
\(K(x)=-\log|x|\).  Its structure is unchanged: the one-site estimate comes
from harmonicity away from the poles and the ground-state transform, while
the many-particle estimate follows from the moving-pole bound and
Proposition~\ref{prop:wu-criterion}.  Only the estimates specific to dimension
two are given below. The only real subtlety is the change of sign that the $\log$ exhibits and the different scaling behaviour. None of this requires a different method, but just some care.

Fix \(\chi\ge0\) and positive sequences \((\be_N)_{N\ge2}\) and
\((\al_N)_{N\ge2}\).  For \(N\ge2\), let
\begin{equation}\label{eq:planar-law}
 P_{N,\be_N}^{2}(\dd x)
 =\frac1{Z_{N,\be_N}^{2}}
 \exp\!\left[-\frac{\be_N}{N}\sum_{i=1}^N|x_i|^2\right]
 \prod_{1\le i<j\le N}|x_i-x_j|^{\be_N\chi/N^2}\dd x,
\end{equation}
with Dirichlet form
\[
 \cE_N(f,f)=\frac{\al_N}{\be_N}
 \int_{(\R^2)^N}|\nabla f|^2\dd P_{N,\be_N}^{2}.
\]
Write \(\lambda_N^2(\be_N)\) for the associated spectral gap and set
\begin{equation}\label{eq:planar-theta}
 \Theta_N^{(2)}
 :=\frac{N-1}{N}\chi\frac{\be_N}{N}
 =\frac{(N-1)\be_N\chi}{N^2}.
\end{equation}
The result is then, as for $d\ge 3:$

\begin{theorem}\label{thm:planar-log}
There exists \(\varepsilon_2>0\) such that, if
\[
 \limsup_{N\to\infty}\Theta_N^{(2)}<\varepsilon_2,
\]
then there is \(c_{2,\chi,(\be_N)}>0\), independent of \(N\), for which
\begin{equation}\label{eq:planar-gap}
 c_{2,\chi,(\be_N)}\frac{\al_N}{N}
 \le \lambda_N^2(\be_N)
 \le 2\frac{\al_N}{N},
 \qquad N\ge2.
\end{equation}
\end{theorem}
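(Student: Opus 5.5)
The plan is to follow the three-layer architecture of the proof of Theorem~\ref{thm:main}: rescale, prove a uniform one-site Poincar\'e inequality, control the Dobrushin matrix, apply Proposition~\ref{prop:wu-criterion}, and handle small \(N\) and the upper bound separately. The point is to locate exactly where \(|x|^{2-d}\) was used and substitute a planar estimate at each such place.

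\emph{Rescaling and one-site measures.} Put \(y_i=\sqrt{\be_N/N}\,x_i\). Since \(\log|x_i-x_j|=\log|y_i-y_j|+\mathrm{const}\), the law \eqref{eq:planar-law} becomes
\[
\mu_N(\dd y)\propto \ee^{-\sum|y_k|^2}\prod_{k<\ell}|y_k-y_\ell|^{b_N}\dd y,
\qquad b_N=\be_N\chi/N^2 .
\]
A one-site conditional is therefore \(\nu_\rho\propto \ee^{-\be|x|^2-U_\rho}\), with \(U_\rho=-\sum_\ell\gamma_\ell\log|x-y_\ell|\) and total charge \((N-1)b_N=\Theta_N^{(2)}\).

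The difficulty is that \(U_\rho\) has no sign, so \(\nu_\rho\le 2\gamma_\be\) and Lemma~\ref{lem:gaussian-moment} both fail: \(\int\log|x-y|\dd\gamma_\be\) grows like \(\log|y|\). I would replace them by normalised factors \(|x-y_\ell|/\langle y_\ell\rangle\), where \(\langle y\rangle=1+|y|\). Multiplying each factor by the constant \(\langle y_\ell\rangle^{-\gamma_\ell}\) does not change \(\nu_\rho\).
\begin{itemize}
\item Upper bound: \(|x-y|\le(1+|x|)\langle y\rangle\) gives a normalised weight at most \((1+|x|)^{\Gamma}\).
\item Normaliser: Jensen's inequality together with \(\sup_y\int\bigl|\log(|x-y|/\langle y\rangle)\bigr|\dd\gamma_\be<\infty\) gives a lower bound \(\ge \ee^{-C\Gamma}\).
\end{itemize}
Together these yield \(\nu_\rho\le C(1+|x|)^{\Gamma}\gamma_\be\) and \(\sup_y\int|\log(|x-y|/\langle y\rangle)|\dd\nu_\rho\le M_{2,\be}\) for \(\Gamma\le1\). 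This replaces Proposition~\ref{prop:basic-onesite}.

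\emph{One-site gap.} Since \(\log|x|\) is harmonic off the origin in \(d=2\), Lemma~\ref{lem:singular-transform} goes through verbatim. The pole-cutoff cost near a pole of combined charge \(\bar\gamma\) is now only \(\varepsilon^{-2}\varepsilon^{2+\bar\gamma}=\varepsilon^{\bar\gamma}\to0\), which is slow but sufficient; alternatively one can use logarithmic cutoffs, since points have zero capacity in the plane. In Lemma~\ref{lem:weak-field}, \(|\nabla U|\le(K*\rho)\) with \(K=|x|^{-1}\), and the same splitting gives \(|\{|\nabla U|>t\}|\le C(\Gamma/t)^2\). The Sobolev exponent \(2d/(d-2)\) is unavailable, so I would use \(H^1(\R^2)\subset L^p\) for a fixed \(p>2\) and
\[
\int_F g^2\le |F|^{1-2/p}\|g\|_p^2 .
\]
Everything else is unchanged: Lemmas~\ref{lem:transformed-compactness} and~\ref{lem:oscillator-liminf} (whose \(\psi_n\to\psi_0\) step needs \(\int\ee^{-\be|x|^2}|1-\ee^{-U_n}|\to0\), which follows from the bound \(\nu_\rho\le C(1+|x|)^\Gamma\gamma_\be\) and dominated convergence), and the oscillator kernel Lemma~\ref{lem:oscillator-kernel}. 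This gives the planar analogue of Theorem~\ref{thm:onesite}.

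\emph{Moving pole: the main obstacle.} This is the step I expect to be hardest, because the sign change and the growth of \(\log|x-y|\) in \(y\) break the argument of Proposition~\ref{prop:moving-center}. I would run the same truncated interpolation with \(\phi=-\log(|x-y|/\langle y\rangle)\), truncated from both sides, and use the estimate
\[
|\operatorname{Cov}_{\nu_t}(h,\phi)|=|\operatorname{Cov}_{\nu_t}(h,\phi-c)|\le 2\,\nu_t(|\phi|)\quad (c=0).
\]
For \(\nu_t(|\phi|)\) the monotonicity trick no longer applies, so I would bound it directly. Each \(\nu_t\) is again a one-site measure of charge \(\Gamma+ta\), so the normalised bounds above give \(\nu_t(|\phi|)\le M_{2,\be}\) uniformly in \(y\). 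Hence \(\|\nu^y-\nu_\rho\|_{\TV}\le 2M_{2,\be}a\), so \(c_{ij}\le C b_N\) and \(\rho(C_N)\le C\Theta_N^{(2)}<1\) for \(\varepsilon_2\) small. Proposition~\ref{prop:wu-criterion} then gives the lower bound for \(N\ge N_0\).

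\emph{Small \(N\) and upper bound.} For \(N<N_0\) I would redo Lemma~\ref{lem:fixed-N-gap}, whose Step 1 uses the \(\varepsilon^{b}\) cutoff cost. For properness, inter-cluster forces are \(O(1/R_n)\), which suffices after dividing by \(R_n\). Euler's identity becomes
\[
\sum_{i\in C}\xi_i\cdot\nabla_{\xi_i}W_C=-\binom{|C|}{2}\neq0 \quad\text{for } W_C=-\sum\log|\xi_i-\xi_j| ,
\]
so collision clusters are excluded as before. The upper bound is identical, since the interaction is translation invariant and the center-of-mass test function gives Rayleigh quotient \(2\al_N/N\).
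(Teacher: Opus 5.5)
Your proposal is correct and has the same architecture as the paper's appendix. It uses normalized logarithms $\ell_y$, and a punctured-core ground-state transform justified by the $\varepsilon^{\bar\gamma}$ pole-cutoff cost. Compactness comes from a two-dimensional Sobolev substitute (the paper takes $p=4$). The moving-pole interpolation is controlled by a uniform logarithmic moment; your bound on $\nu_t(|\ell_y|)$ plays the role of \eqref{eq:planar-centered-log}. Wu tensorization and the center of mass then finish the argument.

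Two sub-steps differ from the paper.
\begin{itemize}
\item Your one-site bounds use only first moments and Jensen to get $\nu_\rho\le C(1+|x|)^{\Gamma}\gamma_\be$. This is more elementary than the exponential moments \eqref{eq:planar-gaussian-log-moment} and is enough for the qualitative compactness argument.
\item For fixed $N$ you port the properness/cluster argument of Lemma~\ref{lem:fixed-N-gap}. The $O(1/R_n)$ inter-cluster forces and the Euler identity $-\binom{|C|}{2}$ are right. The paper instead uses $z\cdot b=q$ and $\operatorname{div}b=0$ to write the transformed potential as $\be^2|z|^2-\be(2N+q)+\frac14|b|^2$. It then drops $\frac14|b|^2$ and compares with the $2N$-dimensional harmonic oscillator. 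That route exploits the exact scale covariance of the logarithm and avoids any cluster analysis; yours is the one that also works for $|x|^{2-d}$.
\end{itemize}

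One remark needs correcting. The claim $\int\ee^{-\be|x|^2}|1-\ee^{-U_n}|\dd x\to0$ is false for the raw potential $U_n=-\sum_\ell\gamma_\ell\log|x-y_\ell|$. A single pole with $\gamma_n\log|y_n|\to\infty$ makes the integrand blow up pointwise. The claim must be read with your normalized potential, $-L_n$ where $L_n=\sum_\ell\gamma_\ell\log(|x-y_\ell|/\langle y_\ell\rangle)$.

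With that reading, $|\ee^{L}-1|\le|L|\max\{1,\ee^{L}\}\le|L|(1+|x|)^{\Gamma_n}$ gives $\int|\ee^{L_n}-1|\dd\gamma_\be\le C\Gamma_n$ directly. This avoids needing a.e.\ convergence for dominated convergence. Together with $(\ee^{L/2}-1)^2\le|\ee^{L}-1|$ and $\psi_n=\psi_0\ee^{L_n/2}/\sqrt{m_n}$, it yields $\psi_n\to\psi_0$.

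The same density bound also justifies differentiating $t\mapsto\nu_t(h)$ without any truncation. That sidesteps the fact that your truncated tilts are not literally one-site measures of charge $\Gamma+ta$.
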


For fixed \(\be_N\equiv\be>0\), the hypothesis is automatic since
\(\Theta_N^{(2)}=O(N^{-1})\).  If
\(\be_N/N\to\beta^{\mathrm{ph}}\), it reduces to
\(\chi\beta^{\mathrm{ph}}<\varepsilon_2\).
Let us now point out the essential differences to the $d\ge 3 $ case:
\subsection*{Proof sketch of Theorem~\ref{thm:planar-log}}

\subsection{Rescaling and conditional measures.}
As in the proof of Theorem~\ref{thm:main}, set
\(y_i=\sqrt{\be_N/N}\,x_i\).  The additive constant (due to $\sqrt{\be_N/N}$) coming from the logarithm
is absorbed into the partition function, and
\eqref{eq:planar-law} becomes
\begin{equation}\label{eq:planar-scaled-law}
 \mu_N(\dd y)
 \propto
 \exp\!\left[-\sum_{i=1}^N|y_i|^2\right]
 \prod_{i<j}|y_i-y_j|^{a_N}\dd y,
 \qquad a_N:=\frac{\be_N\chi}{N^2}.
\end{equation}
More precisely, if \(\mathcal Z_N\) is the normalizing constant in
\eqref{eq:planar-scaled-law}, then
\[
 Z_{N,\be_N}^{2}
 =\left(\frac{N}{\be_N}\right)^{
   N+\frac{a_N}{2}\binom N2}\mathcal Z_N.
\]
This factor plays no role in the spectral-gap estimate.  The conditional law
at site \(i\) is
\begin{equation}\label{eq:planar-conditional}
 \mu_i(\dd u\mid y_{\ne i})
 =\frac{e^{-|u|^2}\prod_{j\ne i}|u-y_j|^{a_N}}
 {\displaystyle\int_{\R^2}e^{-|v|^2}
   \prod_{j\ne i}|v-y_j|^{a_N}\dd v}\dd u.
\end{equation}
The total logarithmic charge in this conditional measure is
\((N-1)a_N=\Theta_N^{(2)}\), the planar counterpart of
\eqref{eq:thetaN-total-charge}.

As in the $d\ge 3$ case, we first try to get a spectral gap estimate for a single particle. 

\subsection{The one-site estimate.}
Much of the difficulty of the $d=2$ potential is related to the fact that it changes sign. This needs an additional "pre-processing step".
Let
\[
 \rho=\sum_{\ell=1}^M\gamma_\ell\delta_{y_\ell},
 \quad M<\infty,\quad \gamma_\ell>0,
 \quad
 \Gamma(\rho)=\sum_{\ell=1}^M\gamma_\ell,
 \quad
 U_\rho^{\log}(x)=-\sum_{\ell=1}^M\gamma_\ell\log|x-y_\ell|.
\]
On \(\R^2\), let
\[
 \gamma_\be(\dd x)=Z_{2,\be}^{-1}e^{-\be|x|^2}\dd x,
 \qquad
 Z_{2,\be}=\int_{\R^2}e^{-\be|x|^2}\dd x.
\]
The corresponding one-site measure is
\begin{equation}\label{eq:planar-one-site-law}
 \nu_\rho^{\log}(\dd x)
 =\frac1{Z_\rho^{\log}}
 e^{-\be|x|^2-U_\rho^{\log}(x)}\dd x
 =\frac1{Z_\rho^{\log}}e^{-\be|x|^2}
 \prod_{\ell=1}^M|x-y_\ell|^{\gamma_\ell}\dd x.
\end{equation}
Since \(-\log|x|\) is harmonic away from the origin,
\(\Delta U_\rho^{\log}=0\) off the poles, as in
Lemma~\ref{lem:superharmonic}.  Distributionally,
\(-\Delta U_\rho^{\log}=2\pi\rho\).

The first essential difference is that, unlike the Coulomb weight in dimension \(d\ge3\),
\(e^{-U_\rho^{\log}}\) need not be bounded by one.  We therefore remove the
constant carried by a distant pole and write
\[
 \langle y\rangle=(1+|y|^2)^{1/2},
 \qquad
 \ell_y(x)=\log\frac{|x-y|}{\langle y\rangle},
 \qquad
 L_\rho(x)=\sum_\ell\gamma_\ell\ell_{y_\ell}(x).
\]
For every \(0<s<2\),
\begin{equation}\label{eq:planar-gaussian-log-moment}
 \sup_{y\in\R^2}
 \int_{\R^2}e^{s|\ell_y(x)|}\dd\gamma_\be(x)<\infty.
\end{equation}
For the positive part, use
\(|x-y|/\langle y\rangle\le1+|x|\).  For the negative part, separate bounded
\(y\) from \(|y|>2\).  The singularity is integrable because
\(\int_0^1r^{1-s}\dd r<\infty\), and the Gaussian controls a neighborhood of
a distant center.

Choose \(0<\Gamma_0<1\) sufficiently small.  Constants in the rest of the
appendix may depend on \(\Gamma_0\).  Uniformly for
\(\Gamma(\rho)\le\Gamma_0\),
\begin{equation}\label{eq:planar-log-normalization}
 Z_\rho^{\log}
 =Z_{2,\be}\prod_\ell\langle y_\ell\rangle^{\gamma_\ell}m_\rho,
 \qquad
 m_\rho:=\int e^{L_\rho}\dd\gamma_\be=1+O(\Gamma(\rho)),
\end{equation}
and
\begin{equation}\label{eq:planar-gaussian-TV}
 \|\nu_\rho^{\log}-\gamma_\be\|_{\TV}
 \le C_\be\Gamma(\rho).
\end{equation}
Jensen's inequality and the moment bound above give
\[
 \|L_\rho\|_{L^2(\gamma_\be)}+
 \|L_\rho\|_{L^4(\gamma_\be)}
 \le C_\be\Gamma(\rho),
 \qquad
 \int e^{2|L_\rho|}\dd\gamma_\be\le C_\be.
\]
Together with \(|e^t-1|\le|t|e^{|t|}\), these estimates prove
\eqref{eq:planar-log-normalization}--\eqref{eq:planar-gaussian-TV}.  The bound
\(|e^{t/2}-1|^2\le C t^2e^{|t|}\) and H\"older's inequality similarly give
\begin{equation}\label{eq:planar-ground-state-convergence}
 \|e^{L_\rho/2}-1\|_{L^2(\gamma_\be)}
 \le C_\be\Gamma(\rho).
\end{equation}
The exponential moment remains uniform after multiplication by one additional
logarithmic factor, which gives
\begin{equation}\label{eq:planar-centered-log}
 \sup_{y\in\R^2}
 \int_{\R^2}
 \left|\log|x-y|-\nu_\rho^{\log}(\log|\cdot-y|)\right|
 \dd\nu_\rho^{\log}(x)
 \le C_\be,
\end{equation}
uniformly for \(\Gamma(\rho)\le\Gamma_0\), even when poles collide or tend to
infinity.

The field estimate used in Lemma~\ref{lem:weak-field} carries over, since the gradient of the $\log$ is of the same form as our $d\ge 3$ computation, and becomes
\begin{equation}\label{eq:planar-weak-field}
 \left|\left\{x\in\R^2:
 |\nabla U_\rho^{\log}(x)|>t\right\}\right|
 \le C\left(\frac{\Gamma(\rho)}t\right)^2,
 \qquad t>0.
\end{equation}

The next result follows then our $d\ge 3$ strategy as well and one finds for Theorem \ref{thm:onesite}

\begin{proposition}\label{prop:planar-one-site}
For every \(\be>0\), there are \(\eta_{2,\be}>0\) and
\(K_{2,\be}<\infty\) such that, whenever
\(\Gamma(\rho)\le\eta_{2,\be}\),
\begin{equation}\label{eq:planar-one-site-input}
 \Var_{\nu_\rho^{\log}}(f)
 \le K_{2,\be}
 \int_{\R^2}|\nabla f|^2\dd\nu_\rho^{\log}.
\end{equation}
\end{proposition}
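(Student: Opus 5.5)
We argue by contradiction, following the proof of Theorem~\ref{thm:onesite} step by step. Suppose no $\eta_{2,\be}$, $K_{2,\be}$ work. We first check that constants lie in every form domain and that centering is harmless, exactly as in Lemma~\ref{lem:normalized-counterexample}; the cutoff argument there only uses that $\nu_\rho^{\log}$ is a probability measure. This produces point measures $\rho_n$ with $\Gamma_n\le 1/n$ and functions $f_n$ with $\nu_n(f_n)=0$, $\nu_n(f_n^2)=1$ and $E_n<1/n$. We then apply the ground-state transform with $H_n=\be|x|^2+U_{\rho_n}^{\log}$, $\psi_n=\mathcal Z_n^{-1/2}e^{-H_n/2}$ and $A_n=\frac14|\nabla H_n|^2$.

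The pole-removal step in Lemma~\ref{lem:singular-transform} must be redone, because the weight no longer blows up exponentially at a pole. Near a combined pole $p_k$ of charge $\bar\gamma_k$, the weight behaves like $|x-p_k|^{\bar\gamma_k}$ times a bounded factor, since the other poles and the Gaussian are bounded on a small ball. The cutoff cost is therefore bounded by
\[
\varepsilon^{-2}\int_{\varepsilon<|x-p_k|<2\varepsilon}|x-p_k|^{\bar\gamma_k}\dd x\lesssim \varepsilon^{\bar\gamma_k},
\]
which still tends to zero. Here the planar capacity of a point is zero in any case, and the weight only helps. Harmonicity of $U^{\log}_\rho$ off the poles then gives, after closure,
\[
\cE_\rho(f,f)=\int\{|\nabla g|^2+(A-2\be)g^2\}\dd x, \qquad g=f\psi .
\]
With $g_n=f_n\psi_n$ we get $\|g_n\|_2=1$, $\langle g_n,\psi_n\rangle=0$ and $Q_n(g_n)\to0$.

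For compactness we follow Lemma~\ref{lem:transformed-compactness}. The bound $\int|\nabla g_n|^2+\int A_n g_n^2\le 2\be+o(1)$ gives boundedness in $H^1(\R^2)$. Dimension two has no homogeneous Sobolev embedding into $L^{2d/(d-2)}$, so we replace it by the inhomogeneous embedding $H^1(\R^2)\subset L^4$, which is uniform given $\|g_n\|_2=1$. Hölder then gives
\[
\int_{F}g_n^2\le |F|^{1/2}\|g_n\|_4^2 .
\]
Combined with the weak field bound \eqref{eq:planar-weak-field}, which gives $|B_n(R)|\le C(\Gamma_n/(\be R))^2$ and $|F_n(\delta)|\le C(\Gamma_n/\delta)^2$, this yields the uniform tail bound and the local Rellich compactness. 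Hence $g_n\to g$ strongly in $L^2$ with $\|g\|_2=1$.

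For the convergence $\psi_n\to\psi_0$ we normalize first. Dividing the density by $\prod_\ell\langle y_\ell\rangle^{\gamma_\ell}$ gives $\psi_n=e^{L_{\rho_n}/2}\psi_0/m_{\rho_n}^{1/2}$. Then \eqref{eq:planar-ground-state-convergence} and \eqref{eq:planar-log-normalization} yield $\psi_n\to\psi_0$ in $L^2$, and consequently $g\perp\psi_0$. This is where the sign change of the logarithm is handled; it is the only step that differs from the $d\ge3$ case, and we expect it to be the main technical point. Everything is uniform in pole positions, including poles escaping to infinity, because $\ell_y$ is centered by $\langle y\rangle$ and we use \eqref{eq:planar-gaussian-log-moment}.

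The liminf of Lemma~\ref{lem:oscillator-liminf} goes through verbatim, with the $L^4$ Hölder bound replacing the Sobolev bound. It gives
\[
Q_0(g)=\int\{|\nabla g|^2+(\be^2|x|^2-2\be)g^2\}\dd x\le\liminf_n Q_n(g_n)=0 .
\]
By Lemma~\ref{lem:oscillator-kernel} with $d=2$, $g$ is then a multiple of $\psi_0$. This contradicts $g\perp\psi_0$ and $\|g\|_2=1$.
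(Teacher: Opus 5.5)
Your proposal is correct and follows essentially the same route as the paper. The shared steps are the contradiction via normalized near-ground states, the annular pole cutoff with cost $O(\varepsilon^{\bar\gamma})$, the ground-state identity with potential $A-2\be$, the representation $\psi_\rho=\psi_0e^{L_\rho/2}/\sqrt{m_\rho}$ giving $\psi_n\to\psi_0$ uniformly in the poles, and the oscillator liminf and kernel argument. The one cosmetic difference is that you replace the critical Sobolev bound by $H^1(\R^2)\subset L^4$ where the paper uses $\|g\|_4^2\le C\|g\|_2\|\nabla g\|_2$, and these are equivalent here since $\|g_n\|_2=1$.
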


\begin{proof}
We first combine poles at the same point.  If \(\bar\gamma>0\) is the total
charge at one point, the annular cutoff from
Lemma~\ref{lem:singular-transform} has weighted cost
\begin{equation}\label{eq:planar-pole-cutoff}
 O\!\left(\varepsilon^{-2}
 \int_\varepsilon^{2\varepsilon}r^{\bar\gamma}r\dd r\right)
 =O(\varepsilon^{\bar\gamma})\longrightarrow0.
\end{equation}
It follows, as in our $d\ge 3$ argument, that smooth functions supported away from the poles form a core.
Set
\[
 H_\rho=\be|x|^2+U_\rho^{\log}(x),
 \qquad
 \psi_\rho=(Z_\rho^{\log})^{-1/2}e^{-H_\rho/2},
 \qquad
 A_\rho=\frac14|2\be x+\nabla U_\rho^{\log}|^2.
\]
Then \eqref{eq:planar-log-normalization} gives
\[
 \psi_\rho
 =\psi_0\frac{e^{L_\rho/2}}{\sqrt{m_\rho}},
 \qquad
 \psi_0=Z_{2,\be}^{-1/2}e^{-\be|x|^2/2}.
\]
By \eqref{eq:planar-ground-state-convergence},
\(\psi_\rho\to\psi_0\) in \(L^2\) as \(\Gamma(\rho)\to0\), uniformly in
the pole configuration.  The ground-state calculation in
Lemma~\ref{lem:singular-transform}, first off the poles and then by closure,
gives
\begin{equation}\label{eq:planar-ground-state-form}
 \int|\nabla f|^2\dd\nu_\rho^{\log}
 =Q_\rho(g)
 :=\int_{\R^2}\left(|\nabla g|^2+(A_\rho-2\be)g^2\right)\dd x,
 \qquad g=f\psi_\rho.
\end{equation}

It remains to check the compactness step in dimension two.  We proceed as in the Coulomb case $d \ge 3$: Suppose that the
proposition is false.  As in the proof of Theorem~\ref{thm:onesite}, we may
then find \(\Gamma(\rho_n)\to0\) and centered functions \(f_n\), normalized by
\(\nu_{\rho_n}^{\log}(f_n^2)=1\), whose energies tend to zero.  Set
\(g_n=f_n\psi_{\rho_n}\).  Then
\(\|g_n\|_2=1\), \(Q_{\rho_n}(g_n)\to0\), and
\(g_n\perp\psi_{\rho_n}\).  Then
\eqref{eq:planar-ground-state-convergence} gives
\(\psi_{\rho_n}\to\psi_0\) in \(L^2\).

Since \(A_{\rho_n}\ge0\), \eqref{eq:planar-ground-state-form} gives uniform
bounds on \((g_n)\) in \(H^1(\R^2)\) and on
\(\int A_{\rho_n}g_n^2\).  At this point the critical Sobolev estimate used
for \(d\ge3\) is replaced by the two-dimensional inequality
\(\|g_n\|_4^2\le C\|g_n\|_2\|\nabla g_n\|_2\).  Combining it with
\eqref{eq:planar-weak-field} gives, for fixed \(R>0\),
\begin{equation}\label{eq:planar-transformed-tail}
 \int_{|x|>R}g_n^2\dd x
 \le C_\be\frac{\Gamma(\rho_n)}R+
 \frac{C_\be}{R^2}.
\end{equation}
Indeed, let
\[
 E_{n,R}=\{x:|x|>R,\ |\nabla U_{\rho_n}^{\log}(x)|>\be|x|\}.
\]
Then~\eqref{eq:planar-weak-field} gives
\(|E_{n,R}|\le C_\be(\Gamma(\rho_n)/R)^2\), and H\"older's inequality
controls \(\int_{E_{n,R}}g_n^2\) by \(C_\be\Gamma(\rho_n)/R\).  On the
complement of \(E_{n,R}\), one has
\(A_{\rho_n}\ge\be^2|x|^2/4\), which gives the second term in
\eqref{eq:planar-transformed-tail}.  Rellich compactness on balls and
\eqref{eq:planar-transformed-tail} now give, along a subsequence,
\(g_n\to g\) strongly in \(L^2(\R^2)\), with \(\|g\|_2=1\) and
\(g\perp\psi_0\).

On a fixed ball, away from
\(\{|\nabla U_{\rho_n}^{\log}|>\delta\}\),
\[
 A_{\rho_n}(x)
 \ge\be^2|x|^2-\be R\delta.
\]
The weak-field bound and the \(L^4\) estimate show that this exceptional set
carries vanishing \(g_n^2\)-mass.  Letting first \(n\to\infty\), then
\(\delta\downarrow0\), and finally \(R\uparrow\infty\), gives
\[
 Q_0(g)\le\liminf_{n\to\infty}Q_{\rho_n}(g_n)=0,
 \qquad
 Q_0(g)=\int_{\R^2}|\nabla g+\be xg|^2\dd x.
\]
Since \(\ker Q_0=\operatorname{span}\{\psi_0\}\) and
\(g\perp\psi_0\), this forces \(g=0\), contrary to \(\|g\|_2=1\).
\end{proof}
The replacement of Proposition \ref{prop:moving-center} is then
\begin{proposition}
\label{prop:planar-moving-center}
After decreasing \(\Gamma_0\), if necessary, there is
\(C_{2,\be}<\infty\) such that, for every \(a\ge0\), every \(\rho\) with
\(\Gamma(\rho)+a\le\Gamma_0\), and every \(y,y'\in\R^2\),
\begin{equation}\label{eq:planar-moving-input}
 \left\|
 \frac{|\,\cdot-y|^a\nu_\rho^{\log}}
      {\nu_\rho^{\log}(|\,\cdot-y|^a)}
 -
 \frac{|\,\cdot-y'|^a\nu_\rho^{\log}}
      {\nu_\rho^{\log}(|\,\cdot-y'|^a)}
 \right\|_{\TV}
 \le C_{2,\be}a.
\end{equation}
\end{proposition}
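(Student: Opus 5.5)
The plan is to follow the proof of Proposition~\ref{prop:moving-center}, replacing the bounded positive kernel \(|x-y|^{2-d}\) by a \emph{centered} logarithm. First I reduce to an add--remove estimate: it suffices to show, for every \(y\in\R^2\),
\[
 \left\|\frac{|\cdot-y|^a\nu_\rho^{\log}}{\nu_\rho^{\log}(|\cdot-y|^a)}-\nu_\rho^{\log}\right\|_{\TV}\le \tfrac12 C_{2,\be}a .
\]
The bound \eqref{eq:planar-moving-input} then follows from the triangle inequality, exactly as \eqref{eq:move-TV} follows from \eqref{eq:add-remove-TV}. Because the normalization is quotiented out, I may replace \(|x-y|^a\) by \(e^{a\phi(x)}\) with
\[
 \phi(x)=\log|x-y|-\nu_\rho^{\log}(\log|\cdot-y|).
\]
Constants cancel, so the tilted measure is unchanged. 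This centering removes the sign problem and the growth in \(|y|\).

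Next I interpolate. For \(t\in[0,1]\) set \(\nu_t\propto e^{ta\phi}\nu_\rho^{\log}\), truncating \(\phi_R=\max\{\min\{\phi,R\},-R\}\) first so that differentiation is justified. As in the \(d\ge3\) proof,
\[
 \frac{\dd}{\dd t}\nu_t(h)=a\operatorname{Cov}_{\nu_t}(h,\phi_R),
\]
and hence \(|\frac{\dd}{\dd t}\nu_t(h)|\le 2a\,\nu_t(|\phi_R-c|)\) for any constant \(c\) and \(\|h\|_\infty\le1\). The truncation is removed at the end by dominated convergence, as before.

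The point is therefore a bound on \(\nu_t(|\phi|)\) that is uniform in \(t\), \(y\) and \(\rho\). The monotonicity trick from \(d\ge3\) is no longer available, since \(\phi\) has no sign. Instead I observe that \(\nu_t=\nu^{\log}_{\rho_t}\) is itself a one-site log measure, where \(\rho_t\) is \(\rho\) with an added pole at \(y\) of charge \(ta\) (the sign convention \(|x-y|^{ta}\) matches \eqref{eq:planar-one-site-law}). Its total charge is at most \(\Gamma(\rho)+a\le\Gamma_0\).

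Applying \eqref{eq:planar-centered-log} to \(\rho_t\) gives
\[
 \nu_t\bigl(|\log|\cdot-y|-\nu_t(\log|\cdot-y|)|\bigr)\le C_\be .
\]
I then choose \(c=\nu_t(\log|\cdot-y|)-\nu_\rho^{\log}(\log|\cdot-y|)\), which is legitimate in the covariance bound since covariance is invariant under shifting \(\phi\) by a constant. This yields \(|\frac{\dd}{\dd t}\nu_t(h)|\le 2C_\be a\). With truncation I would use the untruncated version after a limiting argument, or note that \(|\phi_R-c_R|\le|\phi-c|\) for the appropriate median-type constant. Integrating over \(t\in[0,1]\) gives the add--remove bound.

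The main obstacle is exactly this uniform centered log moment \eqref{eq:planar-centered-log} for the \emph{tilted} measures. It holds for pole configurations including the new pole at \(y\), possibly colliding with existing poles or far away. I rely on it as stated, and decrease \(\Gamma_0\) if needed so that it holds for total charge up to \(\Gamma_0\).

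If that were unavailable, I would prove it directly. I would use \eqref{eq:planar-log-normalization} to bound the density of \(\nu^{\log}_{\rho_t}\) with respect to \(\gamma_\be\) by \(C e^{L_{\rho_t}}\). Hölder's inequality combined with \eqref{eq:planar-gaussian-log-moment} (with \(s<2\)) would then control \(\int|\ell_y|\,e^{L_{\rho_t}}\dd\gamma_\be\) uniformly. Centering absorbs the \(\log\langle y\rangle\) constant.
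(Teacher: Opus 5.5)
Your proposal is correct and is essentially the paper's proof. The interpolation $\nu_t\propto\ee^{ta\phi}\nu_\rho^{\log}$ turns on a pole of charge $ta$ at $y$ (your centering constant plays the role of $\log\langle y\rangle$ in $\ell_y$), so $\frac{\dd}{\dd t}\nu_t(h)=a\operatorname{Cov}_{\nu_t}(h,\log|\cdot-y|)$ is bounded via \eqref{eq:planar-centered-log} applied to $\rho+ta\delta_y$, and integrating in $t$ and using the triangle inequality finishes the argument.

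The one loose end is the truncation: $\nu_{t,R}$ is not itself a one-site log measure, so \eqref{eq:planar-centered-log} does not apply to it as stated. As the paper indicates, it is simplest to differentiate the untruncated family directly, which \eqref{eq:planar-gaussian-log-moment} justifies; then $\nu_t=\nu^{\log}_{\rho+ta\delta_y}$ exactly and no median-type constant is needed.
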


\begin{proof}
Here, we just give the main intermediate steps with the full proof following closely its $d\ge 3$ counterpart:
Let \(M_{2,\be}^{\log}\) denote the uniform constant in
\eqref{eq:planar-centered-log}.  Fix \(y\) and consider
\[
 \dd\nu_t
 =\frac{e^{ta\ell_y}}{\nu_\rho^{\log}(e^{ta\ell_y})}
 \dd\nu_\rho^{\log},
 \qquad 0\le t\le1.
\]
The factor \(\langle y\rangle^{ta}\) cancels in the normalization, so this is
the normalized law obtained by adding a pole of charge \(ta\) at \(y\).
Furthermore,
\(\nu_\rho^{\log}(e^{ta\ell_y})
=m_{\rho+ta\delta_y}/m_\rho\), so
\eqref{eq:planar-log-normalization} controls the normalizing factor uniformly
along the path.  The exponential-log estimate
\eqref{eq:planar-gaussian-log-moment} permits differentiation; one may first
truncate the logarithm and then remove the truncation.  For \(|h|\le1\),
\eqref{eq:planar-centered-log} gives
\[
 \frac{\dd}{\dd t}\nu_t(h)
 =a\operatorname{Cov}_{\nu_t}(h,\log|\cdot-y|),
 \qquad
 \left|\frac{\dd}{\dd t}\nu_t(h)\right|
 \le M_{2,\be}^{\log}a.
\]
After integration in \(t\) and taking the supremum over \(|h|\le1\), we find
that adding a pole at \(y\) changes the law by at most
\(M_{2,\be}^{\log}a\) in the full total-variation norm.  Applying the same
estimate at \(y'\), with \(\nu_\rho^{\log}\) as the intermediate law, proves
the result with \(C_{2,\be}=2M_{2,\be}^{\log}\).
\end{proof}

\subsection{Dobrushin estimate.}
Fix \(i\ne j\) and compare two boundary configurations that agree away from
site \(j\).  In \eqref{eq:planar-conditional}, all factors indexed by
\(k\ne i,j\) are common to the two conditional laws; only the logarithmic
pole of charge \(a_N\) at \(y_j\) is moved.  The common factors define
\[
 \rho=a_N\sum_{k\ne i,j}\delta_{y_k},
 \qquad
 \Gamma(\rho)+a_N=(N-1)a_N=\Theta_N^{(2)}.
\]
Proposition~\ref{prop:planar-moving-center}, together with the convention
\eqref{eq:TV-convention} and the discrete metric in
\eqref{eq:wu-matrix}, therefore gives
\begin{equation}\label{eq:planar-dobrushin}
 c_{ij}\le \frac{C_{2,1}}{2}a_N,
 \qquad
 \rho(C_N)\le\|C_N\|_\infty
 \le \frac{C_{2,1}}{2}(N-1)a_N
 =\frac{C_{2,1}}{2}\Theta_N^{(2)}.
\end{equation}
Choose \(\varepsilon_2\) so that
\[
 0<\varepsilon_2<\min\{\eta_{2,1},\Gamma_0,2/C_{2,1}\}.
\]
Under the hypothesis of Theorem~\ref{thm:planar-log}, the one-site estimate
\eqref{eq:planar-one-site-input} holds uniformly and \(\rho(C_N)<1\) for all
sufficiently large \(N\).  Proposition~\ref{prop:wu-criterion} then yields a
Poincar\'e constant for \(\mu_N\) that is independent of \(N\).
If \(\sup_N\Theta_N^{(2)}\le\bar\Theta<\varepsilon_2\), the rescaled
constant may be taken as
\[
 \frac{K_{2,1}}{1-(C_{2,1}/2)\bar\Theta},
 \qquad
 c_{2,\bar\Theta}
 =\frac{1-(C_{2,1}/2)\bar\Theta}{K_{2,1}}.
\]
Under the limsup assumption, choose
\[
 \limsup_{N\to\infty}\Theta_N^{(2)}
 <\widehat\Theta<\varepsilon_2.
\]
The preceding estimate applies for all sufficiently large \(N\); the finitely
many remaining values are covered by the fixed-particle argument below.

\subsection{Fixed particle number work-around}
Similarly to our argument for $d\ge 3$, we need to cover the finitely many particle numbers where the Dobrushin estimate may not be effective,
the argument of Lemma~\ref{lem:fixed-N-gap} requires only a different cutoff near collisions.  Fix \(N\), \(\be>0\), and \(a>0\), and write
\[
 \mu_{N,a}(\dd z)
 \propto e^{-\be\sum_i|z_i|^2}
 \prod_{i<j}|z_i-z_j|^a\dd z,
 \qquad
 \Omega_N=\{z:z_i\ne z_j\text{ for }i\ne j\}.
\]
Near a pair diagonal an annular cutoff has weighted energy
\begin{equation}\label{eq:planar-collision-cutoff}
 O\!\left(\varepsilon^{-2}
 \int_\varepsilon^{2\varepsilon}r^{a+1}\dd r\right)
 =O(\varepsilon^a)\longrightarrow0.
\end{equation}
For a compactly supported smooth test function, the remaining pair factors
are bounded above on its support.  Repeating this cutoff construction over
the finitely many pair diagonals, and using the product rule and dominated
convergence, shows that \(C_c^\infty(\Omega_N)\) is a core.  Thus removing the
collision set leaves the closed weighted form unchanged.

To prove compactness, set
\[
 W_a(z)=\prod_{i<j}|z_i-z_j|^a,
 \qquad b=\nabla\log W_a,
 \qquad q=a\binom N2,
\]
and let
\(\Psi=Z_{N,a}^{-1/2}W_a^{1/2}e^{-\be|z|^2/2}\), \(\xi=f\Psi\).
The homogeneity identities
\[
 z\mathbin{\cdot}b=q,
 \qquad
 \operatorname{Div}b=0\quad\text{on }\Omega_N
\]
give, first on the collision-free core and then by closure,
\begin{equation}\label{eq:planar-many-transform}
 \int|\nabla f|^2\dd\mu_{N,a}
 =\int_{\Omega_N}\left(
 |\nabla\xi|^2+
 \left[\be^2|z|^2-\be(2N+q)+\frac14|b(z)|^2\right]\xi^2
 \right)\dd z.
\end{equation}
As in \eqref{eq:planar-ground-state-form}, this closed punctured-space
identity contains no additional distributional masses on the diagonals; the
realization is selected by \eqref{eq:planar-collision-cutoff}.  Dropping the
nonnegative term \(\frac14|b|^2\) gives
\[
 \int_{\Omega_N}(|\nabla\xi|^2+\be^2|z|^2\xi^2)\dd z
 \le
 \int|\nabla f|^2\dd\mu_{N,a}
 +\be(2N+q)\|f\|_{L^2(\mu_{N,a})}^2.
\]
Hence the transformed form domain embeds continuously into the full
harmonic-oscillator form domain on \(\R^{2N}\).  The latter embeds compactly
into \(L^2\), and unitarity of the ground-state transform gives compact
resolvent for the weighted form.

Radial cutoffs at infinity, combined with
\eqref{eq:planar-collision-cutoff}, show that the constant function belongs to
the form domain.  If a form-domain function has zero energy, then local
equivalence of the density and Lebesgue measure on compact subsets of
\(\Omega_N\) places it in \(H^1_{\mathrm{loc}}(\Omega_N)\) with zero weak
gradient.  Since \(\Omega_N\) is path connected in dimension two, the
function is constant.  Thus zero is a simple eigenvalue, and compact
resolvent yields a strictly positive next eigenvalue.  When \(a=0\), this is
the usual product-Gaussian gap.

\subsection{Return to the original variables.}
For \(F(y)=f(\sqrt{N/\be_N}\,y)\),
\[
 \int|\nabla_yF|^2\dd\mu_N
 =\frac{N}{\be_N}
 \int|\nabla_xf|^2\dd P_{N,\be_N}^{2}.
\]
The rescaled Poincar\'e inequality gives the lower bound in
\eqref{eq:planar-gap}.  Since the interaction in \eqref{eq:planar-law}
depends only on particle differences, the center-of-mass calculation in the
proof of Theorem~\ref{thm:main} gives the upper bound \(2\al_N/N\).  

\end{appendix}

\end{document}